\newcommand{\A}{\mathcal A}
\newcommand{\acc}{\mathsf{acc}}
\newcommand{\B}{\mathcal B}
\newcommand{\Buchi}{B\"{u}chi }
\newcommand{\C}{\mathcal C}
\newcommand{\D}{\mathcal D}
\newcommand{\F}{\mathcal F}
\newcommand{\FO}{\ensuremath{\mathsf{FO}}}
\newcommand{\func}{\mathsf{func}}
\renewcommand{\H}{\mathcal H}
\newcommand{\Iso}{\mathsf{Iso}}
\newcommand{\N}{\mathbb N}
\newcommand{\K}{\mathcal K}
\newcommand{\Run}{\mathrm{Run}}
\renewcommand{\S}{\mathcal S}
\newcommand{\T}{\mathcal T}
\newcommand{\Ti}{{\mathcal T}^{\mathrm i}}
\newcommand{\unfold}{\mathrm{unfold}}
\newcommand{\REC}{\mathsf{REC}}
\newcommand{\SC}{\mathsf{SC}}
\begin{document}
\title{The Isomorphism Problem for $\omega$-Automatic Trees}
\author{Dietrich Kuske\inst{1} \and Jiamou Liu\inst{2} \and  Markus Lohrey\inst{2,}
\thanks{The second and third author are supported by the DFG  research project GELO.}
\institute{
 Laboratoire Bordelais de Recherche en Informatique
  (LaBRI), CNRS and Universit\'e Bordeaux I, Bordeaux, France
\and
Universit\"at
Leipzig, Institut f\"ur Informatik, Germany
\\
\email{kuske@labri.fr, liujiamou@gmail.com,
lohrey@informatik.uni-leipzig.de}}}

\maketitle
\pagestyle{plain}

\begin{abstract}
  The main result of this paper states that the isomorphism for
  $\omega$-au\-tomatic trees of finite height is at least has hard as
  second-order arithmetic and therefore not analytical. This
  strengthens a recent result by Hjorth, Khoussainov, Montalb\'an, and
  Nies \cite{HjorthKMN08} showing that the isomorphism problem for
  $\omega$-automatic structures is not in~$\Sigma^1_2$. Moreover,
  assuming the continuum hypothesis {\bf CH}, we can show that the isomorphism
  problem for $\omega$-automatic trees of finite height is
  recursively equivalent with second-order arithmetic.
  On the way to our main results, we show lower and upper bounds for the
  isomorphism problem for $\omega$-automatic trees of every finite height:
  (i) It is decidable ($\Pi^0_1$-complete, resp,) for height 1 (2,
  resp.), (ii)~$\Pi^1_1$-hard and in $\Pi^1_2$ for height 3, and
  (iii)~$\Pi^1_{n-3}$- and $\Sigma^1_{n-3}$-hard and in $\Pi^1_{2n-4}$
  (assuming {\bf CH}) for all $n\ge4$. All proofs are elementary and do not rely on
  theorems from set theory.
\end{abstract}

\section{Introduction}

A graph is computable if its domain is a computable set of natural
numbers and the edge relation is computable as well. Hence, one can
compute effectively in the graph.  On the other hand, practically
all other properties are undecidable for computable graphs (e.g.,
reachability, connectedness, and even the existence of isolated
nodes). In particular, the isomorphism problem is highly undecidable
in the sense that it is complete for $\Sigma^1_1$ (the first
existential level of the analytical hierarchy \cite{Odi89}); see e.g.\
\cite{CaKni06,GonKn02} for further investigations of the isomorphism
problem for computable structures. These algorithmic deficiencies have
motivated in computer science the study of more restricted classes of
finitely presented infinite graphs. For instance, pushdown graphs,
equational graphs, and prefix recognizable graphs have a decidable
monadic second-order theory and for the former two the isomorphism
problem is known to be decidable \cite{Cour89} (for prefix
recognizable graphs the status of the isomorphism problem seems to be
open).

Automatic graphs \cite{KhoN95} are in between prefix recognizable and
computable graphs.  In essence, a graph is automatic if the elements
of the universe can be represented as strings from a regular language
and the edge relation can be recognized by a finite state automaton
with several heads that proceed synchronously.  Automatic graphs (and
more general, automatic structures) received increasing interest over
the last years~\cite{BluG04,IsKhRu02,KhoNRS07,KhoRS05,Rub08}.  One of
the main motivations for investigating automatic graphs is that their
first-order theories can be decided uniformly (i.e., the input is an
automatic presentation and a first-order sentence).  On the other
hand, the isomorphism problem for automatic graphs is
$\Sigma^1_1$-complete \cite{KhoNRS07} and hence as complex as for
computable graphs (see \cite{KusL10} for the recursion theoretic
complexity of some more natural properties of automatic graphs).

In our recent paper \cite{KuLiLo10}, we studied the isomorphism
problem for restricted classes of automatic graphs. Among other
results, we proved that (i) the isomorphism problem for automatic
trees of height at most $n \geq 2$ is complete for the level
$\Pi^0_{2n-3}$ of the arithmetical hierarchy and (ii) that the
isomorphism problem for automatic trees of finite height is
recursively equivalent to true arithmetic.  In this paper, we extend
our techniques from \cite{KuLiLo10} to \emph{$\omega$-automatic
  trees}.  The class of $\omega$-automatic structures was introduced
in \cite{Blu99}, it generalizes automatic structures by replacing
ordinary finite automata by B\"uchi automata on $\omega$-words. In
this way, uncountable graphs can be specified. Some recent results on
$\omega$-automatic structures can be found in
\cite{KusL08,HjorthKMN08,KaRuBa08,Kus10}. On the logical side, many of
the positive results for automatic structures carry over to
$\omega$-automatic structures \cite{Blu99,KaRuBa08}. On the other hand, the
isomorphism problem of $\omega$-automatic structures is more
complicated than that of automatic structures (which is
$\Sigma^1_1$-complete). Hjorth et al.\ \cite{HjorthKMN08} constructed two
$\omega$-automatic structures for which the existence of an
isomorphism depends on the axioms
of set theory. Using Schoenfield's absoluteness theorem, they infer
that isomorphism of $\omega$-automatic structures does not belong
to~$\Sigma^1_2$. The extension of our elementary techniques from
\cite{KuLiLo10} to $\omega$-automatic trees allows us to show directly
(without a ``detour'' through set theory) that the isomorphism problem
for $\omega$-automatic trees of finite height is not analytical (i.e.,
does not belong to any of the levels $\Sigma^1_n$). For this, we prove
that the isomorphism problem for $\omega$-automatic trees of height $n
\geq 4$ is hard for both levels $\Sigma^1_{n-3}$ and $\Pi^1_{n-3}$ of
the analytical hierarchy (our proof is uniform in $n$).  A more
precise analysis moreover reveals at which height the complexity jump
for $\omega$-automatic trees occurs: For automatic as well as for
$\omega$-automatic trees of height 2, the isomorphism problem is
$\Pi^0_1$-complete and hence arithmetical.
But the isomorphism problem for $\omega$-automatic
trees of height 3 is hard for $\Pi^1_1$ (and therefore outside of the
arithmetical hierarchy) while the isomorphism problem for automatic
trees of height 3 is $\Pi^0_3$-complete~\cite{KuLiLo10}.
Our lower bounds for $\omega$-automatic trees even hold
for the smaller class of injectively $\omega$-automatic trees.

We prove our results by reductions from monadic second-order
(fragments of) number theory. The first step in the proof is a
normal form for analytical predicates. The basic idea of
the reduction then is that a subset $X \subseteq \N$ can be encoded by
an $\omega$-word~$w_X$ over $\{0,1\}$, where the $i$-th symbol is $1$
if and only if $i \in X$. The combination of this basic observation
with our techniques from \cite{KuLiLo10} allows us to encode monadic
second-order formulas over $(\N, +, \times)$ by $\omega$-automatic
trees of finite height.  This yields the lower bounds mentioned above.
We also give
an upper bound for the isomorphism problem: for $\omega$-automatic
trees of height $n$, the isomorphism problem belongs to $\Pi^1_{2n-4}$.
%Hence also the upper bound grows linearly, although faster than
%the lower bound, with the height
%of the trees.
While the lower bound holds in the usual system {\bf ZFC} of
set theory, we can prove the upper bound only assuming in addition the
continuum hypothesis.
The precise recursion theoretic complexity of the isomorphism
problem for  $\omega$-automatic trees remains open, it might depend
on the underlying axioms for set theory.

\paragraph*{\bf Related work}
Results on isomorphism problems for various subclasses of automatic
structures can be found in \cite{KhoNRS07,KhoRS05,KuLiLo10,Rub04}.
Some completeness results for low levels of the analytical hierarchy
for decision problems on infinitary rational relations were shown
in \cite{Fin09}.

\section{Preliminaries}

Let $\N_+ = \{1,2,3,\ldots\}$.
With $\overline{x}$ we denote a tuple $(x_1,\ldots,x_m)$ of variables, whose
length $m$ does not matter.

\subsection{The analytical hierarchy} \label{sec:analytical.hierarchy}

In this paper we follow the definitions of the arithmetical and
analytical hierarchy from~\cite{Odi89}.
In order to avoid some  technical complications, it is useful to
exclude $0$ in the following, i.e., to consider subsets of $\N_+$.
In the following, $f_i$ ranges over unary functions on $\N_+$, $X_i$
over subsets of $\N_+$, and $u,x,y,z,x_i,\ldots$ over elements of $\N_+$.
The class
$\Sigma^0_n\subseteq 2^{\N_+}$ is the collection of all sets
$A\subseteq \N_+$ of the form
\begin{equation*}
    A = \{x\in \N_+ \mid (\N, +, \times) \models
   \exists y_1 \ \forall y_2 \cdots Q y_n: \varphi(x,y_1,\ldots,y_n) \},
\end{equation*}
where $Q = \forall$ (resp.\ $Q=\exists$) if $n$ is even (resp.\ odd)
and $\varphi$ is a quantifier-free formula over the signature
containing $+$ and $\times$.  The class $\Pi^0_n$ is the class of all
complements of $\Sigma^0_n$ sets. The classes $\Sigma^0_n, \Pi^0_n$
($n\geq 1$) make up the \emph{arithmetical hierarchy}.

The analytical hierarchy extends the arithmetical hierarchy and is
defined analogously using function quantifiers:  The class $\Sigma^1_n \subseteq
2^{\N_+}$ is the collection of all sets $A \subseteq \N_+$ of the form
\begin{equation*}
A = \{ x \in \N_+ \mid (\N, +, \times)
\models  \exists f_1 \ \forall f_2 \cdots Q f_n : \varphi(x,f_1,\ldots,f_n)\},
\end{equation*}
where $Q = \forall$ (resp.\ $Q = \exists$) if $n$ is even (resp.\ odd)
and $\varphi$ is a first-order formula over the signature containing
$+$, $\times$, and the functions $f_1, \ldots, f_n$.  The class
$\Pi^1_n$ is the class of all complements of $\Sigma^1_n$ sets.  The
classes $\Sigma^1_n, \Pi^1_n$ ($n \geq 1$) make up the
\emph{analytical hierarchy},
see Figure~\ref{fig:hierarchy}
for an inclusion diagram.
 The class of {\em analytical
  sets}\footnote{Here the notion of {\em analytical sets} is defined
  for sets of natural numbers and is not to be confused with the {\em
    analytic sets} studied in descriptive set theory \cite{Kech95}.} is exactly
$\bigcup_{n \geq 1} \Sigma^1_n$. 

\begin{figure}[t]
\begin{center}
\setlength{\unitlength}{.7mm}
\begin{picture}(130,25)(-10,0)
    \gasset{Nframe=n,AHnb=0,ExtNL=n}
    \gasset{Nfill=n,Nadjust=wh,Nadjustdist=1}    
    \node(r)(-5,10){$\bigcup_{n\geq 1} \Sigma^0_n$}
    \node(s1)(15,0){$\Sigma_1^1$}
    \node(p1)(15,20){$\Pi_1^1$}
    \node(s2)(60,0){$\Sigma_2^1$}
    \node(p2)(60,20){$\Pi_2^1$}
    \node(s3)(105,0){$\Sigma_3^1$}
    \node(p3)(105,20){$\Pi_3^1$}
    \drawedge(r,s1){}
    \drawedge(r,p1){}
    \drawedge(s1,p2){}
    \drawedge(s1,s2){}
    \drawedge(p1,p2){}
    \drawedge(p1,s2){}
    \drawedge(s2,p3){}
    \drawedge(s2,s3){}
    \drawedge(p2,p3){}
    \drawedge(p2,s3){}
    \put(110,10){$\ldots$}
\end{picture}
\end{center}
\caption{The analytical hierarchy}\label{fig:hierarchy}
\end{figure}
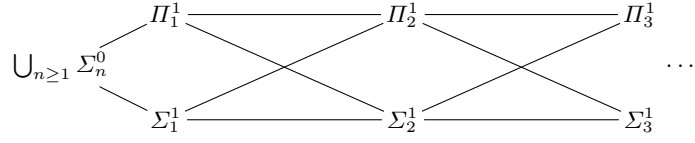

As usual in computability theory, a G\"odel numbering of all finite
objects of interest allows to quantify over, say, finite
automata as well. We will always assume such a numbering without mentioning it
explicitly.

\subsection{B\"{u}chi automata}

For details on B\"{u}chi automata, see
\cite{GrThWi02,PerP04,Tho97handbook}.  Let $\Gamma$ be a finite
alphabet. With $\Gamma^*$ we denote the set of all finite words over
the alphabet $\Gamma$. The set of all nonempty finite words is
$\Gamma^+$. An $\omega$-word over $\Gamma$ is an infinite sequence
$w=a_1a_2a_3\cdots$ with $a_i\in \Gamma$.  We set $w[i]=a_i$ for $i\in
\N_+$. The set of all $\omega$-words over $\Gamma$ is denoted by
$\Gamma^\omega$.
% Similarly, for a set $V\subseteq \Gamma^*$ of finite words, let
% $V^\omega \subseteq \Gamma^\omega$ be the set of all $\omega$-words
% of the form $v_0v_1v_2\cdots$ with $v_i\in V$.

A (nondeterministic) B\"{u}chi automaton is a tuple $M=(Q, \Gamma,
\Delta, I, F)$, where $Q$ is a finite set of states, $I,F\subseteq Q$
are resp. the sets of initial and final states, and $\Delta\subseteq Q\times
\Gamma\times Q$ is the transition relation. If $\Gamma=\Sigma^n$ for
some alphabet $\Sigma$, then we refer to $M$ as an {\em
  $n$-dimensional B\"{u}chi automaton over $\Sigma$}. A {\em run} of
$M$ on an $\omega$-word $w=a_1a_2a_3\cdots$ is an $\omega$-word
$r=(q_1,a_1,q_2)(q_2,a_2,q_3)(q_3,a_3,q_4)\cdots\in\Delta^\omega$ such
that $q_1\in I$. The run $r$ is {\em accepting} if there exists a
final state from $F$ that occurs infinitely often in $r$. The language
$L(M)\subseteq \Gamma^\omega$ defined by $M$ is the set of all
$\omega$-words for which there exists an accepting run. An
$\omega$-language $L\subseteq \Gamma^\omega$ is {\em regular} if there
exists a B\"{u}chi automaton~$M$ with $L(M)=L$.  The class of all
regular $\omega$-languages is effectively closed under Boolean
operations and projections.

For $\omega$-words $w_1,\ldots, w_n\in \Gamma^\omega$, the
{\em convolution} $w_1\otimes w_2\otimes \cdots \otimes w_n\in (\Gamma^n)^\omega$ is defined by
\[
    w_1\otimes w_2 \otimes \cdots \otimes w_n =
(w_1[1],\ldots, w_n[1])(w_1[2],\ldots,w_n[2])(w_1[3],\ldots, w_n[3])\cdots .
\]
For $\overline{w}=(w_1,\ldots, w_n)$, we write $\otimes(\overline{w})$ for $w_1\otimes \cdots \otimes w_n$.

An $n$-ary relation $R\subseteq (\Gamma^\omega)^n$ is called
\emph{$\omega$-automatic} if the $\omega$-language $\otimes
R=\{\otimes(\overline{w}) \mid \overline{w} \in R\}$ is regular, i.e.,
it is accepted by some $n$-dimensional B\"{u}chi automaton. We denote
with $R(M) \subseteq (\Gamma^\omega)^n$ the relation defined by an
$n$-dimensional B\"{u}chi-automaton over the alphabet $\Gamma$.

To also define the convolution of finite words (and of finite words
with infinite words), we identify a finite word $u\in \Gamma^*$
with the $\omega$-word $u\diamond^\omega$, where $\diamond$ is a new
symbol. Then, for $u,v\in \Gamma^*, w\in \Gamma^\omega$, we write
$u\otimes v$ for the $\omega$-word $u\diamond^\omega \otimes
v\diamond^\omega$ and $u\otimes w$ (resp.\ $w\otimes u$) for
$u\diamond^\omega \otimes w$ (resp.\ $w\otimes u\diamond^\omega$).

In the following we describe some simple operations on \Buchi automata
that are used in this paper.
\begin{itemize}
\item Given two \Buchi automata $M_0=(Q_0,\Gamma,I_0,\Delta_0,F_0)$
  and $M_1=(Q_1,\Gamma,I_1,\Delta_1,F_1)$, we use $M_0\uplus M_1$ to
  denote the automaton obtained by taking the disjoint union of $M_0$
  and $M_1$. Note that for any word $u\in \Gamma^\omega$, the number
  of accepting runs of $M_0\uplus M_1$ on $u$ equals the sum of the
  numbers of accepting runs of $M_0$ and $M_1$ on $u$.
\item Let, again, $M_i=(Q_i,\Gamma,I_i,\Delta_i,F_i)$ for $i \in \{0,1\}$ be
  two B\"uchi automata. Then the intersection of their languages is
  accepted by the B\"uchi automaton $$M=(Q_0\times
  Q_1\times\{0,1\},\Gamma,I_0\times I_1\times\{0\},\Delta,F_0\times
  Q_1\times\{0\}),$$ where $((p_0,p_1,m),a,(q_0,q_1,n))\in\Delta$ if and
  only if
  \begin{itemize}
  \item $(p_0,a,q_0)\in\Delta_1$ and $(p_1,a,q_1)\in\Delta_1$, and
  \item  if $p_m \not\in F_m$ then $n=m$ and 
  if $p_m \in F_m$ then $n=1-m$.
  \end{itemize}
  Hence the runs of $M$ on the $\omega$-word $u$ consist of a run of
  $M_0$ and of $M_1$ on $u$. The ``flag'' $m\in\{0,1\}$ in
  $(p_0,p_1,m)$ signals that the automaton waits for an accepting
  state of $M_m$. As soon as such an accepting state is seen, the flag
  toggles its value. Hence accepting runs of $M$ correspond to pairs
  of accepting runs of $M_0$ and of $M_1$. Therefore, the number of
  accepting runs of $M$ on $u$ equals the product of the numbers of
  accepting runs of $M_0$ and of $M_1$ on $u$. This construction is
  known as the flag or Choueka construction
  (cf.~\cite{Cho74,Tho90,PerP04}).
 \item Let $\Sigma$ be an alphabet and $M=(Q,\Gamma,I,\Delta,F)$ be a
   B\"uchi automaton.  We use $\Sigma^\omega \otimes M$ to denote the
   automaton obtained from $M$ by expanding the alphabet to
   $\Sigma\times \Gamma$:
  \[
      \Sigma^\omega\otimes M = (Q, \Sigma\times \Gamma, I, \Delta', F),
  \]
  where $\Delta' = \{(p, (\sigma,a), q)\mid (p,a,q)\in \Delta,
  \sigma\in \Sigma\}$.  Note that $L(\Sigma^\omega\otimes M) =
  \Sigma^\omega \otimes L(\A)$.  
\end{itemize}

\subsection{$\omega$-automatic structures}\label{sec:prelim.structure}

A {\em signature} is a finite set $\tau$ of relational symbols
together with an arity $n_S\in\N_+$ for every relational symbol $S\in
\tau$. A {\em $\tau$-structure} is a tuple $\A=(A, (S^\A)_{S\in
  \tau})$, where $A$ is a set (the {\em universe} of $\A$) and
$S^{\A}\subseteq A^{n_S}$. When the context is clear, we denote $S^\A$
with $S$, and we write $a\in \A$ for $a\in A$.  Let $E\subseteq A^2$
be an equivalence relation on~$A$. Then $E$ is a \emph{congruence}
on $\A$ if
$(u_1,v_1),\ldots,(u_{n_S},v_{n_S}) \in E$ and $(u_1,\ldots, u_{n_S})
\in S$ imply $(v_1,\ldots,v_{n_S})\in S$ for all $S\in\tau$. Then
the {\em quotient structure} $\A/E$ can be defined:
\begin{itemize}
\item The universe of $\A/E$ is the set of all $E$-equivalence classes
  $[u]$ for $u\in A$.
\item The interpretation of $S\in \tau$ is the relation
  $\{([u_1],\ldots, [u_{n_S}]) \mid (u_1,\ldots,u_{n_S})\in S\}$.
\end{itemize}

\begin{definition}\label{Def-omega-automatic-presentation}
  An {\em $\omega$-automatic presentation} over the signature $\tau$
  is a tuple
  \[
     P=(\Gamma, M, M_\equiv, (M_S)_{S\in \tau})
  \]
  with the following properties:
  \begin{itemize}
  \item $\Gamma$ is a finite alphabet
  \item $M$ is a B\"uchi automaton over the alphabet $\Gamma$.
  \item For every $S\in \tau$, $M_S$ is an $n_S$-dimensional \Buchi
    automaton over the alphabet $\Gamma$.
  \item $M_{\equiv}$ is a 2-dimensional B\"uchi automaton over the
    alphabet $\Gamma$ such that $R(M_\equiv)$ is a
    congruence relation on $(L(M),(R(M_S))_{S\in\tau})$.
  \end{itemize}
  The {\em $\tau$-structure defined} by the $\omega$-automatic
  presentation $P$ is the quotient structure
  \[
    \S(P)=(L(M), (R(M_S))_{S\in\tau})/R(M_\equiv) \,.
  \]
\end{definition}
If $R(M_\equiv)$ is the identity relation on $\Gamma^\omega$, then $P$
is called {\em injective}. A structure $\A$ is {\em (injectively)
  $\omega$-automatic} if there is an (injectively) $\omega$-automatic
presentation $P$ with $\A \cong \S(P)$.  In ~\cite{HjorthKMN08} it
was shown that there exist
$\omega$-automatic structures that are not injectively
$\omega$-automatic.  We simplify our statements by
saying ``given/compute an (injectively) $\omega$-automatic structure
$\A$'' for ``given/compute an (injectively) $\omega$-automatic
presentation $P$ of a structure~$\S(P)\cong\A$''.  {\em Automatic
  structures} \cite{KhoN95} are defined analogously to
$\omega$-automatic structures, but instead of B\"uchi automata
ordinary finite automata over finite words are used.  For this, one
has to pad shorter strings with the padding symbol $\diamond$ when
defining the convolution of finite strings.  More details on
$\omega$-automatic structures can be found in
\cite{BluG04,HjorthKMN08,KaRuBa08}. In particular, a countable
structure is $\omega$-automatic if and only if it is
automatic~\cite{KaRuBa08}.

Let $\FO[\exists^{\aleph_0},\exists^{2^{\aleph_0}}]$ be first-order
logic extended by the quantifiers $\exists^{\kappa} x \ldots$ ($\kappa
\in \{\aleph_0,2^{\aleph_0}\}$) saying that there exist exactly
$\kappa$ many $x$ satisfying $\ldots$.  The following theorem lays out
the main motivation for investigating $\omega$-automatic structures.

\begin{theorem}[\cite{Blu99,KaRuBa08}]\label{thm:extFOaut}
  From an $\omega$-automatic presentation
  $$P=(\Gamma,M,M_\equiv,(M_S)_{S\in\tau})$$ and a formula
  $\varphi(\overline{x}) \in
  \FO[\exists^{\aleph_0},\exists^{2^{\aleph_0}}]$ in the signature
  $\tau$ with $n$ free variables, one can compute a B\"uchi automaton
  for the relation
  \[
     \{ (a_1,\ldots, a_n) \in L(M)^n \mid \S(P)\models\varphi([a_1],[a_2],\dots,[a_n])\}\,.
  \]
  In particular, the $\FO[\exists^{\aleph_0},\exists^{2^{\aleph_0}}]$
  theory of any $\omega$-automatic structure $\A$ is (uniformly)
  decidable.
\end{theorem}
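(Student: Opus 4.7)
The plan is to prove the theorem by structural induction on the formula $\varphi$, maintaining the invariant that for every $\psi(\overline{x}) \in \FO[\exists^{\aleph_0},\exists^{2^{\aleph_0}}]$ with $n$ free variables one can effectively construct an $n$-dimensional B\"uchi automaton $M_\psi$ over $\Gamma$ recognising
$\{\otimes(\overline{a}) \mid \overline{a} \in L(M)^n,\ \S(P) \models \psi([a_1],\ldots,[a_n])\}$.
The theorem itself is then the case of arbitrary $\varphi$, and uniform decidability of the theory follows by taking $n=0$ and using decidability of emptiness for B\"uchi automata.

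For atomic formulas $S(\overline{x})$ I would intersect the given $M_S$ with $n_S$ copies of $L(M)$ (using the $\Sigma^\omega\otimes M$ construction from the preliminaries) and then ``close up'' the result under $R(M_\equiv)$ in each coordinate; equality is handled by $M_\equiv$ in the same way. For Boolean connectives I use that regular $\omega$-languages are effectively closed under union, intersection and complement (after padding unused coordinates with $\Gamma^\omega$ so that the automata share a common alphabet). For first-order $\exists x\,\psi(x,\overline{y})$ I intersect $M_\psi$ with $L(M)\otimes (\Gamma^\omega)^n$ to restrict the witness $x$ to a representative of an element of $\S(P)$, then project away the first coordinate; this is correct because $R(M_\equiv)$ is a congruence, so satisfaction depends only on the class of the witness, not on the representative.

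The main obstacle, and the only step that really exploits the move from finite to infinite words, is the treatment of the counting quantifiers $\exists^{\aleph_0}$ and $\exists^{2^{\aleph_0}}$. For these I would invoke the cardinality dichotomy for $\omega$-regular sets modulo $\omega$-regular congruences: for any B\"uchi-recognisable $R\subseteq\Gamma^\omega$ and B\"uchi-recognisable equivalence $E$ on $R$, the number of $E$-classes is either finite, $\aleph_0$, or $2^{\aleph_0}$, and each alternative is characterised by a structural property of the product automaton (for instance, the existence of two accepting continuations from a reachable pair of states that diverge into $E$-inequivalent $\omega$-words, iterated along a loop). Starting from $M_\psi$ together with $M$ and $M_\equiv$, I would build a product automaton whose reachable strongly connected components encode the possible ``branching behaviour'' of $E$-classes of witnesses, parametric in $\overline{y}$; this property can in turn be recognised by a B\"uchi automaton over the $\overline{y}$-coordinates. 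Thus ``uncountably many $x$'' becomes a regular condition on $\overline{y}$, and ``infinitely many $x$'' is the disjunction of ``uncountably many'' and ``finitely-but-at-least-$k$ many'' for each $k$, the latter being first-order and already handled.

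The hardest point is the dichotomy itself: showing that once the class count exceeds $\aleph_0$ it must be $2^{\aleph_0}$, and that this is witnessed by a pumping pattern visible in the product automaton. I would prove it by a perfect-set construction, using a K\"onig-style argument on the tree of partial accepting runs to extract a full binary tree of pairwise $E$-inequivalent accepting runs whenever no finite bound on the class count holds; conversely, if no such pumping pattern exists, a bounded bisimulation collapses the set of classes to a countable one. With this lemma in hand the induction goes through, all constructions are effective, and the theorem follows.
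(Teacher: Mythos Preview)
The paper does not prove this theorem; it is quoted from \cite{Blu99,KaRuBa08} without proof, so there is no ``paper's own proof'' to compare against. Your outline is broadly the standard one from those references: structural induction on formulas, closure of B\"uchi-recognisable relations under Boolean operations and projection for the first-order part, and a separate automata-theoretic argument for the two counting quantifiers.

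There is, however, a concrete gap in your treatment of $\exists^{\aleph_0}$. You write that ``\,`infinitely many $x$' is the disjunction of `uncountably many' and `finitely-but-at-least-$k$ many' for each $k$\,''. This is not correct: an infinite disjunction over $k$ of ``finite and at least $k$'' collapses to ``finite and nonempty'', so your disjunction yields ``uncountable or (finite and nonempty)'', which is not ``infinite''. What you actually need is that the set of parameter tuples $\overline y$ for which $\{\,x : \psi(x,\overline y)\,\}/R(M_\equiv)$ is \emph{infinite} is itself $\omega$-regular; this has to be argued directly (e.g.\ via a loop in the product automaton producing unboundedly many pairwise $R(M_\equiv)$-inequivalent witnesses), not reduced to the first-order fragment. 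Once you have regular conditions for ``infinite'' and for ``$2^{\aleph_0}$'', the trichotomy gives $\exists^{\aleph_0}$ as ``infinite and not $2^{\aleph_0}$''.

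A second point deserves more care than your sketch gives it: you assert that the branching/perfect-set pattern witnessing $2^{\aleph_0}$ many classes ``can in turn be recognised by a B\"uchi automaton over the $\overline y$-coordinates''. The perfect-set argument you describe establishes the dichotomy for a \emph{fixed} parameter value; the nontrivial content of \cite{KaRuBa08} is precisely that this condition is \emph{uniformly} $\omega$-regular in $\overline y$. Your proposal would need to spell out why the existence of the relevant branching pattern in the product automaton depends on $\overline y$ in an $\omega$-regular way.
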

\begin{definition}
  Let $\K$ be a class of $\omega$-automatic presentations. The
  \emph{isomorphism problem} $\Iso(\K)$ is the set of pairs
  $(P_1,P_2)\in\K^2$ of $\omega$-automatic presentations from $\K$
  with $\S(P_1)\cong\S(P_2)$.
\end{definition}
If $\S_1$ and $\S_2$ are two structures over the same signature, we
write $\S_1 \uplus \S_2$ for the disjoint union of the two
structures. We use $\S^{\kappa}$ to denote the disjoint union of
$\kappa$ many copies of the structure $\S$, where $\kappa$ is any
cardinal.

The disjoint union as well as the countable or uncountable power of an
automatic structure are effectively automatic, again. In this paper,
we will only need this property (in a more explicite form) for
injectively $\omega$-automatic structures.

\begin{lemma}\label{lem:multiply}
  Let $P_i=(\Gamma,M^i,M_\equiv^i,(M_S^i)_{S\in\tau})$ be injective
  $\omega$-automatic presentations of structures~$\S_i$ for $i\in\{1,2\}$.
  One can effectively construct injectively $\omega$-automatic copies
  of $\S_1\uplus\S_2$, $\S_1^{\aleph_0}$, and $\S_1^{2^{\aleph_0}}$
  such that
  \begin{itemize}
  \item The universe of the injectively $\omega$-automatic copy $\S$
    of $\S_1\uplus\S_2$ equals $L(M^1)\cup L(M^2)$ and the relations
    are given by $S^{\S}=R(M^1_S)\cup R(M^2_S)$ provided $L(M^1)$ and
    $L(M^2)$ are disjoint.
  \item The universe of the injectively $\omega$-automatic copy $\S$
    of $\S_1^{\aleph_0}$ is $\$^*\otimes L(M^1)$ where $\$$ is a
    fresh symbol. For $i\in\N$, the restriction of $\S$ to
    $\{\$^i\}\otimes L(M^1)$ forms a copy of $\S_1$.
  \item The universe of the injectively $\omega$-automatic copy $\S$
    of $\S_1^{2^{\aleph_0}}$ is $\{\$_1,\$_2\}^\omega\otimes L(M^1)$
    where $\$_1$ and $\$_2$ are fresh symbols. For
    $w\in\{\$_1,\$_2\}^\omega$, the restriction of $\S$ to $\{w\}\otimes
    L(M^1)$ forms a copy of $\S_1$.
  \end{itemize}
\end{lemma}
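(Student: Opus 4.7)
The plan is to exhibit three explicit B\"uchi-automaton constructions, one per bullet, each producing an \emph{injective} presentation --- that is, one whose equivalence automaton accepts the identity $\{(v,v)\mid v\in\Gamma^\omega\}$. All three rely on the preliminary operations on B\"uchi automata introduced earlier: disjoint union, the product via the flag construction, and alphabet expansion $\Sigma^\omega\otimes M$.

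For $\S_1\uplus\S_2$ in the disjoint case, take $M:=M^1\uplus M^2$ and $M_S:=M_S^1\uplus M_S^2$ for each $S\in\tau$. Injectivity of the inputs guarantees that $M_S^i$ accepts only tuples in $L(M^i)^{n_S}$, so the disjoint unions compute exactly $L(M^1)\cup L(M^2)$ and $R(M_S^1)\cup R(M_S^2)$, and the identity automaton on $\Gamma^\omega$ serves as~$M_\equiv$. When $L(M^1)$ and $L(M^2)$ are not disjoint, first prepend fresh distinct symbols $\$_1,\$_2$ respectively (a routine relabelling of all automata of the respective presentations) to make them so, and then apply the previous construction.

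For $\S_1^{\aleph_0}$, pick a fresh symbol $\$$. An element of the universe $\$^*\otimes L(M^1)$ is the convolution $\$^i\diamond^\omega\otimes u$ for some $i\in\N$ and $u\in L(M^1)$, living in the alphabet $\{\$,\diamond\}\times\Gamma$. The universe automaton verifies position by position that the first coordinate lies in $\$^*\diamond^\omega$ while running $M^1$ on the second coordinate. For each $n_S$-ary $S\in\tau$, the automaton $M_S$ reads an $n_S$-fold convolution over $(\{\$,\diamond\}\times\Gamma)^{n_S}$ and performs two parallel tasks: first, a finite-state check that at every position the $n_S$ first coordinates all agree and that they switch from $\$$ to $\diamond$ simultaneously --- thus a common prefix $\$^i$ is enforced across all components and the tuple is confined to a single copy --- and second, a simulation of $M_S^1$ on the $n_S$-tuple of second coordinates; the flag construction packages this product cleanly.

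The construction for $\S_1^{2^{\aleph_0}}$ is analogous but uses two fresh symbols $\$_1,\$_2$ and replaces $\$^*$ by $\{\$_1,\$_2\}^\omega$; now the prefix is itself an $\omega$-word, so no $\diamond$-padding is needed and elements live in $(\{\$_1,\$_2\}\times\Gamma)^\omega$. The universe automaton is exactly $\{\$_1,\$_2\}^\omega\otimes M^1$ from the preliminaries, and $M_S$ checks position-wise equality of the $n_S$ first coordinates (enforcing a common $w\in\{\$_1,\$_2\}^\omega$) and simulates $M_S^1$ on the second coordinates. The only point that requires genuine care, shared by both power constructions, is this synchronous-equality check across the $n_S$ first coordinates of the $n_S$-fold convolution; however it is a finite-state condition at each position, so no real obstacle arises. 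All constructions are effective, and the resulting presentations are injective by design.
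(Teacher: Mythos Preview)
The paper states this lemma without proof, treating the constructions as routine; your write-up supplies exactly the standard argument the authors had in mind, and the later sections of the paper (the dag constructions in Sections~\ref{sec:h4.automata} and~\ref{sec:height3}) use precisely the $\$^*\otimes\cdots$ and $\{\$_1,\$_2\}^\omega\otimes\cdots$ encodings you describe. So the approach is correct and matches the intended one.

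One small inaccuracy: you write that ``injectivity of the inputs guarantees that $M_S^i$ accepts only tuples in $L(M^i)^{n_S}$''. Injectivity only says that $R(M_\equiv^i)$ is the identity; nothing in Definition~\ref{Def-omega-automatic-presentation} forces $R(M_S^i)\subseteq L(M^i)^{n_S}$. If $R(M_S^1)$ happens to contain a tuple with a component in $L(M^2)$, the union $R(M_S^1)\cup R(M_S^2)$ restricted to $(L(M^1)\cup L(M^2))^{n_S}$ could pick up spurious ``mixed'' tuples. The fix is trivial --- first replace each $M_S^i$ by an automaton for $R(M_S^i)\cap L(M^i)^{n_S}$, which is effective by closure under intersection --- but the justification you gave for skipping it is not quite right.
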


\subsection{Trees}

A {\em forest} is a partial order $F=(V,\le)$ such that for every $x
\in V$, the set $\{y\mid y\leq x\}$ of ancestors of $x$ is finite and
linearly ordered by~$\le$.  The {\em level} of a node $x\in V$ is
$|\{y \mid y < x \}| \in \N$. The {\em height} of $F$ is the supremum
of the levels of all nodes in $V$; it may be infinite.
Note that a forest of infinite height can be well-founded, i.e., all its paths are finite.
In this paper we only deal with forests of {\em finite height}. 
For all $u\in V$, $F(u)$
denotes the restriction of $F$ to the set $\{v\in V\mid u\le v\}$ of
successors of $u$. We will speak of the \emph{subtree rooted at $u$}.
A \emph{tree} is a forest that has a minimal element, called the
\emph{root}. For a forest $F$ and $r$ not belonging to the domain of
$F$, we denote with $r\circ F$ the tree that results from adding $r$
to $F$ as a new root. The \emph{edge relation}~$E$ of the forest $F$
is the set of pairs $(u,v)\in V^2$ such that $u$ is the largest
element in $\{ x \mid x < v \}$. 
Note that  a forest $F = (V,\le)$ of finite height is (injectively) $\omega$-automatic
if and only if the graph $(V,E)$ (where $E$ is the edge relation of $E$)
is (injectively) $\omega$-automatic, since each of these structures is first-order
interpretable in the other structure. This does not hold for trees of infinite 
height. For any node $u\in V$, we use $E(u)$
to denote the set of children (or immediate successors) of $u$.

We use $\T_n$ (resp.\ $\Ti_n$) to denote the class of (injectively)
$\omega$-automatic presentations of trees of height at most~$n$. Note
that it is decidable whether a given $\omega$-automatic presentation
$P$ belongs to $\T_n$ and $\Ti_n$, resp., since the class of trees of
height at most $n$ can be axiomatized in first-order logic.

\section{$\omega$-automatic trees of height 1 and 2}\label{sec:h2}

For $\omega$-automatic trees of height 2 we need the following result:
\begin{theorem} [\cite{KaRuBa08}]\label{thm:counting}
  Let $\A$ be an $\omega$-automatic structure and let
  $\varphi(x_1, \ldots,x_n,y)$ be a formula of
  $\FO[\exists^{\aleph_0},\exists^{2^{\aleph_0}}]$.  Then, for all
  $a_1,\ldots,a_n \in \A$, the cardinality of the set $\{ b \in \A \mid
  \A \models \varphi(a_1,\ldots,a_n,b) \}$ belongs to $\N\cup
  \{\aleph_0, 2^{\aleph_0}\}$.
\end{theorem}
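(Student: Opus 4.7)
My plan is to reduce the cardinality question to a dichotomy for Borel equivalence relations on Polish spaces.

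First, using Theorem \ref{thm:extFOaut}, I would eliminate the logic: fix representatives $w_1,\ldots,w_n\in L(M)$ of $a_1,\ldots,a_n$ and compute a B\"uchi automaton $M_\varphi$ with
\[
   L(M_\varphi) = \{b\in L(M)\mid \S(P)\models\varphi([w_1],\ldots,[w_n],[b])\}\,.
\]
The cardinality in question is then the number of $R(M_\equiv)$-equivalence classes that meet $L(M_\varphi)$, i.e.\ $|L(M_\varphi)/{\approx}|$, where $\approx$ is the restriction of $R(M_\equiv)$ to $L(M_\varphi)$.

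Second, I would observe that every regular $\omega$-language is a Borel subset of the Polish space $\Gamma^\omega$, so $L(M_\varphi)$ is a Borel subspace and $\approx$ is a Borel equivalence relation on it. Silver's theorem on coanalytic (and in particular Borel) equivalence relations then yields the desired dichotomy: either the quotient has at most $\aleph_0$ classes, or it has exactly $2^{\aleph_0}$ classes. Splitting the countable case into finite and countably infinite gives the claim.

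An elementary alternative, likely closer in spirit to the proof in \cite{KaRuBa08}, is to construct a Cantor scheme of pairwise inequivalent $\omega$-words inside $L(M_\varphi)$ whenever $|L(M_\varphi)/{\approx}|$ is uncountable. Working in the product automaton $M_\varphi\times M_\varphi\times M_\equiv$, one would search for a ``splitting configuration'' at which two different finite extensions both keep the $M_\varphi$-copies on accepting trajectories while driving $M_\equiv$ into states from which the complement of $R(M_\equiv)$ remains reachable. Iterating this splitting produces $2^{\aleph_0}$ pairwise inequivalent witnesses.

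The main obstacle, in either approach, is exactly the step where uncountability of the quotient is turned into the existence of a splitting configuration in the joint finite state space -- an automaton-theoretic shadow of the perfect set property. The Silver-theorem route outsources this to descriptive set theory and is therefore short, but relies on machinery external to automata theory; the elementary route must do the combinatorial bookkeeping on $M_\varphi\times M_\varphi\times M_\equiv$ by hand, and I expect this to be the technically involved part of the argument.
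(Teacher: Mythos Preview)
The paper does not prove Theorem~\ref{thm:counting}; it is quoted without proof from \cite{KaRuBa08}. So there is no ``paper's own proof'' to compare your proposal against.

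That said, your first route is correct: after Theorem~\ref{thm:extFOaut} the problem reduces to counting the $R(M_\equiv)$-classes meeting a regular $\omega$-language, both objects are Borel (regular $\omega$-languages are Boolean combinations of $G_\delta$ sets), and Silver's dichotomy for coanalytic equivalence relations gives exactly the trichotomy finite\,/\,$\aleph_0$\,/\,$2^{\aleph_0}$. Your second, automaton-combinatorial route is indeed closer to what \cite{KaRuBa08} actually does: they work directly with the B\"uchi automata and build a perfect set of pairwise inequivalent words by a pumping/splitting argument in the product machine, avoiding any appeal to descriptive set theory. The trade-off is as you describe: the Silver route is short but imports heavy machinery; the elementary route is self-contained but requires careful bookkeeping of accepting conditions in the product automaton, which is where the technical work lies.
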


\begin{theorem}\label{T-height-2} The following holds:
  \begin{itemize}
  \item
  The isomorphism problem $\Iso(\T_1)$ for $\omega$-automatic trees of
  height 1 is decidable.
  \item
  There exists a tree $U$ such that $\{P\in\Ti_2\mid \S(P)\cong U\}$
  is $\Pi^0_1$-hard. The isomorphism problems $\Iso(\T_2)$ and
  $\Iso(\Ti_2)$ for (injectively) $\omega$-automatic trees of height 2
  are $\Pi^0_1$-complete.
  \end{itemize}
\end{theorem}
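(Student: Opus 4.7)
The plan is to split the proof into three parts matching the three claims of the theorem.

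For the decidability of $\Iso(\T_1)$, I will use that a height-1 tree is characterized up to isomorphism by the cardinality of its child set, which by Theorem~\ref{thm:counting} must lie in the countable set $\N \cup \{\aleph_0, 2^{\aleph_0}\}$. For each such $\kappa$, the statement ``the root has exactly $\kappa$ children'' is expressible in $\FO[\exists^{\aleph_0}, \exists^{2^{\aleph_0}}]$ and therefore decidable by Theorem~\ref{thm:extFOaut}, so I can effectively compute this cardinality for each input and compare.

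For the $\Pi^0_1$ upper bound in the height-2 case, I push the same idea one level deeper: two height-2 trees are isomorphic iff for every $\kappa \in \N \cup \{\aleph_0, 2^{\aleph_0}\}$ they have the same number of children whose subtree has exactly $\kappa$ leaves. Each such count is decidable for fixed $\kappa$ (again by Theorem~\ref{thm:extFOaut}), so the overall condition is a countable conjunction of decidable checks and hence lies in $\Pi^0_1$.

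For the $\Pi^0_1$-hardness, I plan to transfer the analogous hardness result for automatic trees of height~2 from \cite{KuLiLo10}. The key observation is that every automatic presentation over an alphabet $\Sigma$ yields an injectively $\omega$-automatic presentation of the same structure by padding each finite word $w \in \Sigma^*$ to the $\omega$-word $w \diamond^\omega$ for a fresh symbol $\diamond$; the required B\"uchi automata for the universe and the edge relation are obtained by a straightforward modification of the original finite automata, and injectivity is preserved. The reduction of \cite{KuLiLo10} maps a $\Pi^0_1$-complete problem into the set $\{P : \S(P) \cong U\}$ for a specific countable tree $U$ and injective automatic presentations~$P$; applying the padding transformation to its output gives a computable reduction into $\{P \in \Ti_2 : \S(P) \cong U\}$, yielding the desired hardness. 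Combined with the upper bound, this establishes $\Pi^0_1$-completeness of both $\Iso(\T_2)$ and $\Iso(\Ti_2)$.

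The main obstacle will be routine bookkeeping---checking that the padded B\"uchi automata are assembled correctly, that the equivalence automaton $M_\equiv$ can indeed be taken to recognize the identity on $L(M)$ so that the presentation remains injective, and that the transfer is uniformly computable---together with verifying that the specific tree $U$ exhibited in the reduction of \cite{KuLiLo10} serves to witness the claim about a fixed $U$ in the second bullet of the theorem. No new combinatorial ingredient is required beyond \cite{KuLiLo10} and the preliminaries recalled in the present paper.
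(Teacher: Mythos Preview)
Your proposal is correct and follows essentially the same approach as the paper: decidability for height~1 via the cardinality of the tree (using Theorems~\ref{thm:extFOaut} and~\ref{thm:counting}), the $\Pi^0_1$ upper bound for height~2 by reducing isomorphism to a countable conjunction of decidable cardinality checks indexed by $\kappa,\lambda\in\N\cup\{\aleph_0,2^{\aleph_0}\}$, and $\Pi^0_1$-hardness by importing the result for automatic trees from~\cite{KuLiLo10}. Your padding argument for the hardness transfer is simply a spelled-out version of what the paper states in one line.
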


\begin{proof}
  Two trees of height 1 are isomorphic if and only if they have the
  same size.  By Theorem~\ref{thm:counting}, the number of elements in
  an $\omega$-automatic tree $\S(P)$ with $P\in \T_1$ is either finite, $\aleph_0$
  or $2^{\aleph_0}$ and the exact size can be computed using
  Theorem~\ref{thm:extFOaut} (by checking successively validity of the
  sentences $\exists^\kappa x:x=x$ for
  $\kappa\in\N\cup\{\aleph_0,2^{\aleph_0}\}$\footnote{Where
    $\exists^n x:\varphi(x)$ for $n\in\N$ is shorthand for the obvious
    first-order formula expressing that there are exactly $n$ elements
    satisfying $\varphi$.}).

  Now, let us take two trees $T_1$ and $T_2$ of height 2 and let $E_i$
  be the edge relation of $T_i$ and $r_i$ its root.  For $i \in
  \{1,2\}$ and a cardinal $\lambda$ let $\kappa_{\lambda,i}$ be the
  cardinality of the set of all $u \in E_i(r_i)$ such that $|E_i(u)| =
  \lambda$.  Then $T_1 \cong T_2$ if and only if $\kappa_{\lambda,1} =
  \kappa_{\lambda,2}$ for any cardinal $\lambda$. Now assume that
  $T_1$ and $T_2$ are both $\omega$-automatic.  By
  Theorem~\ref{thm:counting}, for all $i \in \{1,2\}$ and every $u \in
  E_i(r_i)$ we have $|E_i(u)| \in \N\cup\{\aleph_0,2^{\aleph_0}\}$.
  Moreover, again by Theorem~\ref{thm:counting}, every cardinal
  $\kappa_{\lambda,1}$ ($\lambda \in \N\cup\{\aleph_0,2^{\aleph_0}\}$)
  belongs to $\N\cup\{\aleph_0,2^{\aleph_0}\}$ as well.  Hence, $T_1
  \cong T_2$ if and only if, for all
  $\kappa,\lambda\in\N\cup\{\aleph_0,2^{\aleph_0}\}$:
  \begin{align*}
    \phantom{\text{ if and only if }} &
    T_1\models\exists^\kappa x : ((r_1, x) \in E \wedge
    \exists^\lambda y : (x,y) \in E) \\\text{ if and only if }&
    T_2\models \exists^\kappa x : ((r_2, x) \in E \wedge
    \exists^\lambda y : (x,y) \in E )\,.
  \end{align*}
  By Theorem~\ref{thm:extFOaut}, this equivalence is decidable for all
  $\kappa,\lambda$. Since it has to hold for all $\kappa,\lambda$, the
  isomorphism of two $\omega$-automatic trees of height 2 is
  expressible by a $\Pi^0_1$-statement.  Hardness for $\Pi^0_1$
  follows from the corresponding result on automatic trees of
  height 2.\qed
\end{proof}

\section{A normal form for analytical sets}

To prove our lower bound for the isomorphism problem of
$\omega$-automatic trees of height $n\ge3$, we will use the following
normal form of analytical sets. A formula of the form $x \in X$ or $x
\not\in X$ is called a {\em set constraint}. The constructions in the
proof of the following lemma are standard. 

\begin{proposition}\label{prp:normal.form}
  For every odd $($resp. even$)$ $n\in \N_+$ and every $\Pi^1_n$
  $($resp. $\Sigma^1_n)$ relation $A\subseteq \N_+^r$, there exist
  polynomials $p_i,q_i\in \N[\overline{x},y,\overline{z}]$ and
  disjunctions $\psi_i$ $(1\leq i\leq \ell)$ of set constraints (on
  the set variables $X_1, \ldots, X_n$ and individual variables
  $\overline{x} ,y,\overline z$) such that $\overline x\in A$ if and
  only if
  \[
 Q_1 X_1\ Q_2 X_2\cdots
       Q_n X_n\ \exists y\  \forall\overline z : \bigwedge_{i=1}^\ell
       p_i(\overline{x} ,y,\overline z) \neq
       q_i(\overline{x} ,y,\overline z) \vee  \psi_i(\overline{x} ,y,\overline z,X_1,\ldots,X_n),
  \]
  where $Q_1,Q_2,\ldots,Q_n$ are alternating quantifiers
  with~$Q_n=\forall$.
\end{proposition}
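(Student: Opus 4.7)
The plan is to reduce the $\Pi^1_n$ (or $\Sigma^1_n$) predicate to the required form by Skolemizing the first-order part of the matrix inside the innermost universal set-quantifier, collapsing the resulting second-order existentials by polynomial pairing, and finally massaging the quantifier-free matrix into the prescribed CNF. I describe the $\Pi^1_n$ case (odd~$n$); the $\Sigma^1_n$ case is symmetric, since both end with $Q_n=\forall$. As a first step, I replace function quantifiers by set quantifiers: a function $f:\N_+\to\N_+$ is encoded by its graph $X_f=\{\langle m,k\rangle\mid f(m)=k\}$, and the arithmetical condition ``$X_f$ is a functional graph'' is absorbed into the matrix. After prenexing, one arrives at $Q_1X_1\cdots Q_nX_n\,\varphi(\overline x,X_1,\ldots,X_n)$ with $\varphi$ first-order over $(\N,+,\times)$ extended by the unary predicates $X_1,\ldots,X_n$.

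Next, reduce the innermost first-order matrix. Prenex $\neg\varphi$ and Skolemize all of its existential first-order quantifiers by fresh functions $\overline g$; this yields $\neg\varphi\equiv\exists\overline g\,\forall\overline y\,\chi_0$ with $\chi_0$ quantifier-free (its atoms are polynomial equations and set constraints on $X_1,\ldots,X_n$). The two adjacent existential second-order quantifiers in
\[
   \exists X_n\,\neg\varphi \;\equiv\; \exists X_n\,\exists\overline g\,\forall\overline y\,\chi_0
\]
collapse into a single $\exists X_n'$, because one set $X_n'$ can simultaneously encode both $X_n$ and the graphs of the Skolem functions (e.g.\ $X_n=\{m\mid\langle 0,m\rangle\in X_n'\}$ and $\mathrm{graph}(g_i)=\{(m_1,\ldots,m_{k_i},k)\mid\langle i,m_1,\ldots,m_{k_i},k\rangle\in X_n'\}$); the ``valid encoding'' condition is arithmetical and goes into $\chi_0$. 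Negating back and renaming $X_n'$ to $X_n$ turns $\forall X_n\,\varphi$ into $\forall X_n\,\exists\overline y\,\psi$ with $\psi$ quantifier-free. Finally, collapse $\exists y_1\cdots\exists y_k$ into a single $\exists y$ by polynomial pairing:
\[
   \exists y_1\cdots\exists y_k\,\psi\;\equiv\;\exists y\,\forall y_1\cdots\forall y_k\,\bigl(y\neq\pi(y_1,\ldots,y_k)\lor\psi\bigr),
\]
where $\pi$ is a polynomial pairing, such as the iterated Cantor pairing $2\pi(a,b)=(a+b)(a+b+1)+2b$.

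Third, convert the quantifier-free matrix into the required CNF. After putting it in CNF, each clause is a disjunction of polynomial (in)equality literals and set constraints. (a)~Eliminate any positive polynomial equality literal $p=q$ via $p=q\Leftrightarrow\forall r\,(p+r+1\neq q\land q+r+1\neq p)$ in $\N$; this adds $r$ to $\overline z$ and splits the clause into two. (b)~Collapse multiple polynomial inequalities within a single clause $\bigvee_j(p_j\neq q_j)$ into a single inequality $\pi(\overline p)\neq\pi(\overline q)$ by the same polynomial pairing. Each resulting clause has the prescribed form $(p_i\neq q_i)\lor\psi_i$, with $\psi_i$ a disjunction of set constraints.

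The main obstacle is the second step, and in particular the collapse of $\exists X_n\,\exists\overline g$ into a single $\exists X_n'$ without disturbing the outer set-quantifier alternation. This works only because Skolemization is performed on the existential side of the formula (namely on $\neg\varphi$), so that the Skolem functions appear existentially and can legitimately be merged with the existential $\exists X_n$; attempting to Skolemize directly inside the universal $\forall X_n$ fails, since the Skolem functions would then depend on $X_n$ and could not be pulled past it. Everything else --- CNF conversion, polynomial pairing of inequalities, and the $\N$-specific equality-elimination trick --- is the routine syntactic manipulation that the paper describes as ``standard.''
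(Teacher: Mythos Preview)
Your route via Skolemization is sound and genuinely different from the paper's: the paper passes through recursion theory (oracle Turing machines, the observation that only finitely many oracle queries are made, and finally the MRDP theorem to turn recursive predicates into polynomial inequalities), whereas you stay purely syntactic, so the polynomial atoms are already present from the signature $\{+,\times\}$ and MRDP is never invoked. That is a real simplification.

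There is, however, a gap in your second step. You assert that after merging $\exists X_n\,\exists\overline g$ into $\exists X_n'$ and negating back one obtains $\forall X_n\,\exists\overline y\,\psi$ with $\psi$ \emph{quantifier-free}, the ``valid encoding'' condition having been absorbed into~$\chi_0$. But that condition is not quantifier-free: totality of each Skolem-function section (``for every input there is an output coded in $X_n'$'') is genuinely $\Pi_2$, and it cannot be dropped --- without it, an $X_n'$ with an empty section makes the implication that replaces $\forall\overline y\,\chi_0$ vacuously true, so the right-to-left direction of your collapse fails. With the totality condition you get $\exists X_n'\,[\Pi_2\wedge\Pi_1]$, hence after negation $\forall X_n'\,\Sigma_2$, i.e.\ $\forall X_n'\,\exists\overline u\,\forall\overline w\,\psi$ with $\psi$ quantifier-free: one more universal block than you claim. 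This is easy to repair --- the extra $\forall\overline w$ simply merges with the universals you introduce anyway in the pairing step and in your step~(a) --- but as written the argument is incomplete. (The paper sidesteps this particular issue because it merges only a single \emph{number}, not a function, into $X_n$; its encoding predicate $\alpha(X)$ is $\Pi_1\wedge\Sigma_1$.)
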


\begin{proof} 
  For notational simplicity, we present the proof only for the case
  when $n$ is odd. The other case can be proved in a similar way by
  just adding an existential quantification $\exists X_0$ at the
  beginning. We will write $\Sigma_m(\SC,\REC)$ for the set of
  $\Sigma_m$-formulas over set constraints and recursive predicates,
  $\Pi_m(\SC,\REC)$ is to be understood similarly and
  $B\Sigma_m(\SC,\REC)$ is the set of boolean combinations of formulas
  from $\Sigma_m(\SC,\REC)$. With $C_k : \N_+^k \to \N_k$ we will
  denote some computable bijection.

  Fix an odd number $n$. It is well known that every
  $\Pi^1_n$-relation $A\subseteq \N_+^r$ can be written as
  \begin{equation} \label{initial-pi^1_n-formula}
  A = \{ \overline{x} \in \N_+^r \mid \forall f_1  \ \exists f_2 \cdots \forall f_n \ \exists y: P(\overline{x} ,y,f_1,\ldots,f_n)\},
  \end{equation}
  where $P$ is a recursive predicate relative to the functions $f_1,
  \ldots, f_n$ (see \cite[p.378]{Odi89}). In other words, there exists
  an oracle Turing-machine which computes the Boolean value
  $P(\overline{x},y,f_1,\ldots,f_n)$ from input
  $(\overline{x},y)$.  The oracle Turing-machine can compute a value
  $f_i(a)$ for a previously computed number $a \in \N_+$ in a single
  step. Therefore we can easily obtain an oracle Turing machine $M$
  which halts on input $\overline{x}$ if and only if $\exists y:
  P(\overline{x},y,f_1,\ldots,f_n)$ holds.

  Following \cite{Odi89}, we can replace the function quantifiers in
  \eqref{initial-pi^1_n-formula} by set quantifiers as follows.  A
  function $f:\N_+\to\N_+$ is encoded by the set $\{ C_2(x,y) \mid
  f(x)=y\}$.  Let $\mathsf{func}(X)$ be the following formula,
  where $X$ is a set variable:
  \begin{eqnarray*}
  \func(X) & = & (\forall x,y,z,u,v  : C_2(x,y)=u \wedge C_2(x,z)=v \wedge u,v \in X \rightarrow y = z) \wedge \\
          &   & (\forall x \ \exists y, z : C_2(x,y) = z \wedge z \in X)
  \end{eqnarray*}
  Hence, $\func(X)$ is a $\Pi_2(\SC,\REC)$-formula, which expresses that
  $X$ encodes a total function on $\N$.  Then, the set $A$ in
  \eqref{initial-pi^1_n-formula} can be defined by the formula
  \begin{equation} \label{equiv-formula-1}
  \forall X_1 : \neg\func(X_1) \vee \exists X_2 : \func(X_2) \wedge \cdots \forall
               X_n : \neg\func(X_n) \vee R(\overline{x},X_1,\ldots,X_n) .
  \end{equation}
  The predicate $R$ can be derived from the oracle Turing-machine $M$
  as follows: Construct from $M$ a new oracle Turing machine $N$ with
  oracle sets $X_1, \ldots, X_n$. If the machine $M$ wants to compute
  the value $f_i(a)$, then the machine $N$ starts to enumerate all $b
  \in \N_+$ until it finds $b \in \N_+$ with $C_2(a,b) \in X_i$.  Then
  it continues its computation with $b$ for $f_i(a)$.  Then the
  predicate $R(\overline{x},X_1,\ldots,X_n)$ expresses that machine
  $N$ halts on input $\overline{x}$.

  Fix a computable bijection $D : \N_+ \to \mathsf{Fin}(\N_+)$, where
  $\mathsf{Fin}(\N_+)$ is the set of all finite subsets of $\N_+$.
  Let $\mathsf{in}(x,y)$ be an abbreviation for $x \in D(y)$. This is
  a computable predicate.

  Next, consider the predicate $R(\overline{x},X_1,\ldots,X_n)$.  In
  every run of the machine $N$ on input $\overline{x}$, the machine
  $N$ makes only finitely many oracle queries.  Hence, the predicate
  $R(\overline{x},X_1,\ldots,X_n)$ is equivalent to
  $$
  \exists b \ \exists (s_1, \ldots, s_n) : S(\overline{x},b,(s_1,\ldots, s_n)) \wedge
      \bigwedge_{i=1}^n \forall z\le b \; (\mathsf{in}(z,s_i)\leftrightarrow z \in X_i),
  $$
  where the predicate $S$ is derived from the Turing-machine $N$ as
  follows: Let $T$ be the Turing-machine that on input
  $(\overline{x},b,(s_1,\ldots, s_n))$ behaves as $N$, but if $N$ asks
  the oracle whether $z \in X_i$, then $T$ first checks whether $z
  \leq b$ (if not, then $T$ diverges) and then checks, whether
  $\mathsf{in}(z,s_i)$ holds. Then $S(\overline{x},b,(s_1,\ldots,
  s_n))$ if and only if $T$ halts on input
  $(\overline{x},b,(s_1,\ldots,s_n))$.  Hence, the predicate
  $S(\overline{x},b,(s_1,\ldots,s_n))$ is recursively enumerable,
  i.e., can be described by a formula from $\Sigma_1(\REC,\SC)$. Hence
  the predicate $R$ can be described by a formula from
  $\Sigma_2(\REC,\SC)$.

  Note that the formula from \eqref{equiv-formula-1} is equivalent
  with a formula 
  \begin{equation}
    \label{eq:DK}
    \forall X_1\exists X_2\cdots\forall X_n:\varphi(\overline x,\overline X),
  \end{equation}
  where $\varphi$ is a Boolean combination of $R$ and formulas of the
  form $\func(X_i)$. Since all these formulas belong to
  $\Pi_2(\REC,\SC)\cup\Sigma_2(\REC,\SC)$, the formula $\varphi$ belongs
  to $B\Sigma_2(\REC,\SC)\subseteq\Pi_3(\REC,\SC)$. Hence \eqref{eq:DK} is
  equivalent with
  \begin{equation}
    \label{eq:DK1}
    \forall X_1\ \exists X_2\cdots\forall X_n
      \forall \overline a \ \exists \overline b \ \forall \overline c : \beta
  \end{equation}
  where $\beta$ is a boolean combination of recursive predicates and
  set constraints.

  We can eliminate the quantifier block $\forall \overline{a}$ by
  merging it with $\forall X_n$: First, we can reduce $\forall
  \overline{a}$ to a single quantifier $\forall a$. For this, assume
  that the length of the tuple $\overline a$ is $k$. Then, $\forall
  \overline{a} \cdots$ in \eqref{eq:DK1} can be replaced by $\forall a
  \ \exists \overline{a} : C_k(\overline{a})=a \wedge \cdots$. Since
  $C_k(\overline{a})=a$ is again recursive and since we can merge
  $\exists \overline a\exists \ \overline b$ into a single block of
  quantifiers $\exists\overline b$, we obtain indeed an equivalent
  formula of the form
  \begin{equation} \label{equiv-formula-3}
  \forall X_1 \ \exists X_2 \cdots \forall  X_n \ \forall a \
   \exists \overline{b} \ \forall \overline{c} : \beta'
  \end{equation}
  where $\beta'$ is a boolean combination of recursive predicates and
  set constraints.

  Next, we encode the pair $(X_n,a)$ by the set $\{
  2x \mid x \in X_n\} \cup \{2a+1\}$.  Let $\alpha(X)$ be the formula
  \begin{eqnarray*}
  \alpha(X) &=& (\forall x,y,x',y' : x = 2x'+1 \wedge y = 2y'+1 \wedge x,y \in X
              \ \to \  x=y) \wedge \\
            & & (\exists x,u : x \in X \wedge x = 2u+1)
  \end{eqnarray*}
  Hence, $\alpha(X)$ expresses that $X$ contains exactly one odd
  number.  Hence, we obtain a formula equivalent to
  \eqref{equiv-formula-3} by
  \begin{itemize}
  \item replacing $\forall X_n \ \forall a \cdots$ with $\forall X_n :
    \neg\alpha(X_n) \vee \exists a, a', a'' : a'' \in X_n \wedge a'' =
    a'+1 \wedge a'=2a \wedge \cdots$ and
  \item replacing every existential quantifier $\exists b_i \cdots$
    (resp.\ universal quantifier $\forall c_i \cdots$) in
    \eqref{equiv-formula-3} with $\exists b_i\ \exists b'_i : b'_i =
    2b_i \wedge \cdots$ (resp.\ $\forall c_i\ \forall c'_i : c'_i \neq
    2c_i \vee \cdots$), and
  \item replacing every sub-formula $a\in X_n$, $b_i \in X_n$ or $c_i
    \in X_n$ with $a'\in X_n$, $b'_i \in X_n$, and $c'_i \in X_n$,
    resp..
  \end{itemize}
  All new quantifiers can be merged with either the block $\exists
  \overline{b}$ or the block $\forall \overline{c}$ in
  \eqref{equiv-formula-3}.  We now have obtained an equivalent formula
  of the form
  \begin{equation} \label{equiv-formula-4}
  \forall X_1 \ \exists X_2 \cdots \forall  X_n \
   \exists \overline{b} \ \forall \overline{c} : \beta''
  \end{equation}
  where $\beta''$ is a Boolean combination of recursive predicates and
  set constraints.

  The block $\exists \overline{b} \cdots$ can be replaced by $\exists
  b\ \forall \overline{b} : C_\ell(\overline{b}) \neq b \vee \cdots$,
  where $\ell$ is the length of the tuple $\overline{b}$. Since
  $C_\ell(\overline{b}) \neq b$ is a computable predicate, this
  results in an equivalent formula of the form
  $$
  \forall X_1 \ \exists X_2 \cdots \forall  X_n \ \exists b \ \forall \overline{c} : \beta'''
  $$
  where $\beta'''$ is a Boolean combination of recursive predicates
  and set constraints.

  Note that the set of recursive predicates is closed under Boolean
  combinations and that the set of set constraints is closed under
  negation. This allows to obtain an equivalent formula of the form
  \[
      \forall X_1 \ \exists X_2 \cdots \forall X_n \ \exists b \ \forall \overline{c} : \bigwedge_{i=1}^\ell (R_i \lor \psi_i),
  \]
  where the $R_i$ are recursive predicates and the $\psi_i$ are disjunctions
  of set constraints. 

  Since the recursive predicates $R_i$ are co-Diophantine, there are
  polynomials $p_i,q_i\in\N[b,\overline c,\overline z]$ such that
  $R_i(b,\overline c)$ is equivalent with $\forall \overline
  z:p_i(b,\overline c,\overline z)\neq q_i(b,\overline c,\overline
  z)$. Replacing $R_i$ in the above formula by this equivalent formula
  and merging the new universal quantifiers $\forall \overline z$ with
  $\forall\overline c$ results in a formula as required. \qed
\end{proof}
It is known that the first-order quantifier block $\exists y \, \forall
\overline{z}$ in Proposition~\ref{prp:normal.form} cannot be replaced by a
block with only one type of first-order quantifiers, see
e.g.\ \cite{Odi89}.

\section{$\omega$-automatic trees of height at least
  $4$} \label{sec:h4}

We prove the following theorem for injectively $\omega$-automatic
trees of height at least $4$.

\begin{theorem}\label{thm:h4}
  Let $n\ge1$ and $\Theta\in\{\Sigma,\Pi\}$. There exists a tree
  $U_{n,\Theta}$ of height $n+3$ such that the set $\{P\in\Ti_{n+3}\mid \S(P)\cong
  U_{n,\Theta}\}$ is hard for $\Theta^1_n$.  Hence,
  \begin{itemize}
  \item  the isomorphism
  problem $\Iso(\Ti_{n+3})$ for the class of injectively
  $\omega$-automatic trees of height~$n+3$ is hard for both the
  classes $\Pi^1_n$ and $\Sigma^1_n$,
  \item and  the isomorphism
  problem $\Iso(\Ti)$ for the class of injectively $\omega$-automatic
  trees of finite height is not analytical.
  \end{itemize}
  \end{theorem}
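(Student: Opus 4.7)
The plan is to give a recursion-theoretic reduction from a $\Theta^1_n$-predicate, put into the normal form of Proposition~\ref{prp:normal.form}, to the isomorphism problem $\Iso(\Ti_{n+3})$. Given an input $\overline{x} \in \N_+^r$, one effectively builds an injectively $\omega$-automatic tree $T(\overline{x})$ of height $n+3$ such that $T(\overline{x}) \cong U_{n,\Theta}$ iff $\overline{x}$ belongs to the given predicate, for a suitable fixed tree $U_{n,\Theta}$. The key observation powering the reduction is that every set $X \subseteq \N_+$ can be encoded by its characteristic $\omega$-word $w_X \in \{0,1\}^\omega$, so that quantification over sets of naturals becomes quantification over $\omega$-words—exactly the quantifiers natively supported by $\omega$-automatic presentations.

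Concretely, I would build $T(\overline x)$ by peeling off the quantifiers of the normal form one by one, adding one level to the tree per quantifier (the final three levels will absorb the $\exists y\ \forall\overline z$ block together with the quantifier-free matrix, as in the automatic case \cite{KuLiLo10}). For a universal set-quantifier $\forall X_i$ at level $i$, the constructed node has a subtree for \emph{every} $\omega$-word $w\in\{0,1\}^\omega$, each such subtree representing the outcome of the remaining formula with $X_i$ set to the set encoded by $w$; the target tree $U_{n,\Theta}$ at the same level contains the subtrees that must be matched in a genuine isomorphism. For an existential set-quantifier $\exists X_i$, the construction is dualised by using different multiplicities (ultimately $2^{\aleph_0}$-many copies of the "good" subtree type, obtainable via Lemma~\ref{lem:multiply}) so that isomorphism can be witnessed by a single choice of $w$. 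At the bottom, the matrix $\bigwedge_i\bigl(p_i\neq q_i\vee\psi_i\bigr)$ is a Boolean combination of a polynomial inequation (an arithmetic, hence B\"uchi-recognisable, condition) and set constraints $y\in X_j$ / $y\notin X_j$, which by the encoding is simply the test "$w_{X_j}[y]=1$" and thus also B\"uchi-recognisable.

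To verify that $T(\overline x)$ is injectively $\omega$-automatic, I would use convolutions of the chosen $\omega$-words together with finite tags for level, quantifier type and the numeric parameters $y,\overline z$; the child relation and the inherited equality are definable by Büchi automata that only inspect finitely many coordinates arithmetically and check set membership on the relevant coordinates of the $\omega$-word components. The disjoint-union and power operations from Lemma~\ref{lem:multiply} (together with the Choueka/flag construction from the preliminaries, which handles multiplicities) provide exactly the building blocks needed to assemble subtrees with the prescribed $\aleph_0$- or $2^{\aleph_0}$-many copies. The target tree $U_{n,\Theta}$ does not depend on $\overline x$: the input is encoded into a bounded prefix on the path in $T(\overline x)$ down to the nodes where $\overline x$ is actually needed (polynomial evaluation), so the constructed family $\{T(\overline x)\mid\overline x\in\N_+^r\}$ contains only one isomorphism type of "accepting" tree.

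The main obstacle is to make the inductive correspondence between quantifier alternation and tree isomorphism watertight at the boundary between set-quantifiers and first-order quantifiers: one must arrange multiplicities so that, at every level, a subtree in $T(\overline x)$ at level $i$ is isomorphic to the corresponding subtree in $U_{n,\Theta}$ precisely when the remaining formula (with $X_1,\ldots,X_i$ instantiated by the $\omega$-words chosen on the path) is true. This is exactly the place where going from the countable (automatic) to the uncountable ($\omega$-automatic) setting is delicate, because universal set-quantifiers now range over $2^{\aleph_0}$ objects and the "copy multiplicities" must be chosen so that Theorem~\ref{thm:counting} cannot merge distinct isomorphism types. Handling this correctly, by mimicking the construction of \cite{KuLiLo10} but replacing $\aleph_0$-fold copies by $2^{\aleph_0}$-fold copies via the second clause of Lemma~\ref{lem:multiply}, is the technical heart of the proof.
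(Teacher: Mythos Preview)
Your plan follows the paper's approach almost exactly: normal form via Proposition~\ref{prp:normal.form}, encoding of sets by characteristic $\omega$-words, one tree level per set quantifier, $2^{\aleph_0}$-fold copies via Lemma~\ref{lem:multiply}, and a three-level base case absorbing $\exists y\,\forall\overline z$ and the matrix. The one point where your sketch is genuinely underspecified---and where the construction can fail if done carelessly---is the invariant you carry through the induction. You state it one-sidedly: ``a subtree of $T(\overline x)$ at level $i$ is isomorphic to the corresponding subtree of $U_{n,\Theta}$ precisely when the remaining formula is true''. That is not enough for the existential step. What the paper maintains is a \emph{dichotomy}: for every $m$ there are two fixed trees $U_m[0]$ and $U_m[1]$ such that $T_m[\overline X,x]\cong U_m[0]$ if the level-$m$ subformula holds and $T_m[\overline X,x]\cong U_m[1]$ otherwise (Lemmas~\ref{lem:h4+.T_3} and~\ref{lem:h4+.T_m}). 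Without controlling the ``false'' isomorphism type, an existential quantifier breaks: when $\exists X_{n-m}\,\varphi_m$ is true, the tree $\biguplus_{X_{n-m}} T_m[\overline X,X_{n-m},x]$ contains one copy of the ``true'' type together with an uncontrolled zoo of ``false'' types, and there is no fixed target it can be isomorphic to. The paper enforces the dichotomy by always adjoining an explicit copy of $U_m[\alpha]$ (with $\alpha=m\bmod 2$) at each step, so that the resulting tree has exactly one or exactly two subtree types from $\{U_m[0],U_m[1]\}$; your ``different multiplicities'' remark gestures at this but does not pin it down.
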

Theorem~\ref{thm:h4} will be derived from the following proposition whose proof occupies
Sections~\ref{sec:h4.construction} and~\ref{sec:h4.automata}.

\begin{proposition}\label{prop:h4}
  Let $n\ge1$. There are trees $U[0]$ and $U[1]$ of height $n+3$ such
  that for any set $A\subseteq\N_+$ that is $\Pi^1_{n}$ if $n$ is odd and
  $\Sigma^1_{n}$ if $n$ is even, one can compute from $x\in\N_+$ an
  injectively $\omega$-automatic tree $T[x]$ of height $n+3$ with $T[x]\cong U[0]$ if
  and only if $x\in A$ and $T[x]\cong U[1]$ otherwise.
\end{proposition}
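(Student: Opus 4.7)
The plan is to reduce membership in $A$ to the isomorphism problem by constructing, uniformly in $x \in \N_+$, an injectively $\omega$-automatic tree $T[x]$ of height $n+3$ whose isomorphism type equals $U[0]$ exactly when $x \in A$, and $U[1]$ otherwise. The first step is to apply Proposition~\ref{prp:normal.form} to rewrite membership in $A$ as
\[
Q_1 X_1 \cdots Q_n X_n\, \exists y\, \forall \overline z : \bigwedge_{i=1}^{\ell} \bigl( p_i(\overline x, y, \overline z) \neq q_i(\overline x, y, \overline z) \ \vee\ \psi_i(\overline x, y, \overline z, X_1, \ldots, X_n) \bigr),
\]
with $Q_n = \forall$ and each $\psi_i$ a disjunction of set constraints. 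The tree $T[x]$ will then carry one level for each of the $n$ set quantifiers, one for $\exists y$, one for $\forall \overline z$, plus the root, giving total height~$n+3$.

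The construction proceeds from the leaves upwards. At the innermost level, each potential leaf corresponds to a tuple $(\overline X, y, \overline z)$ and is present exactly when the quantifier-free matrix holds: a set constraint $u \in X_k$ is tested by a \Buchi automaton that inspects position $u$ of the $\omega$-word encoding $X_k \subseteq \N_+$ (where $w_X[u] = 1$ iff $u \in X$), while the polynomial inequalities $p_i \neq q_i$ on bounded natural arguments are trivially regular. At the $(\forall \overline z)$-layer and the $(\exists y)$-layer we use the standard gadget from \cite{KuLiLo10}: the isomorphism type of a subtree will encode only the Boolean truth value of the corresponding subformula, achieved by coupling each ``true'' child with $\aleph_0$ copies (via Lemma~\ref{lem:multiply}) and installing a fixed collection of auxiliary dummy subtrees so that the two possible isomorphism types at each level become canonical. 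The set-quantifier layers are handled analogously, but here the children of a node are indexed by all $\omega$-words in $\{0,1\}^\omega$, so each such node has $2^{\aleph_0}$ children; the alternation of $Q_k \in \{\exists,\forall\}$ is encoded by which of the canonical ``true'' and ``false'' subtrees occurs with multiplicity $2^{\aleph_0}$ among these children. The trees $U[0]$ and $U[1]$ are then defined independently of $x$ as the two canonical ``true'' and ``false'' shapes produced at the root by this procedure.

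By induction on the depth of the quantifier alternation, and using Theorem~\ref{thm:counting} to control the possible multiplicities, one obtains $T[x] \cong U[0]$ iff $x \in A$ and $T[x] \cong U[1]$ otherwise. The injectively $\omega$-automatic presentation of $T[x]$ uses convolutions of $\omega$-words (for the $X_k$ and the branch structure at the set-quantifier levels) together with padded finite words (for $y$ and $\overline z$) as element encodings; the edge relation is recognised by a \Buchi automaton that compares two consecutive-depth encodings and checks the level-specific gadget condition. Closure of $\omega$-regular relations under Boolean operations and projections, combined with Lemma~\ref{lem:multiply} for the disjoint-union and $2^{\aleph_0}$-power constructions, suffices to assemble the full presentation uniformly in $x$.

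The main obstacle will be designing the gadgets at the set-quantifier levels so that the isomorphism type of a subtree mirrors exactly the Boolean value of the corresponding subformula, while keeping the resulting presentation injectively $\omega$-automatic and uniform in $x$. This hinges on the fact that $\omega$-automatic cardinality counts are restricted to $\N \cup \{\aleph_0, 2^{\aleph_0}\}$ (Theorem~\ref{thm:counting}) and that the disjoint-union and uncountable-power constructions preserve injective $\omega$-automaticity (Lemma~\ref{lem:multiply}); together these provide just enough cardinality flexibility to realise the alternation while being blind to everything but the isomorphism type of each child.
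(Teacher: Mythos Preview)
Your overall architecture---one tree level per quantifier, children at set-quantifier levels indexed by $\{0,1\}^\omega$, and an inductive reduction to two canonical isomorphism types $U_m[0]$ and $U_m[1]$---matches the paper's proof. The serious gap is at the bottom of the tree, where you write that ``the polynomial inequalities $p_i\neq q_i$ on bounded natural arguments are trivially regular.'' They are not bounded: $y$ and $\overline z$ range over all of $\N_+$, and for general polynomials the relation $\{(y,\overline z)\mid p_i(x,y,\overline z)=q_i(x,y,\overline z)\}$ is \emph{not} $\omega$-regular in the unary encoding $a^y\otimes a^{\overline z}$ (think of $y^2=z$). So you cannot simply make a leaf ``present exactly when the quantifier-free matrix holds''; that predicate is not recognisable by a B\"uchi automaton.

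The paper's solution, which your sketch is missing, is to encode the polynomial values in the \emph{number of accepting runs} of a B\"uchi automaton (Lemma~\ref{lem:polynomial.automaton}): disjoint union adds run counts and the Choueka flag construction multiplies them, so one can build an automaton with exactly $p(\overline c)$ accepting runs on $a^{\overline c}$. The leaves of the tree are then the accepting runs themselves, so the height-$1$ subtree at $(\overline X,x,y,\overline z,z_{k+1},i)$ has $C\bigl(p_i(x,y,\overline z)+z_{k+1},\,q_i(x,y,\overline z)+z_{k+1}\bigr)$ leaves (or $C(1,2)$ if $\psi_i$ holds), where $C$ is an injective pairing polynomial. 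The inequality $p_i\neq q_i$ is now detected purely by the \emph{isomorphism type} of this height-$1$ tree, and the extra variable $z_{k+1}$ is used so that once some $S[e,e]$ appears, all $S[e',e']$ with $e'\ge e$ appear as well---this is what makes the height-$2$ tree $T''[\overline X,x,y]$ land in the countable family $\{U''[\kappa]\mid\kappa\in\N_+\cup\{\omega\}\}$ rather than in an uncontrolled continuum of types. Without this mechanism the induction on $m$ collapses, because you would have no way to guarantee that $T_0[\overline X,x]$ is isomorphic to one of just two fixed trees. (Incidentally, Theorem~\ref{thm:counting} plays no role in this construction; it is used only for the upper bound in Section~7.)
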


\noindent
{\em  Proof of Theorem~\ref{thm:h4} from Proposition~\ref{prop:h4}.}
  Let $n\ge1$ be odd. Let $A$ be an arbitrary set from $\Pi^1_n$ and
  set $U_{n,\Pi}=U[0]$ and $U_{n,\Sigma}=U[1]$. Then the mapping
  $x\mapsto T[x]$ is a reduction from $A$ to $\{P\in\Ti_{n+3}\mid
  \S(P)\cong U_{n,\Pi}\}$ and, at the same time, a reduction from the
  $\Sigma^1_n$-set $\N_+\setminus A$ to $\{P\in\Ti_{n+3}\mid
  \S(P)\cong U_{n,\Sigma}\}$. Since $A$ was chosen arbitrary from
  $\Pi^1_n$, the first statement follows for $n$ odd. If $n$ is even,
  we can proceed similarly exchanging the roles of $U[0]$ and $U[1]$.

  We now derive the second statement. By the first one, the trees
  $U[0]$ and $U[1]$ are in particular injectively $\omega$-automatic
  and of height $n+3$, so let $P_0$ and $P_1$ be injective
  $\omega$-automatic presentations of these two trees. Then
  $P\mapsto(P,P_0)$ is a reduction from the set $\{P\in\Ti_{n+3}\mid
  \S(P)\cong U_{n,\Pi}\}$ to $\Iso(\Ti_{n+3})$ which is therefore hard
  for $\Pi^1_{n+3}$. Analogously, this isomorphism problem is hard for
  $\Sigma^1_{n+3}$.

  Finally, we prove the third statement. For any $n\ge1$, the set
  $\Ti_{n+3}$ is decidable (since the set of trees of height at most 3
  is first-order axiomatizable). With $P',P''\in\Ti_{n+3}$ arbitrary
  with $\S(P')\not\cong\S(P'')$, the mapping
  \[
    (P_1,P_2)\mapsto
    \begin{cases}
      (P_1,P_2) & \text{ if }P_1,P_2\in\Ti_{n+3}\\
      (P',P'') & \text{ otherwise}
    \end{cases}
  \]
  is a reduction from $\Iso(\Ti_{n+3})$ to $\Iso(\Ti)$. Hence
  $\Iso(\Ti)$ is hard for all levels $\Sigma^1_n$ and therefore not
  analytical.\qed

\medskip
\noindent
The construction of the trees $T[x]$, $U[0]$, and $U[1]$ is uniform in
$n$ and the formula defining $A$. Hence the second-order theory of
$(\N,+,\times)$ can be reduced to
$\bigcup_{n\geq 1}\{n\}\times\Iso(\Ti_n)$ and therefore to the
isomorphism problem $\Iso(\bigcup_{n\geq 1} \Ti_n)$. 

\begin{corollary}\label{C-2nd-order-arithmetic}
  The second-order theory of $(\N,+,\times)$ can be reduced to the
  isomorphism problem $\Iso(\bigcup_{n\in \N_+} \Ti_n)$ for the class
  of all injectively $\omega$-automatic trees of finite height.
\end{corollary}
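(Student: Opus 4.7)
The plan is to exploit the uniformity already highlighted in the remark preceding the corollary. Given a closed second-order formula $\varphi$ over $(\N,+,\times)$, I would first read off its quantifier-alternation pattern to determine the level $\Pi^1_n$ or $\Sigma^1_n$ it falls into (viewed as a formula with a single dummy free variable so that it defines either $\N_+$ or $\emptyset$). Applying Proposition~\ref{prp:normal.form} to this formula yields the required normal form, and its parity determines whether Proposition~\ref{prop:h4} should be applied in the $\Pi^1_n$ or $\Sigma^1_n$ mode.

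Next I would feed the normal form into the construction of Proposition~\ref{prop:h4} with $A = \{x\in\N_+\mid \varphi\}$. This produces injectively $\omega$-automatic presentations of the two ``target'' trees $U[0]$ and $U[1]$ of height $n+3$, together with, on dummy input $x = 1$, an injectively $\omega$-automatic tree $T[1]$ of the same height. By the guarantee of the proposition, $T[1]\cong U[0]$ holds exactly when $1\in A$, i.e.\ when $\varphi$ is true, while $T[1]\cong U[1]$ otherwise. The reduction therefore sends $\varphi$ to the pair $(T[1],U[0])$, which is a pair of injectively $\omega$-automatic presentations of trees of finite height, and truth of $\varphi$ is equivalent to membership of this pair in $\Iso(\Ti_{n+3})\subseteq \Iso(\bigcup_{m\ge 1}\Ti_m)$.

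The only non-trivial point is that this entire pipeline (normal form, then tree construction) is effective \emph{uniformly} in $\varphi$: the integer $n$ is read syntactically from $\varphi$, the B\"uchi automata encoding $T[1]$, $U[0]$ and $U[1]$ are produced by elementary recursive operations (automaton unions, products via the Choueka construction, and alphabet expansion, as in Lemma~\ref{lem:multiply}), and no appeal to non-effective set-theoretic principles is made. This uniformity is precisely what the authors assert in the paragraph just before the corollary, and it is the one ingredient that has to be verified by inspection of the constructions in Sections~\ref{sec:h4.construction} and~\ref{sec:h4.automata}; granting it, the reduction above immediately witnesses $\mathrm{Th}^2(\N,+,\times)\le_m \Iso(\bigcup_{n\in\N_+}\Ti_n)$, which is the statement of the corollary.
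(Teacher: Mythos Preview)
Your proposal is correct and follows essentially the same approach as the paper: the paper's entire argument is the one-sentence observation preceding the corollary that the constructions of $T[x]$, $U[0]$, $U[1]$ are uniform in $n$ and in the defining formula, and you have simply unpacked this into the explicit reduction $\varphi\mapsto(T[1],U[0])$. The only cosmetic difference is that the paper phrases it as a reduction first to $\bigcup_{n\ge1}\{n\}\times\Iso(\Ti_n)$ and then to $\Iso(\bigcup_{n\ge1}\Ti_n)$, whereas you go directly to the latter.
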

We now start to prove Proposition~\ref{prop:h4}. Let $A$ be a set that is
$\Pi^1_{n}$ if $n$ is odd and $\Sigma^1_{n}$ otherwise. By
Proposition~\ref{prp:normal.form} it can be written in the form
\begin{equation*}
    A = \{x \in \N_+\mid Q_1 X_1 \ldots Q_{n} X_{n} \exists y \; \forall \overline{z}: \bigwedge^{\ell}_{i=1} p_i(x,y,\overline{z})\neq q_i(x,y,\overline{z}) \vee \psi_i(x,y,\overline{z},\overline X)\}
  \end{equation*}
where
\begin{itemize}
\item $Q_1, Q_2,\ldots,Q_n$ are alternating quantifiers with
  $Q_{n}=\forall$,
\item $p_i,q_i$ $(1\leq i\leq \ell)$ are polynomials in
  $\N[x,y,\overline{z}]$ where $\overline z$ has length $k$, and
\item every $\psi_i$ is a disjunction of set constraints on the set
  variables $X_1,\dots,X_{n}$ and the individual variables $x, y,
  \overline{z}$.
\end{itemize}
Let $\varphi_{-1}(x,y,X_1,\dots,X_n)$ be the formula
\begin{equation*}
  \forall \overline z:\bigwedge^{\ell}_{i=1}
  p_i(x,y,\overline{z})\neq q_i(x,y,\overline{z})
  \vee \psi_i(x,y,\overline{z},\overline X)\,.
\end{equation*}
For $0\le m\le n$, we will also consider the formula
$\varphi_m(x,X_1,\dots,X_{n-m})$ defined by
\begin{equation*}
Q_{n+1-m} X_{n+1-m}\ldots Q_{n} X_{n}\;\exists
y:\varphi_{-1}(x,y,X_1,\dots,X_n)
\end{equation*}
such that $\varphi_0(x,X_1,\dots,X_n)$ is a first-order formula and
$\varphi_n(x)$ holds if and only if $x\in A$. 

To prove Proposition~\ref{prop:h4}, we construct by induction on $0\le m \le
n$ height-$(m+3)$ trees $T_m[X_1,\ldots,X_{n-m}, x]$ and $U_m[i]$
where $X_1,\ldots,X_{n-m} \subseteq \N_+$, $x\in \N_+$, and $i\in
\{0,1\}$ such that the following holds:
\begin{equation}
  \forall \overline X\in (2^{\N_+})^{n-m} \; \forall x \in
  \N_+:
  T_m[\overline X,x] \cong
    \begin{cases}
      U_m[0] & \text{ if }\varphi_m(x,\overline X) \text{ holds}\\
      U_m[1] & \text{ otherwise }
    \end{cases}\label{eq:invariant}
\end{equation}
Setting $T[x]=T_n[x]$, $U[0]=U_n[0]$, and $U[1]=U_n[1]$ and
constructing from $x$ an injectively $\omega$-automatic presentation
of~$T[x]$ then proves Proposition~\ref{prop:h4}.

\subsection{Construction of trees}
\label{sec:h4.construction}

In the following, we will use the 
{\em injective} polynomial function
\begin{equation}
  \label{eq:C}
  C:\N_+^2\to\N_+ \text{ with } C(x,y)=(x+y)^2+3x+y .
\end{equation}
For $e_1,e_2\in\N_+$, let $S[e_1,e_2]$ denote the height-1 tree
containing $C(e_1,e_2)$ leaves.  For $(\overline
X,x,y,\overline{z},z_{k+1})\in (2^{\N_+})^n\times \N_+^{k+3}$ and
$1\leq i\leq \ell$, define the following height-1 tree, where
$\ell$, $p_i$, and $q_i$ refer to the definition of the set $A$ above:\footnote{The
  choice of $S[1,2]$ in the first case is arbitrary. Any $S[a,b]$ with
  $a\neq b$ would be acceptable.}
\begin{equation}\label{T[X,x,y,z,i]}
    T'[\overline{X},x,y,\overline{z},z_{k+1},i] =
     \begin{cases}
       S[1,2] & \text{if }\psi_i(x,y,\overline{z}, \overline{X})\\
       S[p_i(x,y,\overline{z})+z_{k+1}, q_i(x,y,\overline{z})+z_{k+1}]
          & \text{otherwise.}
    \end{cases}
\end{equation}
Next, we define the following height-2 trees, where $\kappa \in \N_+\cup\{\omega\}$ (we consider
the natural order on $\N_+\cup\{\omega\}$ with $n < \omega$ for all $n \in \N_+$):
\begin{align}
\label{eq:T[X,x,y]}
T''[\overline X,x,y] &= r \circ
            \left(
              \begin{array}{l}\displaystyle
               \biguplus \{S[e_1,e_2] \mid e_1\neq e_2\} \ \uplus  \\
               \displaystyle
               \biguplus \{ T'[\overline X,x,y,\overline{z},z_{k+1},i]
                     \mid \overline{z}\in \N^k_+, z_{k+1}\in \N_+, 1\leq i\leq \ell\}
             \end{array}
            \right)^{\aleph_0}\\
\label{eq:U_kappa}
U''[\kappa] &= r \circ
            \left(
               \biguplus \{S[e_1,e_2] \mid e_1\neq e_2\}
               \uplus
               \biguplus \{S[e,e] \mid \kappa\le e < \omega\}
            \right)^{\aleph_0} .
\end{align}
Note that all the trees $T''[\overline X,x,y]$ and $U''[\kappa]$ are
build from trees of the form $S[e_1,e_2]$. Furthermore, if $S[e,e]$
appears as a building block, then $S[e+a,e+a]$ also appears as one for
all $a\in\N$. In addition, any building block $S[e_1,e_2]$ appears
either infinitely often or not at all. 
In this sense, $U''[\kappa]$ encodes the set of pairs
$\{(e_1,e_2)\mid e_1\neq e_2\}\cup\{(e,e)\mid \kappa\le e<\omega\}$
and $T''[\overline X,x,y]$ encodes  the set of pairs $\{(e_1,e_2)\mid
e_1\neq e_2\}\cup$ $\{(p_i(x,y,\overline z)+z_{k+1},q_i(x,y,\overline
z)+z_{k+1})\mid 1\le i\le \ell,x,y,z_{k+1}\in\N_+,\overline z\in
\N_+^k\}$. These observations allow to prove the following:

\begin{lemma}\label{lem:h3.T[X,x,y]}
  Let $\overline X\in(2^{\N_+})^n$ and $x,y\in\N_+$. Then the
  following hold:
  \begin{enumerate}[(a)]
  \item $T''[\overline X,x,y] \cong U''[\kappa]$ for some $\kappa\in
    \N_+\cup \{\omega\}$
  \item $T''[\overline X,x,y] \cong U''[\omega]$ if and only if
    $\varphi_{-1}(x,y,\overline X)$ holds\label{(b)}
  \end{enumerate}
\end{lemma}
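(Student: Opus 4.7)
The plan is to exploit the fact that both $T''[\overline X,x,y]$ and $U''[\kappa]$ are $\aleph_0$-th powers of a forest under a common root, so two such trees are isomorphic exactly when the sets of isomorphism types of subtrees rooted at children of the root coincide (since each type, if present, then automatically appears $\aleph_0$ times). A second preliminary observation is that, because $C$ in \eqref{eq:C} is injective, the height-$1$ trees $S[e_1,e_2]$ and $S[e_1',e_2']$ are isomorphic if and only if $(e_1,e_2)=(e_1',e_2')$, so each such isomorphism type may be identified with its index pair.

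Under this identification, the set of child-subtree isomorphism types of $U''[\kappa]$ is
\[
U_\kappa \;=\; \{(e_1,e_2)\in\N_+^2 \mid e_1\neq e_2\} \;\cup\; \{(e,e)\mid \kappa\le e<\omega\},
\]
while the one for $T''[\overline X,x,y]$ is $\{(e_1,e_2)\mid e_1\neq e_2\}\cup D$, where $D$ collects the pairs contributed by the trees $T'[\overline X,x,y,\overline z,z_{k+1},i]$. The off-diagonal pairs in $D$ and the pair $(1,2)$ (arising when some $\psi_i$ holds) are already absorbed into the universal part $\{(e_1,e_2)\mid e_1\neq e_2\}$, so the only relevant contribution is the diagonal part
\[
D_{\mathrm{diag}} \;=\; \{(e,e)\mid \exists i,\overline z,z_{k+1}\colon p_i(x,y,\overline z)=q_i(x,y,\overline z)=e-z_{k+1}\ \wedge\ \neg\psi_i(x,y,\overline z,\overline X)\}.
\]

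The key observation is that $D_{\mathrm{diag}}$ is upward closed in $\N_+$: if $(e,e)\in D_{\mathrm{diag}}$ is witnessed by $(i,\overline z,z_{k+1})$, then $(e+1,e+1)\in D_{\mathrm{diag}}$ is witnessed by $(i,\overline z,z_{k+1}+1)$. Hence $D_{\mathrm{diag}}=\{(e,e)\mid e\ge\kappa\}$ for $\kappa:=\min\{e\mid (e,e)\in D_{\mathrm{diag}}\}\in\N_+\cup\{\omega\}$ (with the usual convention $\min\emptyset=\omega$). For this $\kappa$ the set of child-subtree types of $T''[\overline X,x,y]$ coincides with $U_\kappa$, and part~(a) follows from the $\aleph_0$-power remark.

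For part~(b), $\kappa=\omega$ holds if and only if $D_{\mathrm{diag}}=\emptyset$, i.e., no triple $(i,\overline z)$ satisfies $p_i(x,y,\overline z)=q_i(x,y,\overline z)$ together with $\neg\psi_i(x,y,\overline z,\overline X)$; this is precisely $\varphi_{-1}(x,y,\overline X)$. Conversely, if $\kappa<\omega$ then $S[\kappa,\kappa]$ is realized as a child-subtree of $U''[\kappa]\cong T''[\overline X,x,y]$ but, by the injectivity of $C$, not of $U''[\omega]$, so the two trees are non-isomorphic. The only real point to be careful about is the upward-closure argument for $D_{\mathrm{diag}}$, which depends on the presence of the auxiliary variable $z_{k+1}$ added uniformly to both polynomials in~\eqref{T[X,x,y,z,i]}; once this is observed, the rest is bookkeeping.
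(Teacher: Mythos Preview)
Your proof is correct and follows essentially the same approach as the paper: both identify the child-subtree types with pairs $(e_1,e_2)$ via the injectivity of $C$, observe that the diagonal part contributed by the $T'$-trees is upward closed thanks to the auxiliary variable~$z_{k+1}$, and conclude by taking the minimum. Your presentation is somewhat more explicit in isolating the ``$\aleph_0$-power implies only the set of types matters'' principle and in formalizing the set $D_{\mathrm{diag}}$, but the substance of the argument is the same.
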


\begin{proof}
  Let us start with the second property.  Suppose
  $\varphi_{-1}(x,y,\overline X)$ holds. Let $\overline{z}\in
  \N^{k}_+$, $z_{k+1} \in \N$, and $1\leq i\leq \ell$. Since
  $p_i(x,y,\overline z)\neq q_i(x,y,\overline z)$, there are natural
  numbers $e_1\neq e_2$ with 
  $T'[\overline X,x,y,\overline z, z_{k+1},i]=S[e_1,e_2]$. 
  Hence $T''[\overline X,x,y]\cong U''[\omega]$.

  Conversely, suppose $T''[\overline X,x,y]\cong U_\omega$. Let $\overline
  z\in\N^k$, $z_{k+1}\in\N$, and $1\le i\le\ell$. Then
  $T'[\overline X,x,y,\overline z,z_{k+1},i]$ is a height-2 subtree of
  $T''[\overline X,x,y]\cong U''[\omega]$. Hence there are natural numbers
  $e_1\neq e_2$ with $T'[\overline X,x,y,\overline z,z_{k+1},i]\cong
  S[e_1,e_2]$. By \eqref{T[X,x,y,z,i]}, this implies
  $p_i(x,y,\overline z)\neq q_i(x,y,\overline z) \lor \psi_i(x,y, \overline z, \overline X)$. Hence we showed that $\forall \overline{z}:
  \bigwedge_{i=1}^\ell p_i(x,y,\overline{z}) \neq
  q_i(x,y,\overline{z}) \vee \psi_i(x,y,\overline{z},\overline X)$ holds.

  Now it suffices to prove the first statement in case
  $\varphi_{-1}(x,y,\overline X)$ does not hold. Then there exist some
  $\overline z\in\N_+^k$ and $1\le i \le \ell$ with 
  \[
    p_i(x,y,\overline z)=q_i(x,y,\overline z)\land
     \neg\psi_i(x,y,\overline z,\overline X)\,.
  \]
  Hence there is some $e\in\N_+$ such that $S[e,e]$ appears in the
  definition of $T''[\overline X,x,y]$. Let $m=\min\{e\in\N_+\mid
  S[e,e]\text{ appears in }T''[\overline X,x,y]\}$. Then, for all
  $a\in\N$, also $S[m+a,m+a]$ appears in $T''[\overline X,x,y]$. 
  Hence $T''[\overline X,x,y]\cong U''[m]$. \qed
\end{proof}
In a next step, we collect the trees $T''[\overline X,x,y]$ and
$U''[\kappa]$ into the trees $T_0[\overline X,x]$, $U_0[0]$, and
$U_0[1]$ as follows:
\begin{align*}
  T_0[\overline X,x] &=
    r\circ
     \left(
      \biguplus\{U''[m]\mid m\in \N_+\}
      \uplus
      \biguplus\{T''[\overline X,x,y]\mid y\in \N_+\}
     \right)^{\aleph_0}\\
  U_0[0] &=
    r\circ
     \left(
      \biguplus\{U''[\kappa]\mid \kappa\in \N_+ \cup \{\omega\}\}
     \right)^{\aleph_0}\\
  U_0[1] &=
    r\circ
     \left(
      \biguplus\{U''[m]\mid m\in \N_+\}
     \right)^{\aleph_0}
\end{align*}
By Lemma~\ref{lem:h3.T[X,x,y]}(a), these trees are build from copies
of the trees $U''[\kappa]$ (and are therefore of height~3), each
appearing either infinitely often or not at all. 

\begin{lemma}\label{lem:h4+.T_3}
  Let $\overline X \in (2^{\N_+})^n$ and $x\in \N_+$. Then
  \[
      T_0[\overline X, x] \cong 
      \begin{cases}
        U_{0}[0] & \text{ if }\varphi_{0}(x,\overline X)
                        \text{ holds and}\\
        U_{0}[1] & \text{ otherwise.}
      \end{cases}
  \]
\end{lemma}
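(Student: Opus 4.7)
The plan is to compare the isomorphism types of the height-2 subtrees hanging off the root in $T_0[\overline X,x]$, $U_0[0]$, and $U_0[1]$, using Lemma~\ref{lem:h3.T[X,x,y]} to identify each $T''[\overline X,x,y]$ with some $U''[\kappa]$, and then invoke the $\aleph_0$-power construction to conclude that only the \emph{set} of appearing isomorphism types matters.

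More precisely, I would first observe the general principle: for any family $\{V_i\}_{i\in I}$ of trees whose isomorphism types all lie in some set $\mathcal{T}$, the tree $r\circ(\biguplus_{i\in I}V_i)^{\aleph_0}$ depends (up to isomorphism) only on $\{[V]\mid V\in\mathcal{T}, V\cong V_i \text{ for some }i\in I\}$, because in the $\aleph_0$-power every represented isomorphism type occurs exactly $\aleph_0$ times. Applied here with $\mathcal{T}=\{U''[\kappa]\mid\kappa\in\N_+\cup\{\omega\}\}$, and using Lemma~\ref{lem:h3.T[X,x,y]}(a) which tells us that every $T''[\overline X,x,y]$ is isomorphic to some $U''[\kappa]$, the isomorphism type of $T_0[\overline X,x]$ is determined by the set
\[
  K(\overline X,x)=\{m\in\N_+\}\cup\{\kappa\in\N_+\cup\{\omega\}\mid\exists y\in\N_+: T''[\overline X,x,y]\cong U''[\kappa]\}.
\]

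Next I would split into two cases. If $\varphi_0(x,\overline X)$ holds, i.e., there exists $y\in\N_+$ with $\varphi_{-1}(x,y,\overline X)$, then by Lemma~\ref{lem:h3.T[X,x,y]}(\ref{(b)}) we have $T''[\overline X,x,y]\cong U''[\omega]$, so $\omega\in K(\overline X,x)$, hence $K(\overline X,x)=\N_+\cup\{\omega\}$, yielding $T_0[\overline X,x]\cong U_0[0]$. Conversely, if $\varphi_0(x,\overline X)$ fails, then by Lemma~\ref{lem:h3.T[X,x,y]}(\ref{(b)}) no $T''[\overline X,x,y]$ is isomorphic to $U''[\omega]$, so $\omega\notin K(\overline X,x)$, giving $K(\overline X,x)=\N_+$ and $T_0[\overline X,x]\cong U_0[1]$.

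The only step requiring a little care is the general principle above: I would make it precise by exhibiting an explicit isomorphism that, on each isomorphism type appearing in the $\aleph_0$-power, uses a bijection between two countably infinite index sets, and then takes the disjoint union of the chosen isomorphisms $V_i\to V_j$ between trees of the same type. This is straightforward since $\aleph_0+\aleph_0=\aleph_0$ and all relevant multiplicities are $\aleph_0$. No other nontrivial obstacle arises: everything else is bookkeeping given Lemma~\ref{lem:h3.T[X,x,y]}.
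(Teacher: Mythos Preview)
Your proposal is correct and follows essentially the same approach as the paper's proof: both use Lemma~\ref{lem:h3.T[X,x,y]} to identify each $T''[\overline X,x,y]$ with some $U''[\kappa]$, and then argue that the $\aleph_0$-power makes only the \emph{set} of occurring types $\{U''[\kappa]\}$ relevant, splitting on whether $U''[\omega]$ appears. The only difference is presentational: you state the ``$\aleph_0$-power depends only on the set of types'' principle explicitly up front, whereas the paper leaves it implicit and argues the two cases more tersely (starting from $T_0\cong U_0[0]$ or $T_0\not\cong U_0[0]$ rather than from whether $\varphi_0$ holds).
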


\begin{proof}
  If $T_0[\overline X,x]\cong U_0[0]$, then there must be some $y\in
  \N_+$ such that $T''[\overline X,x,y]\cong U''[\omega]$. By
  Lemma~\ref{lem:h3.T[X,x,y]}(b), this means that
  $\varphi_0(x,\overline{X})$ holds.

  On the other hand, suppose $T_0[\overline X,x]\not\cong
  U_0[0]$. Then $T''[\overline X,x,y]\not\cong U''[\omega]$ for all
  $y\in\N_+$. From Lemma~\ref{lem:h3.T[X,x,y]}(b) again, we obtain
  for all $y\in \N_+$:
  $T''[\overline X,x,y]\cong U''[m_y]$ for some $m_y\in\N_+$. 
  Hence $T_0[\overline X,x]\cong U_0[1]$ in this case.
  \qed
\end{proof}
Now, we come to the induction step in the construction of our trees.
Suppose that for some $0\leq m <n$ we have height-$(m+3)$ trees
$T_m[X_1,\dots,X_{n-m},x]$, $U_m[0]$ and $U_m[1]$ satisfying
\eqref{eq:invariant}. Let $\overline X$ stand for
$(X_1,\dots,X_{n-m-1})$ and let $\alpha=m\bmod 2$. We define the
following height-$(m+4)$ trees:
\begin{eqnarray*}
T_{m+1}[\overline X,x] &=&
\displaystyle
  r\circ \left( U_{m}[\alpha]
    \uplus\biguplus\big\{T_{m}[\overline X,X_{n-m},x] \mid
    X_{n-m}\subseteq \N_+\big\}\right)^{2^{\aleph_0}}\\
U_{m+1}[i] &=&
  r\circ \left( U_{m}[\alpha] \uplus U_{m}[i] \right)^{2^{\aleph_0}}
  \text{ for }i \in \{0,1\}
\end{eqnarray*}
Note that the trees $T_{m+1}[\overline X,x]$, $U_{m+1}[0]$, and
$U_{m+1}[1]$ consist of $2^{\aleph_0}$ many copies of $U_m[\alpha]$
and possibly $2^{\aleph_0}$ many copies of $U_m[1-\alpha]$. 

\begin{lemma} \label{lem:h4+.T_m} Let $X_1,\ldots,X_{n-m-1}\subseteq
  \N_+$ and $x\in \N_+$. Then
  \[
      T_{m+1}[X_1,\ldots,X_{n-m-1},x] \cong 
      \begin{cases}
        U_{m+1}[0] & \text{ if }\varphi_{m+1}(x,X_1,\ldots X_{n-m-1})
                        \text{ holds}\\
        U_{m+1}[1] & \text{ otherwise.}
      \end{cases}
  \]
\end{lemma}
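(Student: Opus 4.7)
My plan is to show that the construction of $T_{m+1}[\overline X, x]$ implements exactly the quantifier step $Q_{n-m}X_{n-m}$. First I would record that, since the quantifiers $Q_1,\dots,Q_n$ alternate with $Q_n=\forall$, we have $Q_{n-m}=\forall$ precisely when $m$ is even and $Q_{n-m}=\exists$ precisely when $m$ is odd; hence $\alpha = m\bmod 2$ records which case we are in, and $\varphi_{m+1}(x,\overline X)$ unfolds to $\forall X_{n-m}\,\varphi_m(x,\overline X,X_{n-m})$ when $\alpha=0$ and to $\exists X_{n-m}\,\varphi_m(x,\overline X,X_{n-m})$ when $\alpha=1$.

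Next I would observe that the three trees $T_{m+1}[\overline X, x]$, $U_{m+1}[0]$, and $U_{m+1}[1]$ all have the form $r\circ F^{2^{\aleph_0}}$ where $F$ is a disjoint union of copies of $U_m[0]$ and $U_m[1]$; for $T_{m+1}$ this uses the induction hypothesis, by which every summand $T_m[\overline X,X_{n-m},x]$ is isomorphic to some $U_m[j]$. Because $\kappa\cdot 2^{\aleph_0}=2^{\aleph_0}$ for every $1\le\kappa\le 2^{\aleph_0}$, each $U_m[j]$ appears as a child subtree of the root of any such tree either $0$ or $2^{\aleph_0}$ times. A direct bijection argument between children of the two roots then shows that two such trees are isomorphic iff the \emph{set} of isomorphism types of $U_m[j]$ actually occurring at level~$1$ coincides; this tacitly uses $U_m[0]\not\cong U_m[1]$, which I would carry through the induction by the same level-$1$ subtree analysis (at the base it comes from the fact that $U''[\omega]$ appears at level $1$ in $U_0[0]$ but not in $U_0[1]$).

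It then remains to match the occurring $U_m[j]$ in each tree. For $U_{m+1}[i]$ this is immediate from the definition: $U_{m+1}[\alpha]$ has only $U_m[\alpha]$ at level~$1$, while $U_{m+1}[1-\alpha]$ has both. For $T_{m+1}[\overline X, x]$, the explicit summand $U_m[\alpha]$ ensures that $U_m[\alpha]$ always occurs at level~$1$, and by the induction hypothesis $U_m[1-\alpha]$ occurs at level~$1$ iff some $T_m[\overline X,X_{n-m},x]$ is isomorphic to $U_m[1-\alpha]$, i.e.\ iff some $X_{n-m}$ makes $\varphi_m(x,\overline X,X_{n-m})$ take the truth value opposite to the one encoded by $\alpha$. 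Unwinding this in both parity cases, the presence of $U_m[1-\alpha]$ is in each case equivalent to $\neg\varphi_{m+1}(x,\overline X)$, yielding $T_{m+1}[\overline X, x]\cong U_{m+1}[0]$ iff $\varphi_{m+1}$ holds, as required. The main obstacle is the structural reduction in the second paragraph, but the $2^{\aleph_0}$-power used in the construction is tailored to collapse multiplicities to a $\{0,2^{\aleph_0}\}$ dichotomy and thus to make the reduction go through cleanly.
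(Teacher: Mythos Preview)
Your proposal is correct and follows essentially the same route as the paper: both argue via the induction hypothesis that every level-$1$ subtree of $T_{m+1}[\overline X,x]$ is isomorphic to $U_m[0]$ or $U_m[1]$, and then use the $2^{\aleph_0}$-power to collapse multiplicities so that the isomorphism type is determined by which of the two actually occur. The paper presents this as a direct case split on the parity of $m$ and the truth value of $\varphi_{m+1}$ rather than via your ``set of occurring types'' reformulation, and it does not invoke $U_m[0]\not\cong U_m[1]$ here (the lemma holds vacuously if they are isomorphic), but these are presentational differences only.
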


\begin{proof}
  We have to handle the cases of odd and even $m$ separately and start
  assuming $m$ to be even (i.e., $\alpha=0$) such that the outermost
  quantifier $Q_{n-m}$ of the formula $\varphi_{m+1}(x,X_1,\dots,X_{n-m-1})$ is
  universal.

  Suppose that $\varphi_{m+1}(X_1,\dots,X_{n-m-1},x)$ holds. Then, by
  the inductive hypothesis, for each $X_{n-m}\subseteq\N_+$,
  $T_m[X_1,\ldots,X_{n-m},x]\cong U_m[0]$.  Hence all height-$(m+3)$
  subtrees of $T_{m+1}[X_1,\ldots,X_{n-m-1},x]$ are isomorphic to
  $U_m[0]$ and thus
  \[
     T_{m+1}[X_1,\ldots,X_{n-m-1},x]\cong 
         r\circ U_m[0]^{2^{\aleph_0}} = U_{m+1}[0]\,.
  \]
  On the other hand, suppose that
  $\neg\varphi_{m+1}(X_1,\dots,X_{n-m-1},x)$ holds. Then there exists
  some set $X_{n-m}$ such that $\neg\varphi_m(X_1,\dots,X_{n-m},x)$ is
  true. Hence, by the induction hypothesis, 
  \[
   T_m(X_1,\dots,X_{n-m},x)\cong U_m[1],
  \]
  i.e., $T_{m+1}(X_1,\dots,X_{n-m-1},x)$ contains one (and therefore
  $2^{\aleph_0}$ many) height-$(m+3)$ subtrees isomorphic to
  $U_m[1]$. This implies $T_{m+1}(X_1,\dots,X_{n-m-1},x)\cong
  U_{m+1}[1]$ since $m$ is even.

  The arguments for $m$ odd are very similar and therefore left to the
  reader.  \qed
\end{proof}
The following lemma follows from Lemma~\ref{lem:h4+.T_m} with $m=n$
and the fact that $\varphi_n(x)$ holds if and only if $x\in A$.

\begin{lemma}\label{lem:h4+.reduction}
  For all $x \in \N_+$, we have $T_n[x] \cong U_n[0]$ if $x\in A$ and
  $T_n[x]\cong U_n[1]$ otherwise.
\end{lemma}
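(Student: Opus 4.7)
The plan is to obtain Lemma~\ref{lem:h4+.reduction} as the specialization of the invariant~\eqref{eq:invariant} to $m=n$. That invariant, for general $0\le m\le n$, is established by induction on $m$: the base case $m=0$ is Lemma~\ref{lem:h4+.T_3} (whose proof rests on Lemma~\ref{lem:h3.T[X,x,y]}), and the inductive step from $m$ to $m+1$ is exactly Lemma~\ref{lem:h4+.T_m}. Thus for every $0\le m\le n$, every $\overline X\in(2^{\N_+})^{n-m}$, and every $x\in\N_+$, we have $T_m[\overline X,x]\cong U_m[0]$ if $\varphi_m(x,\overline X)$ holds and $T_m[\overline X,x]\cong U_m[1]$ otherwise.

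Specializing this to $m=n$, the parameter tuple $\overline X$ lives in $(2^{\N_+})^{0}$ and is therefore the empty tuple, so the invariant reads: for every $x\in\N_+$, $T_n[x]\cong U_n[0]$ if $\varphi_n(x)$ holds and $T_n[x]\cong U_n[1]$ otherwise. It then suffices to observe that $\varphi_n(x)\Leftrightarrow x\in A$. Unrolling the recursive definition of the formulas $\varphi_{-1},\varphi_0,\ldots,\varphi_n$, one sees that $\varphi_n(x)$ is by construction
\begin{equation*}
Q_1X_1\cdots Q_nX_n\ \exists y\ \forall\overline z:\bigwedge_{i=1}^{\ell} p_i(x,y,\overline z)\neq q_i(x,y,\overline z)\vee\psi_i(x,y,\overline z,X_1,\ldots,X_n),
\end{equation*}
which is precisely the normal form for $A$ supplied by Proposition~\ref{prp:normal.form}. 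Hence $\varphi_n(x)$ holds if and only if $x\in A$, and combining this with the previous paragraph yields the lemma.

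Since the technical content of the argument has already been carried out in the preceding lemmas, this concluding step is purely a bookkeeping matter and presents no genuine obstacle; the only point to be careful about is the degeneracy of the ``empty tuple'' case when $n-m=0$, which is handled transparently by reading $(2^{\N_+})^{0}$ as a singleton.
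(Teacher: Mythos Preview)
Your proposal is correct and matches the paper's own argument: the paper simply notes that the lemma follows from Lemma~\ref{lem:h4+.T_m} (together with the base case Lemma~\ref{lem:h4+.T_3}, i.e., the inductively established invariant~\eqref{eq:invariant}) specialized to $m=n$, combined with the fact that $\varphi_n(x)$ holds if and only if $x\in A$. Your write-up just makes the induction and the empty-tuple degeneracy explicit, which is fine.
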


\subsection{Injective $\omega$-automaticity}
\label{sec:h4.automata}

Injectively $\omega$-automatic presentations of the trees $T_m[\overline
X,x]$, $U_m[0]$, and $U_m[1]$ will be constructed inductively. Note
that the construction of $T_{m+1}[\overline X,x]$ involves all the
trees $T_m[\overline X,X_{n-m},x]$ for $X_{n-m}\subseteq\N_+$. Hence
we need \emph{one single injectively $\omega$-automatic presentation}
for the forest consisting of all these trees. Therefore, we will deal
with forests. To move from one forest to the next, we will always
proceed as follows: add a set of new roots and connect them to some of
the old roots \emph{which results in a directed acyclic graph} (or
dag) and not necessarily in a forest. The next forest will then be the
unfolding of this dag.

The {\em height} of a dag $D$ is the length (number of edges) of a
longest directed path in~$D$. We only consider dags of finite
height. A {\em root} of a dag is a node without incoming edges. A dag
$D=(V,E)$ can be unfolded into a forest $\unfold(D)$ in the usual way:
Nodes of $\unfold(D)$ are directed paths in $D$ that start in a root
and the order relation is the prefix relation between these paths. For
a root $v\in V$ of $D$, we define the tree $\unfold(D,v)$ as the
restriction of $\unfold(D)$ to those paths that start in $v$. We will
make use of the following lemma whose proof is based on the immediate
observation that the set of convolutions of paths in $D$ is again a
regular $\omega$-language.

\begin{lemma}\label{lem:dag}
  From a given $k\in \N$ and an injectively $\omega$-automatic
  presentation for a dag $D$ of height at most $k$, one can construct
  effectively an injectively $\omega$-automatic presentation for
  $\unfold(D)$ such that the roots of $\unfold(D)$ coincide with the
  roots of~$D$ and $\unfold(D,r)=(\unfold(D))(r)$ for any root~$r$.
\end{lemma}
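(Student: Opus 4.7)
\noindent
\textbf{Proof plan for Lemma~\ref{lem:dag}.}
The plan is to represent each node of $\unfold(D)$ by the convolution of the $(k{+}1)$-tuple of dag-nodes that make up the corresponding path, using a padding scheme to handle paths of different lengths uniformly. Let $P=(\Gamma,M,M_\equiv,M_E)$ be the given injectively $\omega$-automatic presentation of $D=(V,E)$, so that $M_\equiv$ recognizes the identity on $\Gamma^\omega$ and $L(M)\subseteq\Gamma^\omega$ is the domain. Since $D$ has height at most $k$, every directed path starting in a root has at most $k+1$ vertices. Introduce a fresh padding symbol $\$\notin\Gamma$ and set $\Gamma'=\Gamma\cup\{\$\}$. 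A node of $\unfold(D)$ corresponding to a path $(v_0,v_1,\ldots,v_\ell)$ with $\ell\le k$ will be encoded as the $\omega$-word $v_0\otimes v_1\otimes\cdots\otimes v_\ell\otimes\$^\omega\otimes\cdots\otimes\$^\omega$ over $(\Gamma')^{k+1}$, with $k-\ell$ trailing components equal to $\$^\omega$.

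Next I would write down B\"uchi automata for the three required components. For the domain, the automaton accepts exactly those $(k{+}1)$-tuples whose first component is a root of $D$ and in which, reading from left to right, the non-padding components form a path in $D$; padding is only allowed as a contiguous suffix. The property ``$v_0$ is a root of $D$'' is first-order expressible over $D$, so by Theorem~\ref{thm:extFOaut} it is recognized by a B\"uchi automaton; the edge condition between consecutive non-padding components uses (a product with) $M_E$; and the padding discipline is a trivial regular constraint. The equality relation $M'_\equiv$ is the identity on $((\Gamma')^{k+1})^\omega$, which makes the presentation injective (since each path is encoded by exactly one tuple, using that the presentation of $D$ is already injective). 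Finally, for the edge relation of $\unfold(D)$ an automaton must check that two tuples $(u_0,\ldots,u_k)$ and $(v_0,\ldots,v_k)$ agree on some prefix of length $\ell+1$, that $u_{\ell+1}=\$^\omega$, that $v_{\ell+1}\neq\$^\omega$ and $(u_\ell,v_{\ell+1})\in E$, and that all further components of $v$ equal $\$^\omega$; this is again a regular condition, obtained by a finite disjunction over the position $\ell\in\{0,\dots,k-1\}$.

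For the effectiveness part, each of these automata is built by a straightforward product construction from $M$, $M_E$, and (via Theorem~\ref{thm:extFOaut}) the automaton for the ``root'' predicate, with bookkeeping components to track $\ell$; since $k$ is fixed in the input, the resulting presentation of $\unfold(D)$ is effectively computable. The identification of the roots is built into the encoding: a root $r$ of $D$, represented in $P$ by $r\in L(M)$, corresponds in the new presentation to the tuple $r\otimes\$^\omega\otimes\cdots\otimes\$^\omega$, which is by construction a root of $\unfold(D)$; with this identification the equality $\unfold(D,r)=(\unfold(D))(r)$ is immediate from the definition of the unfolding.

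The only genuinely delicate point will be the bookkeeping in the automata: one must make sure that the interaction between the padding suffix, the first-order ``root'' test on the first track, and the synchronous edge tests between adjacent tracks is carried out uniformly in $k$, and that the resulting construction preserves injectivity of the presentation. Everything else is routine product constructions, and the bound on the height of $D$ is precisely what allows a fixed-arity convolution to suffice, so no unbounded book-keeping (which would take us outside the class of regular $\omega$-languages) is ever required.
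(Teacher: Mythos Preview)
Your proposal is correct and follows essentially the same approach as the paper: encode each path $(v_0,\ldots,v_m)$ starting at a root by the convolution of its vertices, using the bound $m\le k$ to get a fixed-arity encoding, and observe that the resulting domain and edge relation are $\omega$-automatic. The only difference is cosmetic: the paper simply writes the universe as the set of convolutions $v_0\otimes\cdots\otimes v_m$ and leaves the padding of shorter paths implicit, whereas you spell out the padding with a fresh symbol~$\$$ and the associated bookkeeping; your more explicit treatment is in fact what is needed to make the paper's one-line argument precise.
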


\begin{proof}
  Let $D =(V,E) = \S(P)$, i.e., $V$ is an $\omega$-regular language and the
  binary relation $E\subseteq V\times V$ is $\omega$-automatic. The
  universe for our injectively $\omega$-automatic copy of $\unfold(D)$
  is the set $L$ of all convolutions $v_0\otimes v_1\otimes v_2\otimes
  \cdots \otimes v_m$, where $v_0$ is a root and $(v_i,v_{i+1})\in E$
  for all $0\leq i<m$. Since the dag $D$ has height at most $k$, we have
  $m\leq k$. Since the edge relation of $D$ is $\omega$-automatic and
  since the set of all roots in $D$ is $\FO$-definable and hence
  $\omega$-regular by Theorem\ref{thm:extFOaut}, $L$ is indeed an
  $\omega$-regular set. Moreover, the edge relation of $\unfold(D)$
  becomes clearly $\omega$-automatic on~$L$. \qed
\end{proof}
For a symbol $a$ and a tuple $\overline{e}=(e_1,\ldots,e_k)\in
\N^k_+$, we write $a^{\overline{e}}$ for the $\omega$-word
\[
    a^{e_1} \otimes a^{e_2} \otimes \cdots \otimes a^{e_k}=
    (a^{e_1}\diamond^\omega) \otimes (a^{e_2}\diamond^\omega) \otimes \cdots \otimes (a^{e_k}\diamond^\omega)\,.
\]
For an $\omega$-language $L$, we write $\otimes_k(L)$ for
$\otimes(L^k)$. The following lemma was shown in \cite{KuLiLo10}
for finite words instead of $\omega$-words.

\begin{lemma}\label{lem:polynomial.automaton}
  Given a non-zero polynomial $p(\overline{x}) \in \N[\overline{x}]$
  in $k$ variables, one can effectively construct a B\"uchi automaton
  $\B[p(\overline{x})]$ over the alphabet $\{a,\diamond\}^k$ with
  $L(\B[p(\overline{x})]) = \otimes_k(a^+)$ such that for all
  $\overline{c}\in \N^k_+: \B[p(\overline{x})]$ has exactly
  $p(\overline{c})$ accepting runs on input $a^{\overline{c}}$.
\end{lemma}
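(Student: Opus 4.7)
The plan is to mimic the finite-word construction of \cite{KuLiLo10} in the $\omega$-word setting, proceeding by induction on the structure of $p$ regarded as a term built from constants in $\N_+$, the projections $x_1,\dots,x_k$, addition, and multiplication. The two closure operations on B\"uchi automata recalled in the preliminaries do the heavy lifting: the disjoint union $\uplus$ adds run-counts and the flag (Choueka) construction multiplies them. What remains is to supply appropriate base automata and to verify that B\"uchi acceptance in the $\omega$-setting does not inflate these counts.

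First I would build a deterministic base automaton $\D$ over $\{a,\diamond\}^k$ recognising $\otimes_k(a^+)$, whose state records, for each coordinate $i\in\{1,\dots,k\}$, whether the $a$-prefix of that coordinate has already terminated. States in which every coordinate has seen at least one $a$ and is now reading $\diamond$ are declared final; any coordinate that reads an $a$ after a $\diamond$, or fails to read any $a$, routes to a sink. Because $\D$ is deterministic, every accepted input admits exactly one accepting run, so $\D$ realises the constant polynomial $1$, and $c$ disjoint copies realise the constant $c\in\N_+$.

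Next, for the projection $p(\overline x)=x_i$ I would attach to $\D$ a binary flag $f\in\{\mathsf{u},\mathsf{m}\}$ starting at $\mathsf{u}$. At each transition where the $i$-th input coordinate is $a$, the automaton may nondeterministically flip $f$ from $\mathsf{u}$ to $\mathsf{m}$; once $\mathsf{m}$, the flag is absorbing. The final states of $\B[x_i]$ are those that are final in $\D$ and carry $f=\mathsf{m}$. On input $a^{\overline c}$ the $i$-th track offers exactly $c_i$ many $a$-positions at which the mark can be planted, each yielding a distinct accepting run, so there are precisely $c_i$ accepting runs. A general monomial $c\cdot x_{i_1}\cdots x_{i_s}$ is then handled by repeated Choueka intersection of the corresponding constant and projection automata, and an arbitrary non-zero $p\in\N[\overline x]$ by taking the disjoint union of the monomial automata; the counting invariant is preserved at each step by the preliminaries, and the underlying language remains $\otimes_k(a^+)$ throughout.

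The only point that is genuinely new relative to the finite-word case of \cite{KuLiLo10}, and the step I would verify most carefully, is that B\"uchi acceptance tracks the combinatorial choices one-for-one rather than multiplying them by infinitely many spurious continuations in the $\diamond^\omega$ tail. The key observation is that every run of $\D$ eventually becomes trapped in a single absorbing final state and makes no further choices; combined with the fact that the marking flag is absorbing and required for final states, this ensures that a run of $\B[x_i]$ is accepting iff it marks at exactly one of the $c_i$ available $a$-positions. Given this, the preservation of run counts under $\uplus$ and under the Choueka intersection (which both rely on each run having only finitely many branching points relevant to acceptance) transfers directly from the preliminaries, and the induction goes through without further complication.
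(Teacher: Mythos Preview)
Your proposal is correct and follows essentially the same approach as the paper: induction on the term structure of $p$, with base cases $1$ and $x_i$, disjoint union for addition, and the flag (Choueka) construction for multiplication. The paper's proof is a terse four lines that merely asserts the base automata are ``easily built,'' whereas you spell out a concrete deterministic automaton for the constant $1$ and an explicit marking-flag construction for $x_i$, and you address the one genuinely $\omega$-specific concern (that the $\diamond^\omega$ tail does not generate spurious accepting continuations); this additional care is appropriate but does not constitute a different route.
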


\begin{proof}
  B\"uchi automata for the polynomials $p(\overline x)=1$ and
  $p(\overline x)=x_i$ are easily build. Inductively, let
  $\B[p_1(\overline x)+p_2(\overline x)]$ be the disjoint union of
  $\B[p_1(\overline{x})]$ and $\B[p_2(\overline{x})]$ and let
  $\B[p_1(\overline x)\cdot p_2(\overline x)]$ be obtained from
  $\B[p_1(\overline{x})]$ and $\B[p_2(\overline{x})]$ by the flag
  construction.\qed
\end{proof}
For $X\subseteq\N_+$, let $w_X\in\{0,1\}^*$ be the
characteristic word (i.e., $w_X[i]=1$ if and only if $i \in X$) and,
for $\overline X = (X_1,\ldots,X_n)\in(2^{\N_+})^n$, write
$w_{\overline X}$ for the convolution of the words~$w_{X_i}$.

\begin{lemma}\label{lem:h3.A_psi}
  From a given Boolean combination
  $\psi(x_1,\ldots,x_m,X_1,\ldots,X_n)$ of set constraints on set
  variables $X_1,\ldots,X_n$ and individual variables $x_1,\ldots,x_m$
  one can construct effectively a deterministic B\"uchi automaton
  $\A_{\psi}$ over the alphabet $\{0,1\}^n\times\{a,\diamond\}^m$ such
  that for all $X_1,\ldots,X_n\subseteq \N_+, \overline{c}\in \N^m_+$,
  the following holds:
  \[
      w_{X_1}\otimes\cdots \otimes w_{X_n} \otimes a^{\overline{c}} \in L(\A_{\psi}) 
      \ \iff \ \psi(\overline{c}, X_1,\ldots,X_n) \text{ holds.}
  \]
\end{lemma}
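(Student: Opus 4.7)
The plan is to build $\A_\psi$ as a deterministic Büchi automaton that, for each atomic set constraint $x_j\in X_i$ occurring in $\psi$, maintains one bit recording the value of $w_{X_i}$ at the most recently seen $a$-position on the $j$-th $a$-track. Since the word $a^{\overline c}$ eventually has only $\diamond$'s on every $a$-track, these bits stabilize after finitely many steps; the acceptance condition then simply checks whether $\psi$ evaluates to true on the stabilized bits.

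Concretely, I would take the state space $Q=\{0,1\}^{n\times m}$, viewing a state as a map $s\colon\{1,\dots,n\}\times\{1,\dots,m\}\to\{0,1\}$, with any fixed initial state (say the all-zero map). On reading a letter $(b_1,\dots,b_n,\alpha_1,\dots,\alpha_m)\in\{0,1\}^n\times\{a,\diamond\}^m$, the transition updates $s(i,j)$ to $b_i$ for every $j$ with $\alpha_j=a$, and leaves $s(i,j)$ unchanged when $\alpha_j=\diamond$. The set $F$ of final states consists of those $s$ for which $\psi$ evaluates to true when the atom $x_j\in X_i$ is interpreted as $s(i,j)=1$ (and $x_j\notin X_i$ as $s(i,j)=0$). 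The transition function is obviously total and deterministic.

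For correctness, consider the input $w=w_{X_1}\otimes\cdots\otimes w_{X_n}\otimes a^{\overline c}$. At every position $p\le c_j$ the $j$-th $a$-track shows $a$, so the bit $s(i,j)$ is overwritten by $w_{X_i}[p]$; at every position $p>c_j$ the $j$-th $a$-track shows $\diamond$, so $s(i,j)$ is left alone. Since $c_j\ge 1$, each bit is set at least once, and after position $\max_j c_j$ no further updates occur. Hence the run stabilizes in the unique state $s^\ast$ with $s^\ast(i,j)=w_{X_i}[c_j]$, which is $1$ exactly when $c_j\in X_i$. By definition of $F$, the run visits $s^\ast$ infinitely often, and it visits $F$ infinitely often iff $s^\ast\in F$, which is equivalent to $\psi(\overline c,X_1,\dots,X_n)$ being true. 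This gives the claimed equivalence.

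There is no real obstacle here: the only conceptual point is that a Büchi automaton cannot detect ``the position of the last $a$'' directly, but because updates are triggered only by $a$'s on the respective $a$-track and the tails of those tracks are $\diamond^\omega$, the state is eventually constant, so the Büchi condition faithfully reads off the value of $\psi$ at the stabilized bit pattern. The effectiveness claim is immediate: $\psi$ is finite and Boolean, so membership in $F$ can be evaluated by truth-table lookup, and the transition function is given by an explicit local rule.
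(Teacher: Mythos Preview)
Your construction is correct, and the argument is complete: the key observation that on inputs of the required shape the run is eventually constant, so the B\"uchi condition collapses to membership of the limit state in $F$, is exactly what is needed.

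Your route differs from the paper's. The paper proceeds compositionally: it first reduces to \emph{positive} Boolean combinations (absorbing a negation $x_j\notin X_i$ by swapping $0$ and $1$ on the $i$-th track), builds a deterministic B\"uchi automaton for a single atom, and then appeals to the effective closure of deterministic B\"uchi languages under union and intersection. You instead build a single monolithic automaton whose state records, for every pair $(i,j)$, the bit $w_{X_i}$ at the last $a$-position on track~$j$, and evaluate $\psi$ on the stabilised bit pattern. Your approach is more explicit (one reads off a $2^{nm}$-state automaton directly, and no reduction to positive form is needed), while the paper's approach is more modular and reuses standard closure properties without re-deriving them. Both are entirely adequate for the lemma as stated.
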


\begin{proof}
  We can assume that $\psi$ is a positive Boolean combination,
  since the $\omega$-word $w_{\N_+\setminus X}$ is simply obtained
  from $w_X$ by exchanging the symbols $0$ and $1$. Then
  the claim is trivial for a single set constraint. Since
  $\omega$-languages accepted by deterministic B\"uchi automata are
  effectively closed under intersection and union, the result follows.\qed
\end{proof}

\begin{lemma}\label{lem:h3.A[varphi]}
  For $1\le i\le\ell$, there exists a B\"uchi-automaton $\A_i$ with
  the following property: For all $\overline X\in (2^{\N_+})^n$,
  $\overline z\in\N_+^k$, and $x,y,z_{k+1}\in\N_+$, the number of
  accepting runs of $\A_i$ on the word $w_{\overline X} \otimes
  a^{(x,y,\overline z,z_{k+1})}$ equals 
  $$
  \begin{cases}
    C(1,2) & \text{ if }   \psi_i(x,y,\overline z, \overline X) \text{ holds}\\
    C(p_i(x,y,\overline z) + z_{k+1}, q_i(x,y,\overline z)+z_{k+1}) &\text{
      otherwise.}
  \end{cases}
  $$
  \end{lemma}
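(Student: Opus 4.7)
The plan is to construct $\A_i$ as a disjoint union of two Büchi automata $\A_i^1$ and $\A_i^2$ that handle the two cases in the definition separately, so that the total run count becomes the sum of their run counts. Concretely, I aim for automata $\A_i^1,\A_i^2$ over the common alphabet $\{0,1\}^n\times\{a,\diamond\}^{k+3}$ such that, on input $w_{\overline X}\otimes a^{(x,y,\overline z,z_{k+1})}$, the automaton $\A_i^1$ has exactly $C(1,2)$ accepting runs if $\psi_i(x,y,\overline z,\overline X)$ holds and $0$ otherwise, while $\A_i^2$ has exactly $C(p_i(x,y,\overline z)+z_{k+1},q_i(x,y,\overline z)+z_{k+1})$ accepting runs if $\psi_i$ fails and $0$ otherwise. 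Setting $\A_i:=\A_i^1\uplus\A_i^2$ then yields the desired run count, since the number of accepting runs of a disjoint union is the sum of the numbers of accepting runs of its components.

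For $\A_i^1$, I view the constant $C(1,2)$ as a nonzero polynomial in the $k+3$ variables $x,y,\overline z,z_{k+1}$ and invoke Lemma~\ref{lem:polynomial.automaton} to obtain a Büchi automaton over $\{a,\diamond\}^{k+3}$ with exactly $C(1,2)$ accepting runs on every input in $\otimes_{k+3}(a^+)$. I then enlarge its alphabet to $\{0,1\}^n\times\{a,\diamond\}^{k+3}$ via the $\Sigma^\omega\otimes M$ construction. In parallel, Lemma~\ref{lem:h3.A_psi} applied to $\psi_i$ yields a deterministic Büchi automaton over $\{0,1\}^n\times\{a,\diamond\}^{k+2}$, whose alphabet I enlarge analogously so that it ignores the $z_{k+1}$-track. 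The flag-construction product of these two automata is $\A_i^1$: since the deterministic factor contributes exactly $1$ accepting run when $\psi_i$ holds and $0$ otherwise, while the polynomial factor always contributes $C(1,2)$ runs on inputs of the correct convolution shape, the product has either $C(1,2)$ or $0$ accepting runs as required.

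The construction of $\A_i^2$ is completely analogous. I start from the polynomial $C(p_i(x,y,\overline z)+z_{k+1},q_i(x,y,\overline z)+z_{k+1})\in\N[x,y,\overline z,z_{k+1}]$, which is nonzero because $z_{k+1}$ occurs with positive coefficient inside $C$; apply Lemma~\ref{lem:polynomial.automaton}; extend its alphabet by $\{0,1\}^n$; and take the flag product with the deterministic automaton $\A_{\neg\psi_i}$ obtained from Lemma~\ref{lem:h3.A_psi} applied to the negation of $\psi_i$, which is again a Boolean combination of set constraints. Setting $\A_i:=\A_i^1\uplus\A_i^2$ then completes the construction.

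The main technical nuisance is alphabet alignment: the automata produced by Lemmas~\ref{lem:polynomial.automaton} and~\ref{lem:h3.A_psi} live over incompatible alphabets and depend on different subsets of the input tracks, so the $\Sigma^\omega\otimes M$ construction must be threaded carefully so that each component is lifted to the common alphabet without altering its run count on inputs of the form $w_{\overline X}\otimes a^{(x,y,\overline z,z_{k+1})}$. Once this bookkeeping is in place, the multiplicative behaviour of the flag construction (combined with determinism of the $\psi_i$-automata) and the additive behaviour of disjoint union deliver the stated case distinction directly.
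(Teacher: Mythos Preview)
Your proposal is correct and follows essentially the same approach as the paper: combine the polynomial automata of Lemma~\ref{lem:polynomial.automaton} with the deterministic set-constraint automata of Lemma~\ref{lem:h3.A_psi} via the flag construction, then use disjoint union to add the two cases. The only cosmetic difference is that for the $\psi_i$-case the paper takes $C(1,2)$ disjoint copies of the deterministic automaton $\A_{\psi_i}$ directly, whereas you obtain the same effect by a flag product with a constant-polynomial automaton; both yield exactly $C(1,2)$ accepting runs when $\psi_i$ holds and $0$ otherwise.
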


\begin{proof}
  By Lemma~\ref{lem:polynomial.automaton}, one can construct a B\"uchi
  automaton $\B_i$, which has precisely $C(p_i(x,y,\overline z) + z_{k+1},
  q_i(x,y,\overline z)+z_{k+1})$ many accepting runs on the $\omega$-word $w_{\overline X}\otimes a^{(x,y,\overline
    z,z_{k+1})}$. Secondly, one
  builds deterministic B\"uchi automata $\C_i$ and $\overline\C_i$
  accepting a word $w_{\overline X}\otimes a^{(x,y,\overline
    z,z_{k+1})}$ if and only if the disjunction $\psi_i(x,y,\overline
  z,\overline X)$ of set constraints is satisfied (not satisfied,
  resp.) which is possible by Lemma~\ref{lem:h3.A_psi}.

  Let $\A$ be the result of applying the flag construction to
  $\overline\C_i$ and $\B_i$.  If  $\overline X\in (2^{\N_+})^n$,
  $\overline z\in\N_+^k$, and $x,y,z_{k+1}\in\N_+$,
   then the number of accepting
  runs of $\A$ on the word $w_{\overline X} \otimes
  a^{(x,y,\overline z,z_{k+1})}$
   equals
  \[
  \begin{cases}
    0 & \text{ if }   \psi_i(x,y,\overline z, \overline X) \text{ holds}\\
    C(p_i(x,y,\overline z) + z_{k+1}, q_i(x,y,\overline z)+z_{k+1}) &\text{
      otherwise.}
  \end{cases}
  \]
  Hence the disjoint union of $\A$ and $C(1,2)$ many copies of $\C_i$
  has the desired properties. \qed
\end{proof}

\begin{proposition}\label{prop:H'}
  There exists an injectively $\omega$-automatic forest $\H'=(L',E')$
  of height 1 such that
  \begin{itemize}
  \item the set of roots equals
    $\{1,\ldots,\ell\}\otimes(\{0,1\}^\omega)^n\otimes(\otimes_{k+3}(a^+))\cup
    (b^+\otimes b^+)$,
  \item for $1\le i\le\ell$, $\overline X\in(2^{\N_+})^n$,
    $x,y,z_{k+1}\in\N_+$ and $\overline z\in\N_+^k$, we have
    \[
    \H'(i\otimes w_{\overline X}\otimes a^{(x,y,\overline z,z_{k+1})})\cong
     T'[\overline X,x,y,\overline z,z_{k+1},i]\text{ and }
    \]
  \item for $e_1,e_2\in\N_+$, we have
    \[
      \H'(b^{(e_1,e_2)})\cong S[e_1,e_2]\,.
    \]
  \end{itemize}
\end{proposition}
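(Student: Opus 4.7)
The key observation is that the number of leaves required below each root equals the number of accepting runs of an appropriate B\"uchi automaton on the root's label. We therefore realize leaves \emph{as} accepting runs, so that the leaf-counting requirement is built into the construction automatically.

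First I would apply Lemma~\ref{lem:h3.A[varphi]} to obtain, for each $1\le i\le\ell$, a B\"uchi automaton $\A_i$ whose number of accepting runs on $w_{\overline X}\otimes a^{(x,y,\overline z,z_{k+1})}$ equals $C(1,2)$ or $C(p_i(x,y,\overline z)+z_{k+1}, q_i(x,y,\overline z)+z_{k+1})$ according to whether $\psi_i(x,y,\overline z,\overline X)$ holds or not; by \eqref{T[X,x,y,z,i]} this is exactly the number of leaves of $T'[\overline X,x,y,\overline z,z_{k+1},i]$. Separately, applying Lemma~\ref{lem:polynomial.automaton} to the polynomial $C(x_1,x_2)$ (and relabelling the input symbol $a$ to $b$) yields a B\"uchi automaton $\B_C$ over $\{b,\diamond\}^2$ with exactly $C(e_1,e_2)$ accepting runs on $b^{(e_1,e_2)}$, matching the number of leaves of $S[e_1,e_2]$.

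Next I would build $\H'$ directly: its universe is the disjoint union of the required root set $R$ and a leaf set $L$, where a leaf below a root $\rho=i\otimes w_{\overline X}\otimes a^{(x,y,\overline z,z_{k+1})}$ is encoded as the convolution of $\rho$ with an accepting run of $\A_i$, and a leaf below a root $\rho=b^{(e_1,e_2)}$ is encoded as the convolution of $\rho$ with an accepting run of $\B_C$. To fit both kinds of elements into a single alphabet, I would extend every symbol by one extra coordinate carrying a tag that distinguishes roots from $\A_i$-leaves (for each $i$) and from $\B_C$-leaves, padding missing components with $\diamond$. Since the set of accepting runs of any B\"uchi automaton is $\omega$-regular, both $R$ and $L$ are $\omega$-regular languages, and the edge relation---``the root component of this leaf equals that root''---is trivially $\omega$-automatic. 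Injectivity is immediate, as distinct accepting runs are distinct $\omega$-words over the transition alphabets, and the root prefix is part of every leaf's encoding. The three claimed properties of $\H'$ then follow directly from the counting property of $\A_i$ and $\B_C$.

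The main obstacle is purely bookkeeping: merging the root alphabet $\{1,\ldots,\ell\}\cup\{0,1\}\cup\{a,b,\diamond\}$, the transition alphabets of each $\A_i$ and of $\B_C$, and the tag coordinate into a single finite alphabet, and verifying that all resulting sets and relations are indeed $\omega$-regular after the requisite $\diamond$-padding. No appeal to Lemma~\ref{lem:dag} is needed here, because the resulting structure is already a forest of height $1$ and no unfolding is required.
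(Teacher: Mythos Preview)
Your proposal is correct and follows essentially the same approach as the paper: realize leaves as accepting runs so that the leaf count is automatic. The only difference is cosmetic---the paper merges the automata $\A_1,\ldots,\A_\ell$ (with the index $i$ absorbed into the input alphabet) and $\B_C$ into a single B\"uchi automaton $\A$ and takes the leaf set to be $\Run_\A$ itself rather than the convolution of the root with a run, which avoids your tagging coordinate but is otherwise the same construction.
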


\begin{proof}
  Using Lemma~\ref{lem:polynomial.automaton} (with the polynomial
  $p=C(x_1,x_2)$) and Lemma~\ref{lem:h3.A[varphi]}, we can construct a
  B\"uchi-automaton $\A$ accepting
  $\{1,\ldots,\ell\}\otimes(\{0,1\})^n\otimes(\otimes_{k+3}(a^+))\cup
  (b^+\otimes b^+)$ such that the number of accepting runs of $\A$ on
  the $\omega$-word $u$ equals
  \begin{enumerate}[(i)]
  \item $C(e_1,e_2)$ if $u=b^{(e_1,e_2)}$,
  \item $C(1,2)$ if $u=i\otimes w_{\overline X}\otimes
    a^{(x,y,\overline z,z_{k+1})}$ such that $\psi_i(x,y,\overline
    z,\overline X)$ holds, and
  \item $C(p_i(x,y,\overline z)+z_{k+1},q_i(x,y,\overline z)+z_{k+1})$
    if $u=i\otimes w_{\overline X}\otimes a^{(x,y,\overline
      z,z_{k+1})}$ such that $\psi_i(x,y,\overline z,\overline X)$
    does not hold.
  \end{enumerate}
  Let $\Run_\A$ denote the set of accepting runs of $\A$. Note that
  this is a regular $\omega$-language over the alphabet $\Delta$ of
  transitions of $\A$. Now the forest $\H'$ is defined as follows:
  \begin{itemize}
  \item Its universe equals $L(\A)\cup\Run_\A$.
  \item There is an edge $(u,v)$ if and only if $v\in\Run_\A$ is a
    accepting run of $\A$ on $u\in L(\A)$.
  \end{itemize}
  It is clear that $\H'$ is an injectively $\omega$-automatic
  forest of height 1 with set of roots $L(\A)$ as required. Note that
  (i)-(iii)  describe the number of leaves of the height-1 tree
  rooted at $u\in L(\A)$. By (i), we therefore get immediately
  $\H'(b^{(e_1,e_2)})\cong S[e_1,e_2]$. Comparing the numbers in (ii)
  and (iii) with the definition of the tree
  $T'[\overline{X},x,y,\overline{z},z_{k+1},i]$ in
  \eqref{T[X,x,y,z,i]} completes the proof.\qed
\end{proof}
{}From $\H'=(L',E')$, we build an injectively $\omega$-automatic dag
$\D$ as follows:
\begin{itemize}
\item The domain of $\D$ is the set
  $(\otimes_n(\{0,1\}^\omega) \otimes a^+ \otimes a^+)
      \ \cup \ b^* \
      \cup \ \big(\$^* \otimes L')$.
\item For $u,v\in L'$, the words $\$^i\otimes u$ and $\$^j\otimes v$
  are connected if and only if $i=j$ and $(u,v)\in E'$. In other
  words, the restriction of $\D$ to $\$^* \otimes L'$ is isomorphic to
  $\H'^{\aleph_0}$.
\item For all $\overline X\in (2^{\N_+})^n$, $x,y\in \N_+$, the new
  root $w_{\overline X}\otimes a^{(x,y)}$ is connected to all nodes in
  \[
    \$^*
    \otimes
     \left(
       (\{1,\ldots,\ell\}
        \otimes w_{\overline X} \otimes a^{(x,y)} \otimes (\otimes_{k+1}(a^+))
       )
       \cup \{b^{(e_1,e_2)}\mid e_1 \neq e_2\}
     \right)\,.
  \]
\item The new root $\varepsilon$ is connected to all nodes in
   $\$^* \otimes \{b^{(e_1,e_2)}\mid e_1\neq e_2\}$.

\item For all $m\in \N_+$, the new root $b^m$ is connected to all nodes
  in
  \[
   \$^* \otimes \{b^{(e_1,e_2)}\mid e_1\neq e_2 \vee e_1=e_2 \geq m\}.
  \]
\end{itemize}
It is easily seen that $\D$ is an injectively $\omega$-automatic
dag. Let $\H''=\unfold(\D)$ which is also injectively
$\omega$-automatic by Lemma~\ref{lem:dag}. Then, for all $\overline X\in(2^{\N_+})^n$,
    $x,y,m\in\N_+$, we have
\begin{eqnarray*}\label{computations-H''}
    \H''(w_{\overline X}\otimes a^{(x,y)}) &\cong&  
       (w_{\overline X}\otimes a^{(x,y)}) \circ
        \left(
        \begin{array}{l}
           \biguplus\{\H'(i\otimes w_{\overline X}\otimes a^{(x,y,\overline{z})}) \mid
                             1\leq i\leq \ell,\overline{z}\in \N^{k+1}_+\} \uplus \\[1.5mm]
             \biguplus\{\H'(b^{(e_1,e_2)})\mid  e_1\neq e_2 \}
           \end{array}
        \right)^{\aleph_0} \\
    & \stackrel{\text{Prop.~\ref{prop:H'}}}{\cong} & r \circ 
        \left(
        \begin{array}{l}
          \biguplus\{ T'[X,x,y,\overline{z},i] \mid 
                        \overline{z}\in \N^{k+1}_+, 1\leq i\leq \ell\} 
          \uplus \\[1.5mm]
          \biguplus \{S[e_1,e_2] \mid  e_1 \neq e_2\}
        \end{array}
        \right)^{\aleph_0} \\
    & \stackrel{\eqref{eq:T[X,x,y]}}{=} & T''[X,x,y]\\[1em]
    \H''(\varepsilon) &{\cong} &  \varepsilon \circ 
      \Big(\biguplus\{\H'(b^{(e_1,e_2)})\mid e_1\neq e_2 \}\Big)^{\aleph_0} \\
    &\stackrel{\text{Prop.~\ref{prop:H'}}}{\cong} & r \circ \biguplus  \Big(\{S[e_1,e_2] \mid  e_1\neq e_2\}\Big)^{\aleph_0} \\
    &\stackrel{\eqref{eq:U_kappa}}{=} & U''[\omega] \\[1em]
    \H''(b^m) &{\cong} &  b^m \circ 
        \Big(
          \biguplus\{\H'(b^{(e_1,e_2)})\mid 
                          e_1 \neq e_2 \vee e_1 = e_2 \geq m\}
        \Big)^{\aleph_0} \\
    &\stackrel{\text{Prop.~\ref{prop:H'}}}{\cong}& r \circ 
        \Big(
          \biguplus\{S[e_1,e_2] \mid 
                          e_1 \neq e_2 \vee e_1=e_2 \geq m\}
        \Big)^{\aleph_0} \\
    &\stackrel{\eqref{eq:U_kappa}}{=} & U''[m]
\end{eqnarray*}
{}From $\H''=(L'',E'')$, we build an injectively $\omega$-automatic dag
$\D_0$ as follows:
\begin{itemize}
\item The domain of $\D_0$ is the set
  $(\otimes_n\{0,1\}^\omega)\otimes a^+
  \cup\{\varepsilon,b\}\cup(\$^*\otimes L'')$.
\item For $u,v\in L''$, the words $\$^i\otimes u$ and $\$^j\otimes v$
  are connected by an edge if and only if $i=j$ and $(u,v)\in E''$,
  i.e., the restriction of $\D_0$ to $\$^*\otimes L''$ is isomorphic
  to ${\H''}^{\aleph_0}$.
\item For $\overline X\in (2^{\N_+})^n$, $x\in \N_+$, connect the new
  root $w_{\overline X} \otimes a^x$ to
  all nodes in
  \[
  \$^* \otimes \left( w_{\overline X} \otimes a^x \otimes a^+ \;\cup\; b^+ \right).
  \]
\item Connect the new root $\varepsilon$ to all nodes in $\$^* \otimes
  b^*$.
\item Connect the new root $b$ to all nodes in $\$^* \otimes b^+$.
\end{itemize}
Then $\D_0$ is an injectively $\omega$-automatic dag of height 3 and
we set $\H_0=\unfold(\D_0)$.  Then, we have the following:
\begin{itemize}
\item The set of roots of $\H_0$ is $\left(
    (\otimes_{n}(\{0,1\}^\omega))\otimes a^+ \right) \cup
  \{\varepsilon,b\}$.
\item For all ${\overline X}\in (2^{\N_+})^n$, $x\in \N_+$ we have:
\end{itemize}
\begin{eqnarray*}
  \H_0(w_{\overline X} \otimes a^x)  &\cong &
  r\circ
  \left(
    \begin{array}{l}
      \displaystyle
      \biguplus\{\H''(b^m) \mid m\in\N_+\}  \uplus \\
      \displaystyle
      \biguplus\{\H''(w_{\overline X} \otimes 
                  a^x\otimes a^y) \mid y\in \N_+\} 
    \end{array}
  \right)^{2^{\aleph_0}}\\
  &\cong & r\circ 
   \left( 
     \biguplus\{U''[m]\mid m\in \N_+\} \uplus
     \biguplus\{T''[\overline X,x,y]\mid y\in \N_+\}
   \right)^{\aleph_0} \\
  &\cong & T_0[\overline X,x]\\[1ex]
 \H_0(\varepsilon)&\cong & 
   r\circ \Big( \biguplus\{\H''(b^m) \mid m\in\N\} \Big)^{\aleph_0} \\
      &\cong & r\circ \Big( \biguplus\{U''[\kappa]\mid \kappa\in \N_+\cup \{\omega\}\}\Big)^{\aleph_0} \\
      &\cong & U_0[0]\\[1ex]
 \H_0(b) &\cong & 
     r\circ \Big( \biguplus\{\H''(b^m) \mid m\in \N_+\}\Big)^{\aleph_0} \\
      &\cong & 
          r\circ \Big(\biguplus\{U''[m] \mid m\in \N_+\}\Big)^{\aleph_0} \\
      &\cong & U_0[1]
\end{eqnarray*}
We now construct the forest $\H_1, \H_2, \H_3, \ldots, \H_{n}$
inductively.  For $0\leq m <n$, suppose we have obtained an
injectively $\omega$-automatic forest $\H_m=(L_m,E_m)$ as described in
the lemma.  The forest $\H_{m+1}$ is constructed as follows, where
$\alpha=m\bmod 2$:
\begin{itemize}
\item The domain of $\H_{m+1}$ is
  $\otimes_{n-m-1}(\{0,1\}^\omega)\otimes
  a^+\cup\{\varepsilon,b\}\cup(\{\$_1,\$_2\}^\omega\otimes L_m)$.
\item For $u,v\in L_m$ and $u',v'\in\{\$_1,\$_2\}^\omega$, the words
  $u'\otimes u$ and $v'\otimes v$ are connected by an edge if and only
  if $u'=v'$ and $(u,v)\in E_m$, i.e., the restriction of $\D_{m+1}$ to
  $\{\$_1,\$_2\}^\omega\otimes L_m$ is isomorphic to $\H_m^{2^{\aleph_0}}$.
\item For all $\overline X\in (2^{\N_+})^{n-m-1}$, $x\in \N_+$, connect
  the new root $w_{\overline X} \otimes a^x$ to all nodes from
  \[
  \{\$_1,\$_2\}^\omega \otimes \Big(w_{\overline X} \otimes \{0,1\}^\omega \otimes a^x \cup  b^\alpha \Big)\,.
  \]
\item Connect the new root $\varepsilon$ to all nodes from
  $\{\$_1,\$_2\}^\omega \otimes \{\varepsilon,b^\alpha\}$.
\item Connect the new root $b$ to all nodes from
  $\{\$_1,\$_2\}^\omega \otimes \{b,b^\alpha\}$.
\end{itemize}
In this way we obtain the injectively $\omega$-automatic forest
$\H_{m+1}$ such that:
\begin{itemize}
\item The set of roots of $\H_{m+1}$ is $\left(
    (\otimes_{n-m-1}(\{0,1\}^\omega))\otimes a^+ \right) \cup
  \{\varepsilon,b\}$.
\item For $\overline X\in(2^{\N_+})^{n-m-1}$ and $x\in \N_+$ we have:
  \begin{eqnarray*}\label{computations-H0}
   \H_{m+1}(w_{\overline X} \otimes a^x) &\cong &
   r\circ \Big(\biguplus\{\H_m(w_{\overline X}\otimes w_{X_{n-m}}\otimes x) \mid X_{n-m}\subseteq \N_+ \}  
\uplus \H_m(b^\alpha) \Big)^{2^{\aleph_0}} \\
    &\cong & r\circ \Big( \biguplus\{ T_m[\overline X,X_{n-m},x] \mid X_{n-m}\subseteq \N_+ \} 
 \uplus U_{m}[\alpha] \Big)^{2^{\aleph_0}} \\
      &\cong & T_{m+1}[\overline X,x] \\[1ex]
      \H_{m+1}(\varepsilon) & \cong & 
          r\circ (\H_{m}(\varepsilon)\uplus\H_m(b^\alpha))^{2^{\aleph_0}} \\
      & \cong &  r \circ (U_{m}[0]\uplus U_{m}[\alpha])^{2^{\aleph_0}} \\
      & \cong &  U_{m+1}[0]\\
      \H_{m+1}(b)& \cong & r\circ \left( \H_{m}(b^\alpha) \uplus \H_{m}(b) \right)^{2^{\aleph_0}} \\
             & \cong &  r \circ \left(U_{m}[\alpha] \uplus U_{m}[1] \right)^{2^{\aleph_0}} \\
             & \cong &  U_{m+1}[1]
   \end{eqnarray*}
 \end{itemize}
Hence we proved:

\begin{lemma}
  From each $0\leq m\leq n$, one can effectively construct an
  injectively $\omega$-automatic forest $\H_m$ such that
  \begin{itemize}
  \item the set of roots of $\H_m$ is $\displaystyle \left(
      \otimes_{n-m}(\{0,1\}^\omega)\otimes a^+ \right) \cup
    \{\varepsilon,b\}$,
  \item $\H_m(w_{\overline X} \otimes a^x)\cong T_m[\overline X,x]$
    for all $\overline X\in (2^{\N_+})^{n-m}$ and $x\in \N_+$,
  \item $\H_m(\varepsilon) \cong U_m[0]$, and
  \item $\H_m(b) \cong U_m[1]$.
  \end{itemize}
\end{lemma}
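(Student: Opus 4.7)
The plan is a straightforward induction on $m$, with all the main ingredients already assembled in the preceding discussion. The base case ($m=0$) and the inductive step have both been essentially carried out in the text preceding the lemma, so the proof mostly amounts to packaging these constructions and verifying that the stated properties hold.

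For the base case $m=0$, I would begin from Proposition~\ref{prop:H'}, which gives the injectively $\omega$-automatic height-1 forest $\H'$ encoding the building blocks $T'[\overline X,x,y,\overline z,z_{k+1},i]$ and $S[e_1,e_2]$. From $\H'$ I would build the injectively $\omega$-automatic dag $\D$ as described, and then take $\H''=\unfold(\D)$; the displayed computations give $\H''(w_{\overline X}\otimes a^{(x,y)})\cong T''[\overline X,x,y]$, $\H''(\varepsilon)\cong U''[\omega]$, and $\H''(b^m)\cong U''[m]$. Applying $\unfold$ is legitimate by Lemma~\ref{lem:dag} since $\D$ has finite height. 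From $\H''$ I would then build the dag $\D_0$ (adding a new layer of roots indexed by $\otimes_n(\{0,1\}^\omega)\otimes a^+\cup\{\varepsilon,b\}$ and connecting them appropriately through the $\$^*$-prefixes to realize the needed $\aleph_0$-fold copies), and set $\H_0=\unfold(\D_0)$. Again by Lemma~\ref{lem:dag} this is injectively $\omega$-automatic, and the three computations of $\H_0(w_{\overline X}\otimes a^x)$, $\H_0(\varepsilon)$, $\H_0(b)$ given in the text establish exactly the four bulleted points of the lemma for $m=0$.

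For the inductive step, assume $\H_m$ has been constructed with the four stated properties. I would construct $\H_{m+1}$ as in the text: its domain consists of a new layer $\otimes_{n-m-1}(\{0,1\}^\omega)\otimes a^+\cup\{\varepsilon,b\}$ of roots together with $\{\$_1,\$_2\}^\omega\otimes L_m$, where the $\{\$_1,\$_2\}^\omega$-factor is used to take $2^{\aleph_0}$ many disjoint copies of $\H_m$ (this is exactly the third clause of Lemma~\ref{lem:multiply}, which guarantees that the restriction to $\{\$_1,\$_2\}^\omega\otimes L_m$ is injectively $\omega$-automatic and isomorphic to $\H_m^{2^{\aleph_0}}$). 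The edges from new roots to nodes of the form $\{\$_1,\$_2\}^\omega\otimes(\,\cdot\,)$ as described are easily checked to be $\omega$-automatic since they correspond to regular $\omega$-languages. Injective $\omega$-automaticity of the whole construction follows because the old nodes (prefixed by a $\{\$_1,\$_2\}^\omega$-word) and the new roots live in disjoint $\omega$-regular languages. The displayed computations of $\H_{m+1}(w_{\overline X}\otimes a^x)$, $\H_{m+1}(\varepsilon)$, $\H_{m+1}(b)$ then verify the four properties at level $m+1$ by comparison with the definitions of $T_{m+1}[\overline X,x]$, $U_{m+1}[0]$, $U_{m+1}[1]$.

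There is no real obstacle: the hard work has been done in Proposition~\ref{prop:H'} (which bundles the arithmetic content via Lemma~\ref{lem:polynomial.automaton} and Lemma~\ref{lem:h3.A[varphi]}) and in Lemmas~\ref{lem:multiply} and~\ref{lem:dag}. The only thing to be careful about is the alternation between odd and even $m$ through the parameter $\alpha=m\bmod 2$, and the fact that at each level the newly attached copies of $U_m[\alpha]$ (via the $b^\alpha$ branch) are what make $U_{m+1}[0]$ and $U_{m+1}[1]$ distinguishable in exactly the way required by Lemma~\ref{lem:h4+.T_m}. So the proof concludes simply with \qed after verifying the four bullet points for all $0\le m\le n$.
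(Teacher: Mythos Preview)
Your proposal is correct and follows essentially the same approach as the paper: indeed, the paper's proof is precisely the sequence of constructions and computations in Section~\ref{sec:h4.automata} preceding the lemma (culminating in ``Hence we proved:''), and you have faithfully summarized that argument, including the base case via $\H'\to\D\to\H''\to\D_0\to\H_0$ and the inductive step via the $\{\$_1,\$_2\}^\omega$-indexed copies of~$\H_m$.
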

Note that $T_n[x]$ is the tree in $\H_n$ rooted at $a^x$. Hence
$T_n[x]$ is (effectively) an injectively $\omega$-automatic tree. Now
Lemma~\ref{lem:h4+.reduction} finishes the proof of
Proposition~\ref{prop:h4} and therefore of Theorem~\ref{thm:h4}.

\section{$\omega$-automatic trees of height 3}\label{sec:height3}

Recall that the isomorphism problem $\Iso(\Ti_2)$ is arithmetical by
Theorem~\ref{T-height-2} and that $\Iso(\Ti_4)$ is not by
Theorem~\ref{thm:h4}. In this section, we modify the proof of
Theorem~\ref{thm:h4} in order to show that already
$\Iso(\Ti_3)$ is not arithmetical:

\begin{theorem}\label{thm:h3}
  There exists a tree $U$ such that $\{P\in\Ti_3\mid \S(P)\cong U\}$
  is $\Pi^1_1$-hard. Hence the isomorphism problem $\Iso(\Ti_3)$ for
  injectively $\omega$-automatic trees of height 3 is $\Pi^1_1$-hard.
\end{theorem}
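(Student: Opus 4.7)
My plan is to adapt the construction of Section~\ref{sec:h4} for $\Pi^1_1$ (the case $n=1$ there, yielding trees of height~$4$) so as to save one level, folding the outer set-quantifier into the same level of children as the existential $\exists y$ by exploiting the $2^{\aleph_0}$-multiplicity available in $\omega$-automatic structures.

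First, an analog of Proposition~\ref{prp:normal.form} for $\Sigma^1_1$ (obtained by the same proof, in essence by prepending an $\exists X$ to the normal form of $\Sigma^0_2$-sets) yields that every $\Sigma^1_1$ set $A$ can be written as
\[
A \;=\; \{x\in\N_+ \mid \exists X\,\exists y\,\forall \overline z : \theta(x,y,\overline z,X)\},
\]
with $\theta = \bigwedge_{i=1}^\ell p_i\neq q_i \vee \psi_i$ in the same normal form used in that proposition. Since every $\Pi^1_1$ set is the complement of some such $A$, it suffices to exhibit a fixed tree $U$ of height~$3$ and a computable map producing injective $\omega$-automatic presentations of height-$3$ trees $T[x]$ with $T[x]\cong U$ iff $x\notin A$.

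Given this normal form, I modify the height-$3$ tree $T_0[X,x]$ of Section~\ref{sec:h4.construction} in two ways: the inner disjoint union is taken over \emph{all} $X\subseteq\N_+$ together with $y\in\N_+$, and the outer multiplicity is raised to $2^{\aleph_0}$; that is, set
\[
T[x] \;=\; r\circ\Bigl(\biguplus\{U''[m]\mid m\in\N_+\} \,\uplus\, \biguplus\{T''[X,x,y]\mid X\subseteq\N_+,\,y\in\N_+\}\Bigr)^{2^{\aleph_0}},
\]
reusing the height-$2$ trees $T''$ and $U''$ of Section~\ref{sec:h4.construction}. By Lemma~\ref{lem:h3.T[X,x,y]} every child of the root is isomorphic to $U''[\kappa]$ for some $\kappa\in\N_+\cup\{\omega\}$, with $\kappa=\omega$ occurring exactly when the corresponding pair $(X,y)$ satisfies $\forall \overline z:\theta$. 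The outer $2^{\aleph_0}$-multiplicity together with the $U''[m]$-padding then force each iso type $U''[m]$ ($m\in\N_+$) to occur in $T[x]$ with multiplicity $2^{\aleph_0}$, while $U''[\omega]$ occurs with multiplicity $2^{\aleph_0}$ iff some pair $(X,y)$ is good, iff $x\in A$. Hence $T[x]$ admits exactly two isomorphism types: $U[0] := r\circ(\biguplus\{U''[\kappa]\mid\kappa\in\N_+\cup\{\omega\}\})^{2^{\aleph_0}}$ when $x\in A$, and $U[1] := r\circ(\biguplus\{U''[m]\mid m\in\N_+\})^{2^{\aleph_0}}$ when $x\notin A$. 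Setting $U:=U[1]$ delivers the required reduction.

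The injective $\omega$-automatic realization of $T[x]$ proceeds by a routine adaptation of Section~\ref{sec:h4.automata}: the forest $\H'$ of Proposition~\ref{prop:H'} and the dag $\D$ yielding $\H''$ can be reused verbatim, while the topmost dag is redefined so that its new roots $w_X\otimes a^x\otimes a^y$ simultaneously carry $X$ and~$y$, and the $\{\$_1,\$_2\}^\omega$-parametrization of Lemma~\ref{lem:multiply} supplies the $2^{\aleph_0}$-multiplicity directly at level~$1$ rather than at level~$2$ (where it sits in the height-$4$ construction). The main conceptual point is to verify that merging $\forall X$ with $\exists y$ at a single level still preserves the iso-type dichotomy; this holds because the $2^{\aleph_0}$-multiplicity erases all information about \emph{which} specific pairs $(X,y)$ are good, leaving visible at the isomorphism level only the Boolean fact that \emph{some} good pair exists, which is exactly the $\Sigma^1_1$-condition defining $A$.
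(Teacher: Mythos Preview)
Your proposed normal form for $\Sigma^1_1$ is incorrect, and this is a fatal gap. You claim that every $\Sigma^1_1$ set can be written as $\{x : \exists X\,\exists y\,\forall\overline z : \theta(x,y,\overline z,X)\}$ with $\theta$ a Boolean combination of polynomial (in)equalities and set constraints. But the two existential quantifiers commute, giving $\exists y\,\exists X\,\forall\overline z : \theta$, and for each fixed $y$ the inner predicate $\exists X\,\forall\overline z:\theta$ collapses to~$\Pi^0_1$: each $\theta(x,y,\overline z,X)$ tests only finitely many bits of $X$, so $\{X : \forall\overline z:\theta\}$ is a closed subset of the compact space $2^{\N_+}$, and by (weak) K\"onig's lemma it is nonempty iff every finite conjunction $\bigwedge_{\overline z\in F}\theta$ is satisfiable---a $\Pi^0_1$ condition. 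Hence your normal form defines only $\Sigma^0_2$ sets, and your construction proves at best $\Pi^0_2$-hardness of $\Iso(\Ti_3)$, not $\Pi^1_1$-hardness. The heuristic ``prepend $\exists X$ to a $\Sigma^0_2$ normal form'' fails precisely because an existential set quantifier over Cantor space adds no strength in front of a $\Pi^0_1$ matrix.

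The paper's proof works directly with the $\Pi^1_1$ normal form $\forall X\,\exists y\,\forall\overline z:\theta$ from Proposition~\ref{prp:normal.form}; the genuine quantifier alternation $\forall X\,\exists y$ is what carries the analytic complexity and cannot be collapsed. To handle both quantifiers at a single tree level (rather than two, as in Section~\ref{sec:h4}), the paper modifies the height-$2$ building blocks so that each one \emph{encodes the set $X$ itself}: one adds to $T''[X,x,y]$ a family of height-$1$ trees $S[e]$ with $2e{+}1$ leaves for every $e\in X$ (odd leaf-counts, hence distinguishable from the even-leaf trees $S[e_1,e_2]$ already present). The resulting trees $\widehat T[X,x,y]$ and $\widehat U[\kappa,X]$ are pairwise non-isomorphic for distinct $X$, so asking whether $\widehat U[\omega,X]$ occurs as a height-$2$ subtree of $T[x]$ for \emph{every} $X$ recovers exactly the $\forall X\,\exists y$ pattern. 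This encoding trick is the idea your attempt is missing.
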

So let $A\subseteq\N_+$ be some set from $\Pi^1_1$. By
Proposition~\ref{prp:normal.form}, it can be written as
\[
  A=\{x\in\N_+:\forall X\;\exists y\;\forall\overline z:
  \bigwedge_{i=1}^\ell p_i(x,y,\overline z)\neq q_i(x,y,\overline z)\lor\psi_i(x,y,\overline z,X)\},
\]
where $p_i$ and $q_i$ are polynomials with coefficients in $\N$ and
$\psi_i$ is a disjunction of set constraints. As in
Section~\ref{sec:h4}, let $\varphi_{-1}(x,y,X)$ denote the subformula
starting with $\forall\overline z$, and let $\varphi_0(x,X)=\forall
y : \varphi_{-1}(x,y,X)$.  We reuse the trees $T'[X,x,y,\overline
z,z_{k+1},i]$ of height~$1$. Recall that they are all of the form
$S[e_1,e_2]$ and therefore have an even number of leaves (since the
range of the polynomial $C:\N_+^2\to\N_+$ consists of even
numbers). For $e\in\N_+$, let $S[e]$ denote the height-1 tree with
$2e+1$ leaves.

Recall that the tree $T''[X,x,y]$ encodes the set of pairs
$(e_1,e_2)\in\N_+^2$ such that $e_1\neq e_2$ or there exist $\overline
z$, $z_{k+1}$, and $i$ with $e_1=p_i(x,y,\overline z)+z_{k+1}$ and
$e_2=q_i(x,y,\overline z)+z_{k+1}$. We now modify the construction of
this tree such that, in addition, it also encodes the set $X \subseteq
\N_+$:
\begin{align*}
\widehat T[X,x,y] &= r \circ
            \left(
              \begin{array}{l}\displaystyle
               \biguplus\{ S[e]\mid e\in X\}\uplus
               \biguplus \{S[e_1,e_2] \mid e_1\neq e_2\} \uplus\\
               \displaystyle
               \biguplus \{ T'[\overline X,x,y,\overline{z},z_{k+1}i]
               \mid \overline{z}\in \N^k_+,z_{k+1} \in \N_+, 1\leq i\leq \ell\}
             \end{array}
            \right)^{\aleph_0}\\
\intertext{In a similar spirit, we define $\widehat U[\kappa,X]$ for
  $X \subseteq \N_+$ and $\kappa \in \N_+ \cup\{\omega\}$:}
\widehat U[\kappa,X] &= r \circ
            \left(
              \begin{array}{l}\displaystyle
               \biguplus\{ S[e]\mid e\in X\}\uplus
               \biguplus \{S[e_1,e_2] \mid e_1\neq e_2\} \uplus \\
               \displaystyle
               \biguplus \{ S[e,e] \mid \kappa\le e < \omega\}
             \end{array}
            \right)^{\aleph_0}
\end{align*}
Then $\widehat T[X,x,y]\cong \widehat U[\omega,Y]$ if and only if
$X=Y$ and $T''[X,x,y]\cong U''[\omega]$, i.e., if and only if $X=Y$
and $\varphi_{-1}(x,y,X)$ holds by
Lemma~\ref{lem:h3.T[X,x,y]}(b). Finally, we set
\begin{align*}
T[x] &= r \circ
        \left(
           \biguplus\{\widehat U[\kappa,X] \mid X\subseteq \N_+, \kappa\in \N_+\}
           \uplus
           \biguplus\{\widehat T[X,x,y] \mid X\subseteq \N_+, y\in\N_+\}
        \right)^{\aleph_0}\\
U &= r \circ
        \left(
           \biguplus\{\widehat U[\kappa,X] \mid X\subseteq \N_+,
           \kappa\in \N_+ \cup\{\omega\}\}
        \right)^{\aleph_0} .
\end{align*}

\begin{lemma}\label{lem:h3.tree.construct}
  Let $x\in \N_+$. Then $T[x]\cong U$ if and only if $x\in A$.
\end{lemma}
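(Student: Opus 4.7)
The strategy is to extend the analysis of Lemma~\ref{lem:h3.T[X,x,y]} so that it additionally keeps track of the set $X$. Two claims do the job: (i) the family $\widehat U[\kappa,Y]$ for $\kappa\in\N_+\cup\{\omega\}$ and $Y\subseteq\N_+$ consists of pairwise non-isomorphic trees; and (ii) $\widehat T[X,x,y]\cong\widehat U[\omega,X]$ iff $\varphi_{-1}(x,y,X)$ holds, while otherwise $\widehat T[X,x,y]\cong \widehat U[m,X]$ for some $m\in\N_+$. Granted these, the lemma follows exactly as Lemma~\ref{lem:h4+.T_3}: the outer $\aleph_0$-power makes $T[x]\cong U$ equivalent to the equality of the sets of isomorphism types of subtrees rooted at the children of the two roots.

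For (i), I would exploit parity. The tree $S[e]$ has $2e+1$ leaves, which is odd, whereas $S[e_1,e_2]$ has $C(e_1,e_2)$ leaves, which is always even: since $(x+y)^2\equiv x+y\pmod 2$, we have $C(x,y)=(x+y)^2+3x+y\equiv 4x\equiv 0\pmod 2$. Hence the multiset of leaf-counts of the children of the root of $\widehat U[\kappa,Y]$ splits into an odd part, equal to $\{2e+1\mid e\in Y\}$ and thus recovering $Y$, and an even part. Injectivity of the polynomial $C$ from \eqref{eq:C} lets us read off the corresponding pairs $(e_1,e_2)$ from the even leaf-counts, and $\kappa$ is then the least $e\in\N_+$ with $C(e,e)$ among them (or $\omega$ if none).

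For (ii), the subtrees $S[e]$ with $e\in X$ contribute only odd leaf-counts, while every $T'[X,x,y,\overline z,z_{k+1},i]$ is of the form $S[e_1,e_2]$ and hence has an even leaf-count; so the odd part of $\widehat T[X,x,y]$ coincides with the odd part of $\widehat U[\kappa,X]$ for every $\kappa$. The even part is then handled exactly as in Lemma~\ref{lem:h3.T[X,x,y]}(b): a diagonal $S[e,e]$ occurs iff some $\overline z,i$ witness $p_i(x,y,\overline z)=q_i(x,y,\overline z)\wedge\neg\psi_i(x,y,\overline z,X)$, with $z_{k+1}\ge 1$ shifting $e$ through all sufficiently large values. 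Thus no such diagonal occurs iff $\varphi_{-1}(x,y,X)$ holds, yielding $\widehat T[X,x,y]\cong\widehat U[\omega,X]$; otherwise the least achievable $e=m$ gives $\widehat T[X,x,y]\cong\widehat U[m,X]$ for some $m\in\N_+$.

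Combining (i) and (ii), the set of isomorphism types of children of the root of $T[x]$ is
\[
\{\widehat U[\kappa,X]\mid X\subseteq\N_+,\ \kappa\in\N_+\}\cup\{\widehat U[\omega,X]\mid X\subseteq\N_+,\ \exists y\in\N_+:\varphi_{-1}(x,y,X)\},
\]
whereas the corresponding set for $U$ is $\{\widehat U[\kappa,X]\mid X\subseteq\N_+,\ \kappa\in\N_+\cup\{\omega\}\}$. By (i) these sets coincide iff for every $X\subseteq\N_+$ there is some $y\in\N_+$ with $\varphi_{-1}(x,y,X)$, which by the normal form chosen for $A$ is exactly $x\in A$. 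The only genuine technical point is the parity computation for $C$; all remaining work is bookkeeping lifted from Lemmas~\ref{lem:h3.T[X,x,y]} and~\ref{lem:h4+.T_3}.
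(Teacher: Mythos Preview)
Your proposal is correct and follows essentially the same approach as the paper. The paper sets up the parity observation (that $C$ takes only even values, so $S[e]$ and $S[e_1,e_2]$ are distinguished by leaf-count parity) in the text preceding the lemma and then gives a terser proof that compares height-$2$ subtrees directly; your claims (i) and (ii) make this decomposition explicit, but the underlying argument is the same.
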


\begin{proof}
  Suppose $x\in A$. To prove $T[x]\cong U$, it suffices to show that
  any height-2 subtree of $T[x]$ is a subtree of $U$ and vice
  versa. First, let $X\subseteq\N_+$ and $y\in\N_+$. Then, by
  Lemma~\ref{lem:h3.T[X,x,y]}, there exists
  $\kappa\in\N_+\cup\{\omega\}$ with $T[X,x,y]\cong U_\kappa$ and
  therefore $\widehat T[X,x,y]\cong \widehat U[X,\kappa]$, i.e.,
  $\widehat T[X,x,y]$ appears in $U$. Secondly, let
  $X\subseteq\N_+$. From $x\in A$, we can infer that there exists some
  $y\in \N_+$ with $\forall \overline{z}: \bigwedge^{\ell}_{i=1}
  p_i(x,y,\overline{x}) \neq q_i(x,y,\overline{z}) \vee
  \psi_i(x,y,\overline{z},X)$. Then Lemma~\ref{lem:h3.T[X,x,y]}
  implies $U_\omega\cong T[X,x,y]$ and therefore
  $\widehat U[X,\omega]\cong\widehat T[X,x,y]$, i.e., $\widehat
  U[X,\omega]$ appears in $T[x]$. Thus, any height-2 subtree of $T[x]$
  is a subtree of $U$ and vice versa.

  Conversely suppose $T[x]\cong U$. Let $X\subseteq\N_+$. Then
  $\widehat U[X,\omega]$ appears in $U$ and therefore in $T[x]$. Since
  $U_\kappa\not\cong U_\omega$ for $\kappa\in\N_+$, there exists some
  $y\in\N_+$ with $U_\omega\cong T[X,x,y]$. From
  Lemma~\ref{lem:h3.T[X,x,y]} we then get $x\in A$.\qed
\end{proof}

\subsection{Injective $\omega$-automaticity}

We follow closely the procedure for $m=0$ from Section~\ref{sec:h4.automata}.

\begin{proposition}\label{prop:h3.H'}
  There exists an injectively $\omega$-automatic forest $\H'=(L',E')$
  of height 1 such that
  \begin{itemize}
  \item the set of roots equals
    $\{1,\ldots,\ell\}\otimes\{0,1\}^\omega\otimes(\otimes_{k+3}(a^+))
    \cup(b^+\otimes b^+)\cup c^+$
  \item for $1\le i\le\ell$, $X\subseteq\N_+$, $x,y,z_{k+1}\in\N_+$
    and $\overline z\in\N_+^k$, we have
    \[
    \H'(i\otimes w_X\otimes a^{(x,y,\overline z,z_{k+1})})\cong
     T'[X,x,y,\overline z,z_{k+1},i]
    \]
  \item for $e_1,e_2\in\N_+$, we have
    \[
      \H'(b^{(e_1,e_2)})\cong S[e_1,e_2]
    \]
  \item for $e\in\N_+$, we have $\H'(c^e)\cong S[e]$
  \end{itemize}
\end{proposition}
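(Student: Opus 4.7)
The plan is to mimic the proof of Proposition~\ref{prop:H'} almost verbatim, with one additional component handling inputs of the form $c^e$. The whole point is to build a single B\"uchi automaton $\A$ whose accepting runs on each root word give exactly the required number of leaves, after which the forest $\H'$ is defined as in the previous proposition: its universe is $L(\A)\cup\Run_\A$, and the edges connect each word $u\in L(\A)$ to each of its accepting runs.

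For the three types of root words, I would construct B\"uchi automata separately, using disjoint alphabets (the first letter already distinguishes $\{1,\dots,\ell\}$-prefixes, $b$-prefixes, and $c$-prefixes), and take their disjoint union. For inputs $i\otimes w_X\otimes a^{(x,y,\overline z,z_{k+1})}$, I invoke Lemma~\ref{lem:h3.A[varphi]} (with the single set variable $X$ in place of $\overline X$) to get $C(1,2)$ accepting runs when $\psi_i(x,y,\overline z,X)$ holds and $C(p_i(\cdots)+z_{k+1},q_i(\cdots)+z_{k+1})$ accepting runs otherwise; this matches the number of leaves of $T'[X,x,y,\overline z,z_{k+1},i]$ by \eqref{T[X,x,y,z,i]}. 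For inputs $b^{(e_1,e_2)}$, I apply Lemma~\ref{lem:polynomial.automaton} to the polynomial $C(x_1,x_2)$ to obtain $C(e_1,e_2)$ accepting runs, matching $|S[e_1,e_2]|$.

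The only genuinely new step is handling $c^e$: I want exactly $2e+1$ accepting runs on $c^e$. Since $S[e]$ has $2e+1$ leaves and $2e+1$ is the value of the polynomial $2x+1\in\N[x]$, Lemma~\ref{lem:polynomial.automaton} directly yields a one-dimensional B\"uchi automaton $\B[2x+1]$ over $\{c\}$ (viewing $c$ as the symbol $a$ in that lemma) with $L(\B[2x+1])=c^+$ and with precisely $2e+1$ accepting runs on $c^e$. The rest of the construction is bookkeeping: put the three automata on their disjoint alphabets into one B\"uchi automaton $\A$, verify that $L(\A)$ is exactly the set of roots announced in the proposition, and then build $\H'$ on $L(\A)\cup\Run_\A$ with edges from each input to its accepting runs, which is injectively $\omega$-automatic because $\Run_\A$ is a regular $\omega$-language over the transition alphabet.

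No step is truly hard; the only mild subtlety is cosmetic, namely extending the alphabet of each component automaton to the common alphabet of $\A$ (using $\Sigma^\omega\otimes M$-style padding, or simply viewing the three languages as living over disjoint letters) so that the disjoint union is well-defined. Once this is done, the three displayed isomorphisms $\H'(i\otimes w_X\otimes a^{(x,y,\overline z,z_{k+1})})\cong T'[X,x,y,\overline z,z_{k+1},i]$, $\H'(b^{(e_1,e_2)})\cong S[e_1,e_2]$, and $\H'(c^e)\cong S[e]$ follow immediately from the counts of accepting runs, completing the proof.
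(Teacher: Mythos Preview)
Your proposal is correct and matches the paper's proof essentially verbatim: the paper also invokes Lemma~\ref{lem:polynomial.automaton} twice (with the polynomials $C(x_1,x_2)$ and $2x_1+1$) together with Lemma~\ref{lem:h3.A[varphi]} to build a single B\"uchi automaton with the required accepting-run counts, and then defers to the construction in Proposition~\ref{prop:H'} for the rest. The only new ingredient is precisely the $c^e$ case handled via the polynomial $2x+1$, exactly as you identified.
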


\begin{proof}
  Using Lemma~\ref{lem:polynomial.automaton} twice (with the
  polynomial $C(x_1,x_2)$ and with the polynomial $2x_1+1$) and
  Lemma~\ref{lem:h3.A[varphi]}, we can construct a B\"uchi-automaton
  $\A$ accepting
  $\{1,\ldots,\ell\}\otimes\{0,1\}^\omega\otimes(\otimes_{k+3}(a^+)) \ \cup \
  (b^+\otimes b^+) \ \cup\ c^+$ such that the number of accepting runs of
  $\A$ on the $\omega$-word $u$ equals
  \begin{enumerate}[(i)]
  \item $C(e_1,e_2)$ if $u=b^{(e_1,e_2)}$,
  \item $2e+1$ if $u=c^e$,
  \item $C(1,2)$ if $u=i\otimes w_{\overline X}\otimes
    a^{(x,y,\overline z,z_{k+1})}$ such that $\psi_i(x,y,\overline
    z,\overline X)$ holds, and
  \item $C(p_i(x,y,\overline z)+z_{k+1},q_i(x,y,\overline z)+z_{k+1})$
    if $u=i\otimes w_{\overline X}\otimes a^{(x,y,\overline
      z,z_{k+1})}$ such that $\psi_i(x,y,\overline z,\overline X)$
    does not hold.
  \end{enumerate}
  The rest of the proof is the same as that of
  Proposition~\ref{prop:H'}.\qed
\end{proof}
{}From $\H'=(L',E')$, we build an injectively $\omega$-automatic dag
$\D$ as follows:
\begin{itemize}
\item The domain of $\D$ is the set $(\{0,1\}^\omega \otimes a^+
  \otimes a^+)\cup (\{0,1\}^\omega \otimes b^*)\cup(\$^*\otimes L')$.
\item For $u,v\in L'$, the words $\$^i\otimes u$ and $\$^j\otimes v$
  are connected if and only if $i=j$ and $(u,v)\in E'$. In other
  words, the restriction of $\D$ to $\$^* \otimes L'$ is isomorphic to
  $\H'^{\aleph_0}$.
\item For all $ X\subseteq \N_+$, $x,y\in \N_+$, the new root
  $w_X\otimes a^{(x,y)}$ is connected to all nodes in
  \[
    \$^*
    \otimes
     \left(
       (\{1,\ldots,\ell\}
        \otimes w_X \otimes a^{(x,y)} \otimes (\otimes_{k+1}(a^+))
       )
       \cup \{b^{(e_1,e_2)}\mid e_1 \neq e_2\}
       \cup \{c^e\mid e\in X\}\right)\,.
  \]
\item For all $X\subseteq\N_+$, the new root $w_X\otimes\varepsilon$
  is connected to all nodes in 
  $$\$^* \otimes (\{b^{(e_1,e_2)}\mid
  e_1\neq e_2\}\cup\{c^e\mid e\in X\}).
  $$
\item For all $X\subseteq\N_+$ and $m\in \N_+$, the new root
  $w_X\otimes b^m$ is connected to all nodes in
  \[
   \$^* \otimes (\{b^{(e_1,e_2)}\mid e_1\neq e_2 \vee e_1=e_2 \geq m\}
\cup\{c^e\mid e\in X\}).
  \]
\end{itemize}
It is easily seen that $\D$ is an injectively $\omega$-automatic
dag. Let $\H''=\unfold(\D)$ which is also injectively
$\omega$-automatic by Lemma~\ref{lem:dag}. Now computations analogous
to those on page~\ref{computations-H''} (using Proposition~\ref{prop:h3.H'} instead of Proposition~\ref{prop:H'}) yield
for all $X\subseteq\N_+$ and $x,y,m\in \N_+$:
\begin{eqnarray*}
    \H''(w_X\otimes a^{(x,y)}) &
    \cong & \widehat T[X,x,y]\\
    \H''(w_X\otimes\varepsilon) &\cong & \widehat U[\omega,X] \\
    \H''(w_X\otimes b^m) &\cong & \widehat U[m,X]
\end{eqnarray*}
{}From $\H''=(L'',E'')$, we build an injectively $\omega$-automatic dag
$\D_0$ as follows:
\begin{itemize}
\item The domain of $\D_0$ equals $a^* \ \cup \ \$^*\otimes L''$.
\item For $u,v\in L''$, the words $\$^i\otimes u$ and $\$^j\otimes v$
  are connected by an edge if and only if $i=j$ and $(u,v)\in E''$. Hence the
  restriction of $\D_0$ to $\$^*\otimes L''$ is isomorphic to
  ${\H''}^{\aleph_0}$.
\item For $x\in \N_+$, connect the new root $a^x$ to all nodes in
  \[
  \$^* \otimes \left( \{0,1\}^\omega\otimes b^+ \;\cup\;\{0,1\}^\omega\otimes a^x\otimes a^+ \right).
  \]
\item Connect the new root $\varepsilon$ to all nodes in $\$^* \otimes
  \{0,1\}^\omega\otimes b^*$.
\end{itemize}
Then $\D_0$ is an injectively $\omega$-automatic dag of height 3 and
we set $\H_0=\unfold(\D_0)$.  The set of roots of $\H_0$ is
$a^*$. Calculations similar to those on page~\pageref{computations-H0}
then yield $\H_0(\varepsilon)\cong U$ and $\H_0(a^x)\cong T[x]$ for
$x\in\N_+$. Hence, $T[x]$
is (effectively) an injectively $\omega$-automatic tree. Now
Lemma~\ref{lem:h3.tree.construct} finishes the proof of the first
statement of Theorem~\ref{thm:h3}, the second follows immediately.

\begin{remark}
In our previous paper \cite{KuLiLo10}, we used an iterated application of a construction very similar to the one in this
section in order to prove that the isomorphism problem for {\em automatic trees}
of height $n \geq 2$ is hard (in fact complete) for level $\Pi^0_{2n-3}$ of the 
arithmetical hierarchy. This construction allows to handle a $\forall \exists$-quantifier block,
while increasing the height of the trees by only $1$. Unfortunately we cannot iterate
the construction of this section for $\omega$-automatic trees of height $n$ 
in order to prove a lower bound of the form $\Pi^1_{2n-5}$ for $n \geq 3$.
On the technical level, its Lemma~3.2 from \cite{KuLiLo10}, which does not hold for 
second-order formulae.
\end{remark}

\section{Upper bounds assuming CH}

We denote with {\bf CH} the continuum hypothesis: Every infinite subset
of $2^{\N}$ has either cardinality $\aleph_0$ or cardinality
$2^{\aleph_0}$. By seminal work of Cohen and G\"odel,
{\bf CH} is independent of the axiom system {\bf ZFC}.

In the following, we will identify an $\omega$-word 
$w \in \Gamma^\omega$ with the function $w : \N_+ \to \Gamma$,
(and hence with a second-order object) where $w(i) = w[i]$. 
We need the following lemma:

\begin{lemma}
  From a given B\"uchi automaton $M$ over an alphabet $\Gamma$ one can
  construct an arithmetical predicate $\acc_M(u)$ (where $u : \N_+ \to
  \Gamma$) such that: $u \in L(M)$ if and only if $\acc_M(u)$ holds.
\end{lemma}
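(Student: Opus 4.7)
The plan is to reduce the natural $\Sigma^1_1$-description of $u\in L(M)$ (``there exists a run $r\in\Delta^\omega$ with $q_1\in I$ and infinitely many states in $F$...'') to an arithmetical one by eliminating the function quantifier over the run. The standard tool for this is determinization: by Safra's construction (see e.g.~\cite{GrThWi02,PerP04}), one can effectively compute from $M$ an equivalent deterministic Muller automaton $M'=(Q',\Gamma,\delta,q_0',\mathcal{F})$ with $L(M')=L(M)$ and acceptance family $\mathcal{F}\subseteq 2^{Q'}$.

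On a deterministic automaton, the unique run $r_u$ on $u:\N_+\to\Gamma$ is given by $r_u(0)=q_0'$ and $r_u(n+1)=\delta(r_u(n),u(n+1))$. Using a G\"odel coding of finite sequences of states in $Q'$, the predicate ``$r_u(n)=q$'' is expressible as a $\Sigma^0_1$-formula $\rho_q(n,u)$: ``there is a code $\sigma$ of a sequence $(s_0,\dots,s_n)$ with $s_0=q_0'$, $s_n=q$, and $s_{i+1}=\delta(s_i,u(i+1))$ for all $i<n$''; the inner conditions involve only bounded quantification over $\sigma$ together with calls to the finite transition table and to $u$. Acceptance of $u$ by $M'$ is then $\mathrm{Inf}(r_u)\in\mathcal{F}$, where $\mathrm{Inf}(r_u)=\{q\in Q'\mid q\text{ occurs infinitely often in }r_u\}$. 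For each $q\in Q'$, the statement ``$q\in\mathrm{Inf}(r_u)$'' unfolds to $\forall n\,\exists m\ge n:\rho_q(m,u)$, which is $\Pi^0_2$ in $u$. Since $Q'$ and $\mathcal{F}$ are finite, the condition $\mathrm{Inf}(r_u)\in\mathcal{F}$ is a finite Boolean combination of $\Pi^0_2$-formulas in $u$, hence arithmetical (in $\Delta^0_3$). We take $\acc_M(u)$ to be this formula; by inspection, the construction is uniform in $M$.

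The only substantive ingredient is Safra-style determinization; once a deterministic Muller automaton is in hand, the remaining work is routine because Boolean combinations of ``state $q$ occurs infinitely often'' stay inside the arithmetical hierarchy. An alternative route avoiding Safra is the classical theorem of B\"uchi and Landweber that every $\omega$-regular language is a finite Boolean combination of sets of the form $\lim(L)=\{u\mid u\text{ has infinitely many prefixes in }L\}$ for a regular $L\subseteq\Gamma^*$; membership of a finite word in $L$ is decidable and ``infinitely often'' is $\Pi^0_2$, so this also yields $\acc_M(u)$ directly as a Boolean combination of $\Pi^0_2$-formulas. Either way, the main (and only nontrivial) obstacle is the effectivity of the Boolean/Muller decomposition of $L(M)$; after that, arithmeticity is immediate from the form of the resulting acceptance condition.
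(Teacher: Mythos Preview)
Your proposal is correct and follows essentially the same approach as the paper: determinize $M$ into a deterministic Muller automaton, observe that the unique run is computable in $u$, and express the Muller acceptance condition as a finite Boolean combination of ``state $q$ occurs infinitely often'' predicates, which lands in the arithmetical hierarchy. The paper writes out the resulting formula slightly differently (as a single $\Sigma^0_3$ sentence over the disjunction $\bigvee_{A\in\F}$), and does not mention your alternative B\"uchi--Landweber route via Boolean combinations of $\lim(L)$, but the substance is the same.
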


\begin{proof}
Recall that a {\em Muller automaton} is a tuple $M=(Q, \Gamma,
\Delta, I, \F)$, where $Q$, $\Gamma$, $\Delta$, and $I$ 
have the same meaning as for B\"uchi automata but 
$\F \subseteq 2^Q$. The language $L(M)$ accepted by $M$ 
is the set of all $\omega$-words $u \in \Gamma^\omega$ for 
which there exists a run $(q_1, u[1], q_2) (q_2, u[2], q_3) \cdots$
($q_1 \in I)$ such that $\{ q \in Q \mid \exists^{\aleph_0} i :
q=q_i\} \in \F$. The Muller automaton $M$ is {\em deterministic and complete}, if 
$|I|=1$ and for all $q \in Q, a \in \Gamma$ there exists a unique $p
\in Q$ such that $(q,a,p) \in \Delta$. 

It is well known that from the given B\"uchi automaton $M$ one can
effectively construct a {\em deterministic and complete} Muller
automaton $M'=(Q, \Gamma,\Delta, \{q_0\}, \F)$ such that $L(M)=L(M')$,
see e.g.\ \cite{PerP04,Tho97handbook}.  For a given $\omega$-word $u :
\N_+ \to \Gamma$ and $i \in \N$ let $q(u,i) \in Q$ be the unique state
that is reached by $M'$ after reading the length-$i$ prefix of
$u$. Note that $q(u,i)$ is computable from $i$ (if $u$ is given as an
oracle), hence $q(u,i)$ is arithmetically definable. Now, the formula
$\acc_M(u)$ can be defined as follows:
$$
\bigvee_{A \in \F} \exists x \in \N_+ \forall y \geq x 
\bigwedge_{p\in A} \big(  q(u,y) \in A \ \wedge \ \exists z \geq y : q(u,z) = p \big)
$$
\qed  
\end{proof}

\begin{theorem} \label{thm:upper-bound}
Assuming {\bf CH}, the isomorphism problem $\Iso(\T_n)$ belongs to
$\Pi^1_{2n-4}$ for $n \geq 3$.
\end{theorem}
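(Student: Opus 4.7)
The plan is to define, by induction on $h\ge 1$, a relation $\iso_h(P_1,P_2,x_1,x_2)$ asserting that the subtrees of $\S(P_1)$ and $\S(P_2)$ rooted at $[x_1]$ and $[x_2]$, both assumed of height at most $h$, are isomorphic, and to bound its analytical complexity by $\Pi^1_{2h-4}$ for $h\ge3$ (with the convention that $\Pi^1_0$ denotes the arithmetical sets). Once this is done, $\Iso(\T_n)$ is obtained by applying $\iso_n$ to (representatives of) the two roots, which are first-order definable and hence cause no increase in complexity. The recursion I would use is the standard cardinality-preserving one: $\iso_{h+1}(P_1,P_2,x_1,x_2)$ holds iff for every child $u$ of $[x_1]$ the cardinality of $\{u'\in E(x_1)\mid \iso_h(P_1,P_1,u',u)\}$ equals that of $\{v'\in E(x_2)\mid \iso_h(P_2,P_1,v',u)\}$, and symmetrically for every child of $[x_2]$.

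The base case $h=2$ is precisely the content of Theorem~\ref{T-height-2}: cardinalities of the children-sets live in $\N\cup\{\aleph_0,2^{\aleph_0}\}$ by Theorem~\ref{thm:counting}, and comparing them is a decidable $\Pi^0_1$-matter by Theorem~\ref{thm:extFOaut}. For the inductive step I invoke \textbf{CH} to guarantee that \emph{every} subset of $\Gamma^\omega$ has cardinality in $\N\cup\{\aleph_0,2^{\aleph_0}\}$, so that equality of two such cardinalities can be expressed by a Boolean combination of the three assertions ``$|S|=k$'', ``$|S|=\aleph_0$'' and ``$|S|=2^{\aleph_0}$''. Granted that $\iso_h$ is a $\Pi^1_m$-predicate with $m=2h-4$, the finitary and the $\aleph_0$ cases are $B\Sigma^1_{m+1}$-statements obtained by coding finitely many $\omega$-word witnesses into one function; the decisive case is ``$|S|=2^{\aleph_0}$'', which unfolds as ``there exists an injection $f\colon\{0,1\}^\omega\to\Gamma^\omega$ with $f(w)\in S$ for every $w$'' and thus adds one leading existential function quantifier in front of $\Pi^1_m$, landing in $\Sigma^1_{m+1}$.

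Matching cardinalities between the two sets therefore produces a formula in $B\Sigma^1_{m+1}\subseteq\Delta^1_{m+2}$. The outer universal quantifier over children $u$ of $[x_1]$ (and $[x_2]$) is a function quantifier since $u$ ranges over $\omega$-words, but after rewriting the inner formula in $\Pi^1_{m+2}$-form its leading $\forall$ absorbs the new outer $\forall u$. Combined over both trees, $\iso_{h+1}$ lies in $\Pi^1_{m+2}=\Pi^1_{2(h+1)-4}$, closing the induction; specialising to $h=n$ then yields $\Iso(\T_n)\in\Pi^1_{2n-4}$ for $n\ge 3$.

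The main technical obstacle I expect is the bookkeeping for the cardinality assertions when $\iso_h$ is itself a high-level analytical predicate: one must verify that coding several function-valued witnesses into one (to handle ``$\exists^{\ge k}$'' and ``infinite'') really can be carried out inside a $\Sigma^1_{m+1}$-formula, and that moving first-order number quantifiers across function quantifiers preserves the analytical level. Secondary obstacles are that all constructions must be uniform in the presentations $P_1,P_2$ and that $\iso_h$ must respect the congruences $R(M_\equiv^i)$, which follows from the uniformity of Theorem~\ref{thm:extFOaut}. Note that \textbf{CH} is used only at the very start of the inductive step, to constrain the cardinalities of arbitrary subsets of $\Gamma^\omega$; without it, even formulating the cardinality match becomes problematic.
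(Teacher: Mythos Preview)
Your inductive scheme and the base case match the paper's; the gap is in how you handle the cardinality $2^{\aleph_0}$ in the inductive step. You write that ``$|S|=2^{\aleph_0}$'' unfolds as ``there exists an injection $f\colon\{0,1\}^\omega\to\Gamma^\omega$ with $f(w)\in S$ for every $w$'' and that this ``adds one leading existential function quantifier''. But an arbitrary map $f\colon\{0,1\}^\omega\to\Gamma^\omega$ has uncountable domain and is a genuinely \emph{third}-order object; quantifying over it is not a function quantifier in the sense of the analytical hierarchy, which only permits quantification over maps $\N\to\N$. If instead you restrict to \emph{continuous} injections (which are codable by a single function on $\N$), you are tacitly invoking the perfect-set property for the class $S$; that property does not follow from {\bf CH} for $\Pi^1_m$ sets with $m\ge1$ (it fails, for instance, under $V=L$). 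And the honest second-order rendering of ``$S$ is uncountable'', namely $\forall g\colon\N\to\Gamma^\omega\ \exists s\in S\ \forall n:\ g[n]\neq s$, costs a full $\forall\exists$ block on top of the $\Pi^1_m$ predicate $s\in S$ and lands in $\Pi^1_{m+2}$, not $\Sigma^1_{m+1}$; fed into your induction this would add at least three levels per height step and overshoot $\Pi^1_{2n-4}$.

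The paper avoids ever asserting ``$|S|=2^{\aleph_0}$''. It quantifies universally only over \emph{countable} subsets $X_1\subseteq E(u_1)$ and $X_2\subseteq E(u_2)$, each encoded as a single function $\N\to\Gamma^\omega$ and hence a single second-order object, and requires that whenever $X_1\cup X_2$ is a maximal $\iso_{k+1}$-class among the children of $u_1,u_2$, one has $|X_1|=|X_2|$. Under {\bf CH} an isomorphism class of size $2^{\aleph_0}$ is never exhausted by a countable $X_1\cup X_2$, so the maximality hypothesis is vacuously violated and nothing needs to be checked for such classes; only the countable classes are genuinely compared, and for those $|X_1|=|X_2|$ is arithmetical in the enumerations. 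The inner existential witness in the maximality clause (a single child $y$ outside $X_1\cup X_2$) is one second-order $\exists$ over a $\Pi^1_m$ formula, and the outer $\forall X_1\,\forall X_2$ contributes one second-order $\forall$, giving exactly the step from $\Pi^1_m$ to $\Pi^1_{m+2}$. Reorganising your inductive step along these lines---quantify over countable enumerations rather than attempting to name $2^{\aleph_0}$ explicitly---is what is needed to make the argument go through.
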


\begin{proof}
  Consider trees $T_i = \S(P_i)$ for $P_1,P_2 \in \T_n$.  Define the
  forest $F=(V,E)$ as $F = T_1 \uplus T_2$ For $v \in V$ let $E(v) =
  \{ w \in V : (v,w) \in E\}$ be the set of children of $v$.  Let us
  fix an $\omega$-automatic presentation $P = (\Sigma, M, M_\equiv,
  M_E)$ for $F$.  Here, $M_E$ recognizes the edge relation $E$ of $F$.
  In the following, for $u \in L(M)$ we write $F(u)$ for the subtree
  $F([u]_{R(M_\equiv)})$ rooted in the $F$-node $[u]_{R(M_\equiv)}$
  represented by the $\omega$-word $u$. Similarly, we write $E(u)$ for
  $E([u]_{R(M_\equiv)})$.  We will define a
  $\Pi^1_{2n-2k-4}$-predicate $\mathsf{iso}_k(u_1,u_2)$, where
  $u_1,u_2 \in L(M)$ are on level $k$ in $F$. This predicate expresses
  that $F(u_1) \cong F(u_2)$.

  As induction base, let $k=n-2$. Then the trees $F(u_1)$ and $F(u_2)$
  have height at most $2$. Then, as in the proof of
  Theorem~\ref{T-height-2}, we have $F(u_1) \cong F(u_2)$ if and only
  if the following holds for all
  $\kappa,\lambda\in\N\cup\{\aleph_0,2^{\aleph_0}\}$:
  \begin{align*}
    F \ \models \ & \bigg( \exists^\kappa x \in V : (([u_1], x) \in E \wedge
    \exists^\lambda y \in V : (x,y) \in E) \bigg) \ \leftrightarrow \\
    & \bigg( \exists^\kappa x \in V : (([u_2], x) \in E \wedge
    \exists^\lambda y \in V : (x,y) \in E ) \bigg)\,.
  \end{align*}
  Note that by Theorem~\ref{thm:extFOaut}, one can compute from 
  $\kappa,\lambda\in\N\cup\{\aleph_0,2^{\aleph_0}\}$ a
  B\"uchi automaton $M_{\kappa,\lambda}$
  accepting the set of convolutions of pairs of $\omega$-words
  $(u_1,u_2)$ satisfying the above formula. Hence
  $F(u_1)\cong F(u_2)$ if and only if the following arithmetical
  predicate holds:
  \[
    \forall \kappa,\lambda\in\N\cup\{\aleph_0,2^{\aleph_0}\} :\acc_{M_{\kappa,\lambda}}(u_1,u_2)\,.
  \]
Now let $0 \leq k < n-2$. We first introduce a few notations.
For a set $A$, let $\mathsf{count}(A)$ denote the set of all
countable (possibly finite) subsets of $A$.
For $\kappa  \in \N \cup \{\aleph_0\}$ we denote
with $[\kappa]$ the set $\{0,\ldots,\kappa-1\}$ (resp. $\N$)
in case $\kappa \in \N$ ($\kappa = \aleph_0$).
For a function $f : (A \times B) \to C$ and $a \in A$ let $f[a] : B \to C$
denote the function with $f[a](b) = f(a,b)$.

On an abstract level, the formula $\mathsf{iso}_{k}(u_1,u_2)$ is
 \begin{align}
   &\big(\forall x\in E(u_1)\;\exists y\in E(u_2):\mathsf{iso}_{k+1}(x,y)\big)
  \ \land    \label{eqX_1X_2(d)} \\
  & \big(\forall x\in E(u_2) \;\exists y\in
  E(u_1):\mathsf{iso}_{k+1}(x,y)\big)  \label{eqX_1X_2(e)} \ \land\\
  & \forall X_1 \in \mathsf{count}(E(u_1)) \,  \forall X_2 \in \mathsf{count}(E(u_2)) :  \label{eqX_1X_2} \\
  & \qquad\qquad  \exists x, y \in X_1 \cup X_2 : \neg
      \mathsf{iso}_{k+1}(x,y)  \; \vee  \label{eqX_1X_2(a)} \\
  & \qquad\qquad \exists x \in X_1 \cup X_2\; \exists y \in (E(u_1) \cup
      E(u_2)) \setminus (X_1 \cup X_2) : \mathsf{iso}_{k+1}(x,y) \; \vee  \label{eqX_1X_2(b)} \\
  &  \qquad\qquad |X_1| = |X_2|\,.  \label{eqX_1X_2(c)}
  \end{align}
Line \eqref{eqX_1X_2(d)} and \eqref{eqX_1X_2(e)} 
express that the children of $u_1$ and $u_2$ realize
the same isomorphism types of trees of height $n-k-1$. The rest of
the formula expresses that if a certain isomorphism type $\tau$
of height-$(n-k-1)$ trees appears countably many times below $u_1$ then
it appears with the same multiplicity below $u_2$ and vice versa.
Assuming {\bf CH} and the correctness of $\mathsf{iso}_{k+1}$, 
the formula $\mathsf{iso}_{k}(u_1,u_2)$ expresses indeed that
$F(u_1) \cong F(u_2)$.

In the above definition of $\mathsf{iso}_k(u_1,u_2)$
we actually have to fill in some details.
The countable set $X_i \in \mathsf{count}(E(u_i)) \subseteq 2^V$ of
children of $[u_i]_{R(M_\equiv)}$
(which is universally quantified in \eqref{eqX_1X_2})
can be represented as a function
$f_i : [|X_i|] \times \N \to \Sigma$ such that
the following holds:
$$
\forall j \in [|X_i|] : \acc_{M_E}(u_i\otimes f_i[j])
   \; \wedge \;
\forall j,l \in [|X_i|] : j = l  \vee \neg\acc_{M_\equiv}(f_i[j] \otimes f_i[l]) .
$$
Hence, $\forall X_i \in \mathsf{count}(E(u_i)) \cdots$ in
\eqref{eqX_1X_2} can be replaced by:
\begin{align*}
& \forall \kappa_i \in \N \cup \{\aleph_0\} \; \forall f_i :
[\kappa_i]\times \N \to \Sigma :  \\
  & \qquad\qquad (\exists j \in [\kappa_i] :   \neg\acc_{M_E}(u_i\otimes f_i[j])) \; \vee \\ 
  & \qquad\qquad (\exists j,l \in [\kappa_i] :
   j \neq l \wedge \acc_{M_\equiv}(f_i[j]\otimes f_i[l])) \; \vee \cdots.
\end{align*}
Next, the formula $\exists x, y \in X_1 \cup X_2 : \neg \mathsf{iso}_{k+1}(x,y)$
in \eqref{eqX_1X_2(a)} can be replaced by:
$$
\bigvee_{i \in \{1,2\}}  \exists j,l \in [\kappa_i] : \neg \mathsf{iso}_{k+1}(f_i[j], f_i[l]) \;\vee\;
   \exists j \in [\kappa_1] \, \exists l \in [\kappa_2] :
   \neg\mathsf{iso}_{k+1}(f_1[j], f_2[l]) .
$$
Similarly, the formula $\exists x \in X_1 \cup X_2\; \exists y \in (E(u_1) \cup
    E(u_2)) \setminus (X_1 \cup X_2) : \mathsf{iso}_{k+1}(x,y)$
in \eqref{eqX_1X_2(b)} can be replaced by
\begin{align*}
 \bigvee_{i \in \{1,2\}}  \exists j \in [\kappa_i]\; \exists v :
 \N\to\Sigma : \; &
  \mathsf{iso}_{k+1}(f_i[j],v) \ \wedge \\[-4mm]
  & (\acc_{M_E}(u_1\otimes v)
  \vee \acc_{M_E}(u_2 \otimes v) ) \ \wedge \\
& \forall l \in [\kappa_1] : \neg \acc_{M_\equiv}(f_1[l] \otimes v) \ \wedge \\
 & \forall l \in [\kappa_2] :  \neg \acc_{M_\equiv}(f_2[l] \otimes v) \ .
\end{align*}
Note that in line \eqref{eqX_1X_2(d)} and \eqref{eqX_1X_2(e)} 
we introduce a new $\forall\exists$ second-order block of quantifiers.
The same holds for the rest of the formula: We introduce two universal set quantifiers in
\eqref{eqX_1X_2} followed by the existential quantifier $\exists v :
\N\to\Sigma$ in the above formula.  Since by induction,
$\text{iso}_{k+1}$ is a $\Pi^1_{2n-2(k+1)-4}$-statement, it follows
that $\text{iso}_{k}(u_1,u_2)$ is a $\Pi^1_{2n-2k-4}$-statement.  \qed
\end{proof}
Corollary~\ref{C-2nd-order-arithmetic} and \ref{thm:upper-bound}
imply:

\begin{corollary}
  Assuming {\bf CH}, the isomorphism problem for (injectively)
  $\omega$-automatic trees of finite height is recursively equivalent
  to the second-order theory of $(\N; +, \times)$.
\end{corollary}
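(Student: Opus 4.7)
The plan is to establish the recursive equivalence by combining the two directions already essentially in place. The forward direction (second-order arithmetic reduces to the isomorphism problem) is precisely the content of Corollary~\ref{C-2nd-order-arithmetic}, so no additional work is required there: a $\Sigma^1_n$ or $\Pi^1_n$ formula $\varphi$ over $(\N,+,\times)$ can be translated, uniformly in $n$ and in $\varphi$, into an instance of $\Iso(\bigcup_{k\in \N_+} \Ti_k)$, and this reduction is recursive.

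For the converse, I would argue as follows. Fix an $\omega$-automatic presentation $P=(\Gamma, M, M_\equiv, (M_S)_{S\in\tau})$. The set $\T_n$ of presentations of trees of height at most $n$ is decidable, so on input $(P_1,P_2)$ one can first compute the minimal $n$ such that $P_1,P_2\in \T_n$ (or reject if no such $n$ exists, or handle it by outputting a fixed ``non-isomorphic'' answer). Assuming {\bf CH}, Theorem~\ref{thm:upper-bound} supplies, uniformly in $n$, a $\Pi^1_{2n-4}$-predicate $\Phi_n(P_1,P_2)$ equivalent to $\S(P_1)\cong \S(P_2)$. Since every analytical predicate is expressible in the second-order theory of $(\N,+,\times)$ (the analytical hierarchy is precisely stratified by the quantifier alternation depth of such formulas, see Section~\ref{sec:analytical.hierarchy}), one obtains, uniformly in $n$, a second-order arithmetic formula $\Psi_n$ such that $(P_1,P_2)\in\Iso(\T_n)$ iff $(\N,+,\times)\models \Psi_n(\ulcorner P_1\urcorner,\ulcorner P_2\urcorner)$, where $\ulcorner\cdot\urcorner$ denotes the G\"odel number.

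Hence the many-one reduction $(P_1,P_2)\mapsto \Psi_n(\ulcorner P_1\urcorner,\ulcorner P_2\urcorner)$, with $n$ computed as above, is a recursive reduction from $\Iso(\bigcup_{k\in\N_+}\T_k)$ (and in particular from $\Iso(\bigcup_{k\in\N_+}\Ti_k)$) to the second-order theory of $(\N,+,\times)$. Combined with Corollary~\ref{C-2nd-order-arithmetic}, this yields the claimed recursive equivalence in both the injective and the general case. I do not anticipate any serious obstacle: the only subtle point is to make sure that the sequence of formulas $\Phi_n$ from the proof of Theorem~\ref{thm:upper-bound} is obtained uniformly in $n$, which is clear from the inductive construction of $\mathsf{iso}_k$ there.
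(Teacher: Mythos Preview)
Your argument is correct and matches the paper's approach exactly: the paper simply records that the corollary follows from Corollary~\ref{C-2nd-order-arithmetic} (one direction) together with Theorem~\ref{thm:upper-bound} (the other), and you have spelled out precisely how these combine, including the uniformity of the construction of $\mathsf{iso}_k$ in~$n$. The only point to tighten is the parenthetical ``reject if no such $n$ exists'': membership in $\bigcup_n\T_n$ is in general only $\Sigma^0_1$, so the search for $n$ need not terminate on arbitrary inputs---but since ``recursively equivalent'' means Turing equivalence, this is harmless: one first asks the oracle the arithmetical question whether some such $n$ exists, and only then searches.
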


\begin{remark}
For the case $n=3$ we can avoid the use of {\bf CH}
in Theorem~\ref{thm:upper-bound}: Let us consider
the proof of  Theorem~\ref{thm:upper-bound} for $n = 3$.
Then, the binary relation
$\mathsf{iso}_1$ (which holds between two $\omega$-words $u,v$
in $F$ if and only if $[u]$ and $[v]$ are on level 1 and $F(u) \cong
F(v)$) is a $\Pi^0_1$-predicate. It follows that this relation is
Borel (see e.g. \cite{Kech95} for background on Borel sets).
Now let $u$ be an $\omega$-word on level $1$ in $F$.
It follows that the set of all $\omega$-words
$v$ on level 1 with $\mathsf{iso}_1(u,v)$ is again
Borel. Now, every uncountable Borel set has cardinality $2^{\aleph_0}$
(this holds even for analytic sets \cite{Kech95}). It follows that
the definition of $\mathsf{iso}_0$ in the proof of
Theorem~\ref{thm:upper-bound} is correct even without assuming {\bf CH}.
Hence, $\Iso(\T_3)$ belongs to
$\Pi^1_2$ (recall that we proved $\Pi^1_1$-hardness for this problem
in Section~\ref{sec:height3}),
this can be shown in {\bf ZFC}.
\end{remark}

\section{Open problems}

The main open problem concerns upper bounds in case we assume the
negation of the continuum hypothesis.  Assuming $\neg \mathbf{CH}$, is
the isomorphism problem for (injectively) $\omega$-automatic trees of
height $n$ still analytical? In our paper \cite{KuLiLo10} we also
proved that the isomorphism problem for automatic linear orders is not
arithmetical. This leads to the question whether our techniques for
$\omega$-automatic trees can be also used for proving lower bounds on
the isomorphism problem for $\omega$-automatic linear orders. More
specifically, one might ask whether the isomorphism problem for
$\omega$-automatic linear orders is analytical.
A more general question asks for the complexity of the isomorphism
problem for $\omega$-automatic structures in general. On the face of
it, it is an existential third-order property (since any isomorphism
has to map second-order objects to second-order objects). But it is
not clear whether it is complete for this class.

\iffalse
 \bibliographystyle{alpha}
 \bibliography{bib,neu}
\fi
\def\cprime{$'$}

\end{document}